\pgfplotsset{compat=newest}
\newtheorem{theorem}{Theorem}
\newtheorem{lemma}[theorem]{Lemma}
\newtheorem{corollary}[theorem]{Corollary}
\newtheorem{proposition}[theorem]{Proposition}
\newtheorem{problem}[theorem]{Problem}
\newtheorem{example}[theorem]{Example}
\newtheorem{remark}[theorem]{Remark}
\newtheorem*{expectation}[theorem]{Expectation}
\newtheorem{definition}{Definition}
\algrenewcommand\alglinenumber[1]{{\scriptsize#1}}
\algrenewcommand\algorithmicrequire{\textbf{Input:}}
\algrenewcommand\algorithmicensure{\textbf{Output:}}
\def\mkfancyprefix#1#2{%
	\expandafter\def\csname fancyref#1labelprefix\endcsname{#1}%
	\begingroup\def\x{\endgroup\frefformat{plain}}%
	\expandafter\x\csname fancyref#1labelprefix\endcsname
	{\MakeLowercase{#2}\fancyrefdefaultspacing##1}%
	\begingroup\def\x{\endgroup\Frefformat{plain}}%
	\expandafter\x\csname fancyref#1labelprefix\endcsname
	{#2\fancyrefdefaultspacing##1}%
	\begingroup\def\x{\endgroup\frefformat{vario}}%
	\expandafter\x\csname fancyref#1labelprefix\endcsname
	{\MakeLowercase{#2}\fancyrefdefaultspacing##1##3}%
	\begingroup\def\x{\endgroup\Frefformat{vario}}%
	\expandafter\x\csname fancyref#1labelprefix\endcsname
	{#2\fancyrefdefaultspacing##1##3}%
}
\fancyrefchangeprefix{\fancyrefeqlabelprefix}{eqn}
\newcommand{\cref}[1]{\Fref{#1}}
\def\ve#1{{\mathchoice{\mbox{\boldmath$\displaystyle #1$}}%
		{\mbox{\boldmath$\textstyle #1$}}%
		{\mbox{\boldmath$\scriptstyle #1$}}%
		{\mbox{\boldmath$\scriptscriptstyle #1$}}}}
\definecolor{sunset}{rgb}{1,0.5,.05}
\definecolor{marine}{rgb}{0,0,.7}
\definecolor{navy}{rgb}{0,0,.5}
\definecolor{forest}{rgb}{0,.6,0}
\definecolor{brown}{rgb}{0.59, 0.29, 0.0}
\newcommand{\mo}{{-1}}
\newcommand{\F}{\mathbb{F}}
\newcommand{\ZZ}{\mathbb{Z}}
\newcommand{\M}{{\cal{M}}}
\newcommand{\m}{\ensuremath{\ve{m}}}
\renewcommand{\v}{\ensuremath{\ve{v}}}
\renewcommand{\vec}[1]{\ensuremath{\ve{#1}}}
\newcommand{\NN}{\mathbb{N}}
\DeclareMathOperator{\rank}{rank}
\DeclareMathOperator{\diag}{diag}
\renewcommand{\c}{\ve{c}}
\newcommand{\e}{\ve{e}}
\renewcommand{\r}{\ve{r}}
\newcommand{\Code}{\mathcal{C}}
\newcommand{\wtH}{\mathrm{wt_H}}
\newcommand{\dH}{\mathrm{d_H}}
\newcommand{\Fq}{\mathbb{F}_q}
\newcommand{\numTwists}{\ell}
\newcommand{\tVec}{{\ve t}}
\newcommand{\hVec}{{\ve h}}
\newcommand{\etaVec}{{\ve \eta}}
\newcommand{\alphaVec}{{\ve \alpha}}
\newcommand{\evpolys}{\mathcal{P}^{n,k}_{\tVec,\hVec,\etaVec}}
\newcommand{\twisted}{$[k,\tVec,\hVec,\etaVec]$-twisted }
\newcommand{\twistedC}{$[\alphaVec,\tVec,\hVec,\etaVec]$-twisted }
\DeclareMathOperator{\evOp}{ev}
\newcommand{\ev}[2]{\evOp_{#2}\!\left(#1\right)}
\newcommand{\TRS}[2]{\Code_{#1}^{#2}}
\newcommand{\Cmult}{{\Code_{\alphaVec,\tVec,\hVec,\etaVec}^{n,k}}}
\newcommand{\Iset}{\mathcal{I}}
\newcommand{\Bmat}{\ve B}
\newcommand{\Gmat}{\ve G}
\newcommand{\Hmat}{\ve H}
\newcommand{\Lmat}{\ve L}
\newcommand{\J}{{\ve J}}
\newcommand{\Van}{\vec V}
\newcommand{\I}{\ve I}
\newcommand{\A}{\ve A}
\newcommand{\B}{\ve B}
\newcommand{\Amat}{\ve A}
\newcommand{\Dmat}{\ve D}
\newcommand{\x}{\ve x}
\newcommand{\mymat}[1]{}
\definecolor{etacolor1}{rgb}{0,0,0.7}
\definecolor{etacolor2}{rgb}{0,0.7,0}
\definecolor{etacolor3}{rgb}{0.63922,0.14902,0.21961}
\newcommand{\dual}{\bot}
\newcommand{\etaSet}{\mathcal{H}}
\newcommand{\pij}[1]{p^{(#1)}}
\newcommand{\AetaV}{A^{(\etaVec)}}
\newcommand{\AetaM}{\A^{(\etaVec)}}
\newcommand{\Atilde}{\ve{A}'}
\newcommand{\DsetText}[1]{\mathrm{D}(#1)}
\newcommand{\Dsetsmallern}[1]{\mathrm{D}\!\left(#1\right)_{<n}}
\newcommand{\DsetsmallernText}[1]{\mathrm{D}(#1)_{<n}}
\newcommand{\Dbarset}[1]{\overline{\mathrm{D}}\!\left(#1\right)}
\newcommand{\Pset}{\mathcal{P}}
\newcommand{\startw}{$(*)$-twisted }
\newcommand{\plustw}{$(+)$-twisted }
\renewcommand{\l}{{\ve l}}
\newcommand{\0}{\ve{0}}
\newcommand{\Fqsmall}{\mathbb{F}_{q_0}}
\DeclareMathOperator{\mymod}{mod}
\newcommand{\rem}{\revision{\mymod}}
\renewcommand{\i}{\ve{i}}
\renewcommand{\j}{\ve{j}}
\renewcommand{\e}{\ve{e}}
\newcommand{\deltamu}{\ve{\delta}_\mu}
\newcommand{\Eset}{\mathcal{E}}
\DeclareMathOperator{\supp}{supp}
\newcommand{\decParam}{\zeta}
\newcommand{\ModuleBasis}{\ve{M}}
\newcommand{\Module}{\mathcal{M}}
\newcommand{\s}{\ve{s}}
\newcommand{\Rmat}{{\ve R}}
\newcommand{\tauLB}{\tau_\mathsf{LB}}
\definecolor{mytablecellcolor}{rgb}{0.85,0.85,0.85}
\newcommand{\colorthiscell}{\cellcolor{mytablecellcolor}}
\newcommand{\PfTwoBelow}{\mathrm{P}_{\tau_\mathsf{max}-1}^\mathsf{max}}
\newcommand{\PfOneBelow}{\mathrm{P}_{\tau_\mathsf{max}}^\mathsf{max}}
\newcommand{\PfAbove}{\mathrm{P}_{\tau_\mathsf{max}+1}^\mathsf{min}}
\newcommand{\mysubsubsection}[1]{\subsubsection{#1}}
\newcommand{\taumax}{\tau_\mathsf{max}}
\definecolor{darkgreen}{rgb}{0,0.7,0}
\definecolor{darkred}{rgb}{0.7,0,0}
\newcommand{\revision}[1]{{\color{blue}{#1}}}
\begin{document}

\title{Twisted Reed--Solomon Codes}
\author{\IEEEauthorblockN{Peter Beelen, Sven Puchinger,~\IEEEmembership{Member,~IEEE}, and Johan Rosenkilde}\\
\thanks{Parts of this work have been presented at the IEEE International Symposium on Information Theory (ISIT) 2017 \cite{beelen2017twisted} and 2018 \cite{beelen2018structural}.}
\thanks{P.~Beelen is with the Department of Applied Mathematics and Computer Science, Technical University of Denmark, 2800 Kongens Lyngby, Denmark (e-mail: pabe@dtu.dk).}
\thanks{S.~Puchinger is with Hensoldt Sensors GmbH, 89077 Ulm, Germany (e-mail: mail@svenpuchinger.de). This work was done while he was with the Department of Applied Mathematics and Computer Science, Technical University of Denmark (DTU), 2800 Kongens Lyngby, Denmark, and the Department of Electrical and Computer Engineering, Technical University of Munich, 80333 Munich, Germany.}
\thanks{J.~Rosenkilde is with GitHub Denmark Aps, 2100 Copenhagen, Denmark (email: jsrn@jsrn.dk). Most of this work was done while he was with the Department of Applied Mathematics and Computer Science, Technical University of Denmark (DTU), 2800 Kongens Lyngby, Denmark.}
\thanks{The first and third author would like to acknowledge the support from The Danish Council for Independent Research (DFF-FNU) for the project \emph{Correcting on a Curve}, Grant No.~8021-00030B. The second author was supported by the European Union's Horizon 2020 research and innovation program under the Marie Sklodowska-Curie grant agreement no.~713683 and by the European Research Council (ERC) under the European Union’s Horizon 2020 research and innovation programme (grant agreement no.~801434).}
}

\maketitle

\begin{abstract}
In this article, we present a new construction of evaluation codes in the Hamming metric, which we call \emph{twisted Reed--Solomon codes}. Whereas Reed--Solomon (RS) codes are MDS codes, this need not be the case for twisted RS codes. Nonetheless, we show that our construction yields several families of MDS codes. 
Further, for a large subclass of (MDS) twisted RS codes, we show that the new codes are not generalized RS codes.
To achieve this, we use properties of Schur squares of codes as well as an explicit description of the dual of a large subclass of our codes. We conclude the paper with a description of a decoder, that performs very well in practice as shown by extensive simulation results.
\end{abstract}

\begin{IEEEkeywords}
MDS Codes, Reed-Solomon Codes, Evaluation Codes, Decoding, Code Equivalence, Dual Codes
\end{IEEEkeywords}

\section{Introduction}

Maximum distance separable (MDS) codes are error-correcting codes with a particularly large minimum distance. More precisely, they are linear $[n,k,d]$ codes over a finite field $\mathbb{F}_q$ where $d=n-k+1$, i.e., meeting the Singleton bound. The well known family of generalized Reed--Solomon (GRS) codes are MDS codes, thus giving examples of MDS codes of length up to $q+1$.
Other known MDS codes have been constructed from $n$-arcs in projective geometry \cite{macwilliams1977theory}, circulant matrices \cite{roth1989mds}, or Hankel matrices \cite{roth1989mds}.
In this paper, we consolidate and extend the study of twisted Reed--Solomon (twisted RS) codes initiated in the conference papers \cite{beelen2017twisted,beelen2018structural}.
This new code family is inspired by Sheekey's twisted Gabidulin codes \cite{sheekey2015new}, a class of rank-metric codes.
The class of twisted RS codes contains several subfamilies of long MDS codes.

More precisely, after giving some needed preliminaries in the second section, we introduce the class of twisted RS codes in Section~\ref{sec:twisted_rs_codes}. After this, we study several special cases in the fourth section that give rise to MDS codes.
In Section~\ref{sec:duals}, we present results on the duals of twisted RS-codes.
We compare twisted RS codes with GRS codes in the sixth section. We are able to give various families of MDS twisted RS codes that are not monomially equivalent to GRS codes. Our main tool for this will be the Schur square of a code, which has low dimension for a GRS code, but can have a large dimension for a twisted RS code. %
In the last section, we discuss decoding of twisted RS codes and indicate a decoder that works very well in practice.

While working on this paper, related papers on twisted RS codes have begun to appear. In \cite{liu2020new}, a construction was presented that can give slightly longer MDS twisted RS codes by modifying two of our special classes of MDS twisted RS codes. %
Further, one-twisted RS codes were used for obtaining LCD MDS codes. New non-GRS LCD MDS codes based on twisted RS codes were also presented in \cite{wu2021new}. In \cite{huang2020mds}, self-dual MDS and near MDS codes were constructed using twisted RS codes for $t=1$ and $h=k-1$. For this twist and hook, a parity-check matrix was given as well.
An AG variant of twisted RS codes was investigated in \cite{math9010040} using codes coming from the Hermitian curve.
Further results on twisted Reed--Solomon codes can also be found in the dissertation of the second author~\cite{puchinger2018construction}.
In \cite{puchinger2017further}, the construction of twisted RS codes with multiple twists was used to further generalize the class of twisted Gabidulin codes \cite{sheekey2015new} in the rank metric.
Recently, a twisted variant of linearized Reed--Solomon codes (a mix of Reed--Solomon and Gabidulin codes, considered in the sum-rank metric) was proposed in \cite{neri2021twisted}.

\section{Preliminaries}

For $\alphaVec = [\alpha_1,\dots,\alpha_n] \in \Fq^n$, we define the evaluation map
\begin{align*}
\ev{\cdot}{\alphaVec} \, : \, \Fq[X] &\to \Fq^n, \\
f &\mapsto \left[ f(\alpha_1),\dots,f(\alpha_n) \right].
\end{align*}

Note that $\ev{\cdot}{\alphaVec}$ is an $\Fq$-linear map.
If the $\alpha_i$ are distinct, then the restriction of $\ev{\cdot}{\alphaVec}$ to polynomials of degree $<n$, i.e.,  $\ev{\cdot}{\alphaVec}\big\rvert_{\Fq[X]_{<n}}$, is bijective.

For distinct evaluation points $\alpha_1,\dots,\alpha_n \in \Fq$ and arbitrary column multipliers $v_1,\dots,v_n \in \Fq^\ast$, the corresponding \emph{generalized Reed--Solomon (GRS) code} of dimension $k$ is defined by
\begin{equation*}
\Code_\mathsf{GRS} = \ev{\Fq[X]_{<k}}{\alphaVec}\cdot \diag(v_1,\dots,v_n).
\end{equation*}
Two linear codes are called \emph{(monomially) equivalent} if one code can be obtained from the other by permutation of codeword positions and entry-wise multiplication with non-zero field elements.
In particular, any GRS code is equivalent to an RS code.

\section{Twisted Reed--Solomon Codes}\label{sec:twisted_rs_codes}

In this section, we define twisted Reed--Solomon codes and show some of their properties.

\begin{definition}\label{def:twisted_RS_codes}
Let $n,k,\numTwists \in \NN$ be positive integers with $k < n$.
We call $\ell$ the \emph{number of twists}.
Futhermore, choose three vectors
\begin{itemize}
\item $\tVec = [t_1,\dots,t_\ell] \in \{1,\dots,n-k\}^\numTwists$ (called \emph{twist vector}),
\item $\hVec = [h_1,\dots,h_\ell] \in \{0,\dots,k-1\}^\numTwists$ (called \emph{hook vector}),
\item $\etaVec = [\eta_1,\dots,\eta_\ell] \in \Fq^\numTwists$ (called \emph{coefficient vector})
\end{itemize}
such that the tuples $[h_i,t_i]$ for $i=1,\dots,\ell$ are distinct.\footnote{This means that the $\tVec$ and $\hVec$ vectors may have repeated entries, just not in the same coordinates.}

We define the set of \emph{\twisted polynomials} by
\begin{equation*}
\evpolys = \left\{ f = \sum_{i=0}^{k-1} f_i X^i + \sum_{j=1}^{\numTwists} \eta_j f_{h_j} X^{k-1+t_j} : f_i \in \Fq \right\}.
\end{equation*}
Let $\alpha_1,\dots,\alpha_n \in \Fq$ be distinct and write $\alphaVec = [\alpha_1,\dots,\alpha_n]$.
The corresponding \emph{\twistedC Reed--Solomon code} is defined by
\begin{equation*}
\Cmult := \ev{\evpolys}{\alphaVec} \subseteq \Fq^n.
\end{equation*}
For brevity, we often say \emph{twisted polynomials} and \emph{twisted RS codes}, respectively.
\end{definition}

\begin{lemma}\label{lem:evpolys_basis}
The set of \twisted polynomials $\evpolys$ is a $k$-dimensional subspace of $\Fq[X]$. A basis of $\evpolys$ is given by
\begin{align}
g_i := X^i + \sum_{\substack{j=1 \\ h_j=i}}^{\ell} \eta_j X^{k-1+t_j} \label{eq:evpolys_canonical_basis}
\end{align}
for $i=0,\dots,k-1$.
\end{lemma}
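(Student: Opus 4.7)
The plan is to exhibit $\evpolys$ as the image of an injective $\Fq$-linear map from $\Fq^k$, which simultaneously shows it is a subspace of dimension $k$ and produces the claimed basis as the image of the standard basis of $\Fq^k$.

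Concretely, I would define
\[
\phi \,:\, \Fq^k \to \Fq[X], \qquad (f_0,\dots,f_{k-1}) \;\mapsto\; \sum_{i=0}^{k-1} f_i X^i + \sum_{j=1}^{\ell} \eta_j f_{h_j} X^{k-1+t_j},
\]
which is manifestly $\Fq$-linear in the $f_i$. By definition $\evpolys = \phi(\Fq^k)$, so it is an $\Fq$-subspace of $\Fq[X]$, and $\dim \evpolys \leq k$.

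The crux is injectivity of $\phi$. Here I would use the key degree separation: since $t_j \in \{1,\dots,n-k\}$, every twist monomial $X^{k-1+t_j}$ has degree at least $k$, whereas the monomials $X^i$ in the first sum have degree strictly less than $k$. Consequently, if $\phi(f_0,\dots,f_{k-1}) = 0$, then comparing the coefficient of $X^i$ for $0 \leq i \leq k-1$ forces $f_i = 0$, since no twist term contributes to degrees below $k$. Hence $\phi$ is injective and $\dim \evpolys = k$.

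Finally, evaluating $\phi$ on the standard basis vector $e_i$ (the vector with a $1$ in position $i$ and zeros elsewhere) gives exactly
\[
\phi(e_i) = X^i + \sum_{\substack{j=1 \\ h_j=i}}^{\ell} \eta_j X^{k-1+t_j} = g_i,
\]
so $\{g_0,\dots,g_{k-1}\}$ is the image of a basis under an $\Fq$-linear isomorphism onto $\evpolys$ and is therefore a basis of $\evpolys$. I do not expect a genuine obstacle: the only place where one has to be slightly careful is noting that the assumption $t_j \geq 1$ is what cleanly separates the twist part from the base part, and that the distinctness of the pairs $[h_j,t_j]$ is not needed for this lemma (it will matter later to ensure that the $\eta_j$ act on genuinely distinct twist positions).
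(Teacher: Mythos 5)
Your proof is correct and is essentially the same argument as the paper's: both rely on the observation that $k-1+t_j > k-1$, so the coefficient of $X^i$ for $i \le k-1$ is exactly $f_i$, which gives both spanning and linear independence of the $g_i$. Your packaging via an injective linear map $\phi$ and your remark that distinctness of the pairs $[h_j,t_j]$ is not needed here are fine but do not change the substance.
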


\begin{proof}
For any $f = \sum_{i=0}^{k-1} f_i X^i + \sum_{j=1}^{\numTwists}  \eta_j f_{h_j} X^{k-1+t_j}\in \evpolys$, we can write $f = \sum_{i=0}^{k-1} f_i g_i$, where $f_i \in \Fq$.
Furthermore, $g_i \in \evpolys$ and the $g_i$ are linearly independent since the monomial $X^i$ appears in $g_i$ only for each $i=0,\dots,k-1$ (note that $k-1+t_j > k-1$).
\end{proof}

\begin{proposition}\label{prop:twisted_codes_linear}
A \twistedC Reed--Solomon code $\Cmult$ is a linear $[n,k]$ code. With $g_0,\dots,g_{k-1} \in \Fq[X]$ as in \eqref{eq:evpolys_canonical_basis}, the matrix
\begin{align}
\Gmat := \begin{bmatrix}
\ev{g_0}{\alphaVec} \\
\vdots \\
\ev{g_{k-1}}{\alphaVec} 
\end{bmatrix} \in \Fq^{k \times n} \label{eq:generator_matrix_canonical}
\end{align}
is a generator matrix of $\Cmult$.
\end{proposition}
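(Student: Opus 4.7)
The plan is to assemble the proposition directly from the preceding lemma together with the injectivity of $\ev{\cdot}{\alphaVec}$ on polynomials of degree less than $n$, which was noted in the preliminaries.

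First, I would observe that since $\ev{\cdot}{\alphaVec}\colon \Fq[X]\to \Fq^n$ is $\Fq$-linear and, by \cref{lem:evpolys_basis}, the set $\evpolys$ is an $\Fq$-subspace of $\Fq[X]$ with basis $g_0,\dots,g_{k-1}$, the image $\Cmult = \ev{\evpolys}{\alphaVec}$ is automatically an $\Fq$-linear subspace of $\Fq^n$ spanned by the vectors $\ev{g_0}{\alphaVec},\dots,\ev{g_{k-1}}{\alphaVec}$. In particular, the matrix $\Gmat$ in \eqref{eq:generator_matrix_canonical} generates $\Cmult$, and so the code has length $n$ and dimension at most $k$.

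Next I would verify that this dimension equals $k$, i.e., that the restriction $\ev{\cdot}{\alphaVec}\big\rvert_{\evpolys}$ is injective. The key observation is a degree bound: every $f\in\evpolys$ has degree at most $k-1+\max_j t_j$, and by definition of the twist vector $t_j\le n-k$, so $\deg f \le n-1 < n$. Hence $\evpolys$ is contained in $\Fq[X]_{<n}$, on which $\ev{\cdot}{\alphaVec}$ is bijective because the evaluation points $\alpha_1,\dots,\alpha_n$ are distinct (as recalled in \cref{sec:twisted_rs_codes}). It follows that the $k$ generators $\ev{g_i}{\alphaVec}$ are $\Fq$-linearly independent, and therefore $\dim_{\Fq}\Cmult = k$, so $\Cmult$ is a linear $[n,k]$ code with generator matrix $\Gmat$.

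There is really no main obstacle here: the only nontrivial input is the degree bound $k-1+t_j\le n-1$ ensuring that twisted polynomials lie in $\Fq[X]_{<n}$, so that injectivity of evaluation at distinct points can be invoked. Everything else is either quoted from \cref{lem:evpolys_basis} or follows from $\Fq$-linearity of the evaluation map.
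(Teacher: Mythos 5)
Your proposal is correct and follows essentially the same route as the paper's proof: linearity of $\ev{\cdot}{\alphaVec}$ gives linearity of the code, and the degree bound $\deg f \leq k-1+t_j \leq n-1$ (from $t_j \leq n-k$) ensures $\evpolys \subseteq \Fq[X]_{<n}$, so the evaluation map is injective there and the rows of $\Gmat$ form a basis. No gaps; nothing further to add.
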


\begin{proof}
Since $\ev{\cdot}{\alphaVec}$ is $\Fq$-linear and $\evpolys$ is an $\Fq$-vector space, the code $\Cmult$ is linear. Furthermore, we have $\deg f < n$ for all $f \in \evpolys$ due to $t_i \leq n-k$.
Hence, $\ev{\cdot}{\alphaVec}$ is injective on the evaluation polynomials, which implies
$\dim_{\Fq}\!\left(\Cmult\right) = \dim_{\Fq}\!\left(\evpolys\right) = k$.
The same argument implies that the $\ev{g_i}{\alphaVec}$ are a basis of $\Cmult$, i.e., $\Gmat$ is in fact a generator matrix.
\end{proof}

\begin{remark}
Some remarks about Definition~\ref{def:twisted_RS_codes}.
\begin{itemize}
\item Twisted RS codes are not related to twisted BCH codes as defined in \cite{edel1997twisted}. The name is inspired by Sheekey's twisted Gabidulin codes \cite{sheekey2015new}, which are related to \emph{generalized twisted fields}.
\item The condition that the tuples $[h_i,t_i]$ are distinct is no restriction in general. Assume that $[h_i,t_i] = [h_j,t_j]$ for some $i \neq j$. Then we obtain the same code by removing $h_j,t_j,\eta_j$ from the twist, hook and coefficient vector (note that the number of twists decreases), respectively, and replacing $\eta_i$ by $\eta_i+\eta_j$. We can repeat this process until all tuples are distinct.
\item Setting $\eta_i \neq 0$ for all $i$ is in principle no restriction if we are interested in codes that are not obviously RS codes. However, we allow the $\eta_i$ to be $0$ such that the family of twisted codes includes RS codes in a natural way.
\item The restriction $t_i \leq n-k$ is not necessary for $\ev{\cdot}{\alphaVec}$ to be injective on $\evpolys$. Hence, it might be possible to relax this condition.
A necessary and sufficient condition is that the polynomials 
\begin{equation*}
g_i \mod \prod_{j=1}^{n}(X-\alpha_j)
\end{equation*}
for $i=0,\dots,k-1$ with $g_i$ as in \eqref{eq:evpolys_canonical_basis} are linearly independent.
This condition is obviously fulfilled if $\deg g_i <n$ since $\deg \prod_{j=1}^{n}(X-\alpha_j) = n$, but it would require rather technical conditions on $\alphaVec$, $\tVec$, $\hVec$, and $\etaVec$ to guarantee it for for $\deg g_i \geq n$.
\end{itemize}
\end{remark}

\begin{example}
We give three example generator matrices.
For easier notation, we write $\alphaVec^i := [\alpha_1^i,\alpha_2^i, \dots, \alpha_n^i]$.

For $\etaVec = \0$, we obtain a Reed--Solomon code since the basis of $\evpolys$ given in Lemma~\ref{lem:evpolys_basis} is $g_i = X^i$ and, hence, the generator matrix in Proposition~\ref{prop:twisted_codes_linear} is a Vandermonde matrix
\begin{align*}
\Gmat = \begin{bmatrix}
\alphaVec^0 \\
\alphaVec^1 \\
\vdots \\
\alphaVec^{k-1}
\end{bmatrix}
=\begin{bmatrix}
\alpha_1^0 & \dots & \alpha_n^0 \\
\alpha_1^1 & \dots & \alpha_n^1 \\
\vdots & \ddots & \vdots \\
\alpha_1^{k-1} & \dots & \alpha_n^{k-1} \\
\end{bmatrix}.
\end{align*}
For $q=n=9$, $k=5$, $\ell=1$, $h_1=2$, $t_1=2$, and $\eta_1$ a non-square of $\mathbb{F}_9$, we obtain a punctured Glynn's code \cite{glynn1986non} (by evaluating ``at infinity'' (cf.~Remark~\ref{rem:evaluation_at_infinity}) in addition, we get exactly Glynn's code).
Glynn's code is the first-known MDS code with odd field size, length $n=q+1$, and dimension $3 \leq k \leq q-1$ that is not a Generalized Reed--Solomon code.
The generator matrix in Proposition~\ref{prop:twisted_codes_linear} is given by
\begin{align*}
\Gmat = \begin{bmatrix}
\alphaVec^0 \\
\alphaVec^1 \\
\alphaVec^2+\eta_1\alphaVec^6 \\
\alphaVec^3 \\
\alphaVec^4 \\
\end{bmatrix}
\end{align*}

For $q \geq n>7$, $k = 5$, $\tVec = [1,3,3]$, and $\hVec = [4,4,2]$, the generator matrix in Proposition~\ref{prop:twisted_codes_linear} is of the form
\begin{align*}
\Gmat = \begin{bmatrix}
\alphaVec^0 \\
\alphaVec^1 \\
\alphaVec^2 + \eta_3 \alphaVec^7 \\
\alphaVec^3 \\
\alphaVec^4 + \eta_1 \alphaVec^5  + \eta_2 \alphaVec^7
\end{bmatrix}.
\end{align*}
\end{example}

\section{MDS Twisted RS Codes}
\label{sec:MDS}

Not all twisted RS codes are MDS. In this section, we give several families of MDS twisted RS codes.

\subsection{A General MDS Condition}\label{ssec:MDS_general}

\begin{definition}\label{def:sum-product-free}
Let $\Fq/\Fqsmall$ be a field extension.
A vector $\etaVec \in \Fq^\ell$ is called \emph{$\Fqsmall$-sum-product free} if 
\begin{equation*}
\sum_{\substack{\mathcal{S} \subseteq \{1,\dots,\ell\} \\ \mathcal{S} \neq \emptyset}} a_{\mathcal{S}} \prod_{i \in \mathcal{S}} \eta_i \notin \Fqsmall^\ast \quad \forall \, a_{\mathcal{S}} \in \Fqsmall.
\end{equation*}
\end{definition}

Equivalently, a vector $\etaVec$ is $\Fqsmall$-sum-product free exactly when there is no polynomial $f \in \Fqsmall[X_1,\ldots,X_\ell]$ with non-zero constant coefficient and of degree at most $1$ in each $X_i$ such that $f(\eta_1,\ldots,\eta_\ell) = 0$.

\begin{proposition}
Let $\Fq / \Fqsmall$ be an extension of finite fields.
Let $k < n \leq q_0$ and let $\alpha_1,\ldots,\alpha_n \in \Fqsmall$ be distinct.
For any $\tVec, \hVec$ and $\vec \eta$ chosen as in \cref{def:twisted_RS_codes} and such that $\etaVec$ is $\Fqsmall$-sum-product free, then the twisted RS codes $\Cmult$ is MDS.
\end{proposition}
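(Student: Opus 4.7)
The plan is to show directly that every $k \times k$ submatrix of the generator matrix $\Gmat$ in \eqref{eq:generator_matrix_canonical} is non-singular; this is equivalent to $\Cmult$ having minimum distance $n-k+1$. Note that the assumption $n \leq q_0$ guarantees that $n$ distinct $\alpha_i$ can indeed be chosen in $\Fqsmall$, so crucially all evaluation points lie in the subfield.

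Fix column indices $I = \{i_1, \ldots, i_k\}$ and consider $\Gmat_I = [g_r(\alpha_{i_s})]_{r=0,\ldots,k-1,\,s=1,\ldots,k}$. Writing each entry as
\begin{equation*}
g_r(\alpha_{i_s}) = \alpha_{i_s}^r + \sum_{j\,:\,h_j = r} \eta_j\, \alpha_{i_s}^{k-1+t_j},
\end{equation*}
I expand $\det(\Gmat_I)$ by multilinearity in the rows:
\begin{equation*}
\det(\Gmat_I) = \sum_{S \subseteq \{1, \ldots, \ell\}} V_{S,I} \prod_{j \in S} \eta_j .
\end{equation*}
Here each $V_{S,I}$ is the determinant of a matrix whose entries are powers of the $\alpha_{i_s}$ (a generalized Vandermonde), hence $V_{S,I} \in \Fqsmall$ since $\alpha_{i_s} \in \Fqsmall$; for subsets $S$ that do not correspond to a consistent pick of one monomial per row, the corresponding summand simply does not arise (equivalently $V_{S,I} = 0$). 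The critical observation is that the $S = \emptyset$ term is the classical Vandermonde $V_{\emptyset,I} = \prod_{s < s'}(\alpha_{i_{s'}} - \alpha_{i_s})$, which lies in $\Fqsmall^\ast$ by distinctness of the $\alpha_i$.

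Now suppose for contradiction that $\det(\Gmat_I) = 0$. Dividing by $V_{\emptyset,I} \in \Fqsmall^\ast$ and setting $a_S := V_{S,I}/V_{\emptyset,I} \in \Fqsmall$ for $S \neq \emptyset$, I obtain
\begin{equation*}
\sum_{S \neq \emptyset} a_S \prod_{j \in S} \eta_j = -1 \in \Fqsmall^\ast,
\end{equation*}
which directly contradicts the $\Fqsmall$-sum-product-freeness of $\etaVec$. Hence every $k \times k$ minor of $\Gmat$ is non-zero, any $k$ columns of $\Gmat$ are linearly independent, and $\Cmult$ is MDS.

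The main technical step I expect to require care is the multilinear expansion with correct bookkeeping: verifying that every coefficient $V_{S,I}$ genuinely sits in the subfield $\Fqsmall$ and singling out the $S = \emptyset$ term as a non-vanishing Vandermonde. Once this setup is in place, the $\Fqsmall$-sum-product-free hypothesis is tailor-made to close the argument, so no further effort is needed.
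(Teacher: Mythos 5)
Your proof is correct and follows essentially the same route as the paper's: both view each $k\times k$ minor as a polynomial in $\eta_1,\dots,\eta_\ell$ over $\Fqsmall$ of degree at most one in each variable, identify the constant term as the (non-zero) Vandermonde minor of the underlying RS code, and invoke $\Fqsmall$-sum-product-freeness to conclude the minor cannot vanish. Your version merely makes the multilinear expansion and the bookkeeping of inconsistent subsets $S$ explicit where the paper argues more compactly via the observation that each $\eta_i$ appears in only one row.
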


\begin{proof}
Let $\Gmat$ be the generator matrix of $\Cmult$ given in \eqref{eq:generator_matrix_canonical}.
Since each $\alpha_i \in \Fqsmall$, we can consider the entries of $\Gmat$ to be in $\Fqsmall[\eta_1,\ldots,\eta_\ell]$.
The code $\Cmult$ is MDS if and only if every $k \times k$ minor of $\Gmat$ is non-zero.
Observe that each $\eta_i$ appears in exactly one row of $\Gmat$, and hence as a polynomial in $\eta_1,\ldots,\eta_\ell$, any such $k \times k$ minor has degree at most one in each variable.
Moreover, its constant term is non-zero since setting $\eta_1 = \ldots = \eta_\ell = 0$ yields an RS code which is MDS, and hence has non-zero $k \times k$ minors.
Since $\etaVec$ is $\Fqsmall$-sum-product free, then every such expression is non-zero.
\end{proof}

The following gives two constructions of $\Fqsmall$-sum-product free sets:

\begin{proposition}\label{prop:MDS_condition_subfield_chain}
  Let $\Fqsmall \subsetneq \F_{q_1} \subsetneq \ldots \subsetneq \F_{q_\ell} = \Fq$ be a proper chain of subfields.
  Let $\eta_1, \ldots, \eta_\ell$ be chosen with the condition that $\eta_i \in \F_{q_i} \setminus \F_{q_{i-1}}$.
  Then $\etaVec := [\eta_1, \ldots, \eta_\ell]$ is $\Fqsmall$-sum-product free.
\end{proposition}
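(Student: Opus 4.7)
The plan is to use the equivalent reformulation given in the excerpt: I need to show that there is no polynomial $f \in \Fqsmall[X_1,\ldots,X_\numTwists]$ of degree at most one in each variable, with nonzero constant coefficient, such that $f(\eta_1,\ldots,\eta_\numTwists)=0$. I will prove this by induction on $\numTwists$, using the chain of subfields to rule out solutions one step at a time.

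The base case $\numTwists=1$ is immediate: a polynomial of the stated form is $f = a X_1 + c$ with $c \in \Fqsmall^\ast$, and $f(\eta_1)=0$ would force $\eta_1 = -c/a \in \Fqsmall \subseteq \F_{q_0}$, which contradicts $\eta_1 \in \F_{q_1} \setminus \Fqsmall$ (so in particular $a \neq 0$ is needed and then the contradiction is immediate). For the inductive step, I would decompose $f$ with respect to the last variable, writing
\[
f(X_1,\ldots,X_\numTwists) = A(X_1,\ldots,X_{\numTwists-1})\cdot X_\numTwists + B(X_1,\ldots,X_{\numTwists-1}),
\]
where $A, B \in \Fqsmall[X_1,\ldots,X_{\numTwists-1}]$ are each of degree at most one in each variable, and the constant term of $B$ equals the constant term of $f$, hence is nonzero.

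Evaluating at $\etaVec$ gives $A(\eta_1,\ldots,\eta_{\numTwists-1})\,\eta_\numTwists + B(\eta_1,\ldots,\eta_{\numTwists-1}) = 0$. Here is the crucial observation: both $A(\eta_1,\ldots,\eta_{\numTwists-1})$ and $B(\eta_1,\ldots,\eta_{\numTwists-1})$ lie in $\F_{q_{\numTwists-1}}$, because each $\eta_i$ lies in $\F_{q_i}\subseteq \F_{q_{\numTwists-1}}$ for $i \leq \numTwists-1$, and $\Fqsmall \subseteq \F_{q_{\numTwists-1}}$. If $A(\eta_1,\ldots,\eta_{\numTwists-1}) \neq 0$, then $\eta_\numTwists = -B(\eta_1,\ldots,\eta_{\numTwists-1})/A(\eta_1,\ldots,\eta_{\numTwists-1}) \in \F_{q_{\numTwists-1}}$, contradicting $\eta_\numTwists \in \F_{q_\numTwists}\setminus \F_{q_{\numTwists-1}}$. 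Therefore both $A(\eta_1,\ldots,\eta_{\numTwists-1})=0$ and $B(\eta_1,\ldots,\eta_{\numTwists-1})=0$.

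The induction hypothesis applied to the shorter chain $\Fqsmall \subsetneq \F_{q_1} \subsetneq \ldots \subsetneq \F_{q_{\numTwists-1}}$ and the vector $[\eta_1,\ldots,\eta_{\numTwists-1}]$ says that no polynomial in $\Fqsmall[X_1,\ldots,X_{\numTwists-1}]$ with nonzero constant term and degree at most one in each variable can vanish at $(\eta_1,\ldots,\eta_{\numTwists-1})$. Since $B$ has exactly these properties, $B(\eta_1,\ldots,\eta_{\numTwists-1})=0$ is a contradiction, closing the induction. The only part that requires care is keeping track of which polynomial carries the nonzero constant term after the decomposition — fortunately it is $B$, not $A$, which is why the induction survives.
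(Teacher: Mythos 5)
Your proof is correct and follows essentially the same route as the paper: induction on $\ell$, splitting off the last variable, and using $\eta_\ell \notin \F_{q_{\ell-1}}$ to force both pieces of the decomposition to vanish, whereupon the induction hypothesis gives the contradiction. The only (cosmetic) difference is that you work with the polynomial reformulation (nonzero constant term carried by $B$) while the paper argues directly with sum-products (zero constant term carried by $a$); the two case analyses are logical mirror images of each other.
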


\begin{proof}
We prove the claim by induction on $\ell$.
If $\ell=1$, we have $a \eta_1 \notin \Fqsmall^\ast$ for any $a \in \Fqsmall$ since $\eta_1 \notin \Fqsmall$.
For the inductive step, we can split any sum product
\begin{equation*}
	A := \sum_{\substack{\mathcal{S} \subseteq \{1,\dots,\ell\} \\ \mathcal{S} \neq \emptyset}} a_{\mathcal{S}} \prod_{i \in \mathcal{S}} \eta_i = a(\eta_1,\dots,\eta_{\ell-1}) + \eta_\ell b(\eta_1,\dots,\eta_{\ell-1}),
\end{equation*}
where $a,b \in \Fqsmall[X_1,\dots,X_{\ell-1}]$ are polynomials with degree at most one in each variable $X_i$ and $a$ has zero constant term (i.e., $a(\eta_1,\dots,\eta_{\ell-1})$ is a sum-product of $\eta_1,\dots,\eta_{\ell-1}$).
	By the inductive step and the choice of the $\eta_i$, we have $a(\eta_1,\dots,\eta_{\ell-1}) \in \F_{q_{\ell-1}} \setminus \Fqsmall^\ast$ and $b(\eta_1,\dots,\eta_{\ell-1}) \in \F_{q_{\ell-1}}$.
If $b(\eta_1,\dots,\eta_{\ell-1})=0$, then we have $A = a(\eta_1,\dots,\eta_{\ell-1}) \notin \Fqsmall^\ast$.
Else, we have $A \notin \F_{q_{\ell-1}}$ since otherwise, $\eta_\ell = \tfrac{A-a(\eta_1,\dots,\eta_{\ell-1})}{b(\eta_1,\dots,\eta_{\ell-1})}$ would be in $\F_{q_{\ell-1}}$. In particular, we have $A \notin \Fqsmall$.
\end{proof}

\begin{proposition}\label{prop:MDS_condition_power_basis}
Let $\F_q / \Fqsmall$ be an extension of finite fields of degree at least $\ell+1 \geq 2$, and let $1, \psi, \ldots, \psi^{[F_q : \Fqsmall] - 1} \in \F_q$ be a power basis of the extension.
Then any $\etaVec := [a_1 \psi, \ldots, a_\ell \psi]$ with $a_i \in \Fqsmall \setminus \{0\}$ is $\Fqsmall$-sum-product free.
\end{proposition}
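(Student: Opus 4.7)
The plan is to expand a generic sum-product of the $\eta_i$ in the power basis $1, \psi, \ldots, \psi^{[\F_q:\Fqsmall]-1}$ and then invoke $\Fqsmall$-linear independence of these basis elements.

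First I would substitute $\eta_i = a_i \psi$ into the sum-product expression from \cref{def:sum-product-free}:
\begin{equation*}
A := \sum_{\substack{\mathcal{S} \subseteq \{1,\dots,\ell\} \\ \mathcal{S} \neq \emptyset}} a_{\mathcal{S}} \prod_{i \in \mathcal{S}} \eta_i = \sum_{\substack{\mathcal{S} \subseteq \{1,\dots,\ell\} \\ \mathcal{S} \neq \emptyset}} a_{\mathcal{S}} \Bigl( \prod_{i \in \mathcal{S}} a_i \Bigr) \psi^{|\mathcal{S}|}.
\end{equation*}
Collecting terms by the size of $\mathcal S$, this rewrites as $A = \sum_{j=1}^{\ell} c_j \psi^j$, where each coefficient
\begin{equation*}
c_j := \sum_{\substack{\mathcal{S} \subseteq \{1,\dots,\ell\} \\ |\mathcal{S}|=j}} a_{\mathcal{S}} \prod_{i \in \mathcal{S}} a_i
\end{equation*}
lies in $\Fqsmall$ because both $a_\mathcal{S}$ and the $a_i$ do.

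Next I would use the hypothesis $[\F_q : \Fqsmall] \geq \ell+1$, which guarantees that $1, \psi, \psi^2, \ldots, \psi^{\ell}$ are part of the given power basis and hence $\Fqsmall$-linearly independent. Suppose, towards a contradiction, that $A \in \Fqsmall^\ast$. Then $A\cdot 1 - \sum_{j=1}^{\ell} c_j \psi^j = 0$ is a non-trivial $\Fqsmall$-linear dependence among $1, \psi, \ldots, \psi^\ell$ (the coefficient of $1$ being the nonzero element $A$), which is impossible. Therefore $A \notin \Fqsmall^\ast$, proving that $\etaVec$ is $\Fqsmall$-sum-product free.

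There is no real obstacle here; the argument is essentially bookkeeping once one notices that the substitution $\eta_i = a_i \psi$ makes every monomial $\prod_{i \in \mathcal{S}} \eta_i$ a scalar multiple of a single power $\psi^{|\mathcal{S}|}$ of $\psi$, so that $A$ has no constant term in the power-basis expansion. The only point worth being careful about is to ensure that $\ell \leq [\F_q:\Fqsmall]-1$ (given by the assumption $[\F_q:\Fqsmall] \geq \ell+1$), so that all of $1, \psi, \ldots, \psi^\ell$ actually belong to the given power basis and are therefore linearly independent over $\Fqsmall$.
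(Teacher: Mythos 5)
Your proof is correct and follows essentially the same route as the paper's: substitute $\eta_i = a_i\psi$ so that every product $\prod_{i\in\mathcal S}\eta_i$ becomes an $\Fqsmall$-multiple of $\psi^{|\mathcal S|}$, observe the sum-product therefore lies in the $\Fqsmall$-span of $\psi,\dots,\psi^\ell$ with no $\psi^0$ component, and conclude by linear independence of $1,\psi,\dots,\psi^\ell$. Your write-up is in fact slightly more careful than the paper's in spelling out why the expression cannot equal a \emph{nonzero} element of $\Fqsmall$.
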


\begin{proof}
For any non-empty $\mathcal{I} \subseteq \{1,\dots,\ell\}$, we have $\prod_{i \in \mathcal{I}} \eta_i = b \psi^{|\mathcal{I}|}$ for some $b \in \Fqsmall \setminus \{0\}$.
Hence, an $\Fqsmall$-linear combination of such terms must be of the form $b_1 \psi + b_2 \psi^2 + \ldots + b_\ell \psi^\ell$ with $b_i \in \Fqsmall$, and this will never be 0 as $\Fq$ has degree at least $\ell+1$ over $\Fqsmall$.
\end{proof}

\begin{remark}
The propositions above provide two constructions of MDS twisted RS codes.
Both methods require that the evaluation points are chosen from a proper subfield of the code's base field.
Since $n \leq q_0$ and the smallest prime number $q_0$ greater or equal to $n$ satisfies $q_0 < 2n$ by Bertrand's postulate,
the smallest overall field sizes $q$ for the constructions fulfill
\begin{align*}
n^{2^{\ell}} &\leq q = q_0^{2^{\ell}} < (2n)^{2^{\ell}},  &&\text{(Proposition~\ref{prop:MDS_condition_subfield_chain}),} \\
n^{\ell+1} &\leq q = q_0^{\ell+1} < (2n)^{\ell+1}, &&\text{(Proposition~\ref{prop:MDS_condition_power_basis}).}
\end{align*}
For these smallest-possible field sizes, we have
\begin{align*}
\tfrac{1}{2} q^{2^{-\ell}} &< n \leq q^{2^{-\ell}} &&\text{(Proposition~\ref{prop:MDS_condition_subfield_chain}),}\\
\tfrac{1}{2} q^{\frac{1}{\ell+1}} &< n \leq q^{\frac{1}{\ell+1}}
&&\text{(Proposition~\ref{prop:MDS_condition_power_basis}).}
\end{align*}
It can be seen that, compared to Proposition~\ref{prop:MDS_condition_subfield_chain}, the construction in Proposition~\ref{prop:MDS_condition_power_basis} is able to provide smaller field sizes for any given code length. We include Proposition~\ref{prop:MDS_condition_subfield_chain} for the sake of having a second construction that might prove useful for other purposes than merely minimizing the field size. For instance, there is an analog of Proposition~\ref{prop:MDS_condition_subfield_chain} (based on a conference version of this paper) for constructing twisted Gabidulin codes in the rank metric \cite{puchinger2017further}, but there is no rank-metric analog of Proposition~\ref{prop:MDS_condition_power_basis}, yet, and it is not obvious how to adapt it.
\end{remark}

\subsection{$(\ast)$-Twisted RS Codes}\label{ssec:startwisted_codes}

The $\Fqsmall$-sum-product free property as in the previous subsection is a rather strong restriction on the $\eta_i$ and yields relatively short MDS codes. %
In this and the following subsection, we will obtain longer MDS codes for two specific choices of $\tVec$ and $\hVec$.

We first consider twisted RS codes with one twist $\ell=1$, hook $\hVec=0$, and twist $\tVec = 1$.
In this case, we have the following necessary and sufficient MDS condition on the $\alpha_i$ and $\eta_1$.

\begin{lemma}\label{lem:startwisted_MDS}
Let $\ell=1$ and $n,k,\alphaVec,\etaVec$ be chosen as in \cref{def:twisted_RS_codes}.
The code $\TRS{\alphaVec,1,0,\etaVec}{n,k}$ is MDS if and only if
\begin{equation}
\eta_1 (-1)^k \prod_{i \in \Iset} \alpha_i \neq 1 \quad \forall \, \Iset \subseteq \{1,\dots,n\} \text{ s.t. } |\Iset|=k. \label{eq:startwisted_MDS_condition}
\end{equation}
\end{lemma}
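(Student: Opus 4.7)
The plan is to reduce the MDS condition to the non-vanishing of all $k \times k$ minors of the generator matrix from Proposition~\ref{prop:twisted_codes_linear}, and then compute each minor explicitly via a Vandermonde argument.

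Specialising \eqref{eq:evpolys_canonical_basis} to $\ell=1$, $h_1=0$, $t_1=1$, the canonical basis reduces to $g_0 = 1 + \eta_1 X^k$ and $g_i = X^i$ for $i=1,\dots,k-1$. Hence every $k\times k$ submatrix of $\Gmat$ indexed by $\Iset \subseteq \{1,\dots,n\}$ with $|\Iset| = k$ is an ordinary Vandermonde matrix on $\{\alpha_i : i \in \Iset\}$ except that its top row $(\alpha_i^0)_{i\in\Iset}$ has been replaced by $(1 + \eta_1 \alpha_i^k)_{i \in \Iset}$. By the standard characterisation of MDS codes, the code is MDS iff all these submatrix determinants are nonzero.

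By multilinearity of the determinant in the top row, each such minor splits as $\det V + \eta_1 \det V'$, where $V$ is the ordinary Vandermonde on $\{\alpha_i : i \in \Iset\}$ and $V'$ is obtained from $V$ by replacing its top row by $(\alpha_i^k)_{i\in\Iset}$. The first term is the nonzero Vandermonde product $\prod_{i<j \in \Iset}(\alpha_j - \alpha_i)$. For $\det V'$, I would cyclically shift the top row of $V'$ to the bottom (which multiplies by $(-1)^{k-1}$) so that the rows become $\alpha^1,\alpha^2,\dots,\alpha^k$, and then factor an $\alpha_i$ out of each column to recover $V$; this gives $\det V' = (-1)^{k-1}\bigl(\prod_{i \in \Iset}\alpha_i\bigr)\det V$.

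Combining yields
\[
\det(\text{submatrix}) \;=\; \det V \cdot \Bigl(1 + \eta_1(-1)^{k-1}\prod_{i \in \Iset}\alpha_i\Bigr),
\]
which is nonzero if and only if $\eta_1(-1)^{k}\prod_{i \in \Iset}\alpha_i \neq 1$. Ranging over all $\Iset$ of size $k$ then gives \eqref{eq:startwisted_MDS_condition}. There is no real obstacle in the argument; the only place to slip up is tracking the sign coming from the cyclic row shift, after which the rest is a routine Vandermonde manipulation.
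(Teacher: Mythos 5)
Your proof is correct, but it takes a genuinely different route from the paper. The paper argues on the polynomial side: a nonzero $f = \sum_{i=0}^{k-1} f_i X^i + \eta_1 f_0 X^k$ gives a codeword of weight less than $n-k+1$ iff it has exactly $k$ roots among the $\alpha_i$, i.e., $f = \eta_1 f_0 \prod_{i\in\Iset}(X-\alpha_i)$; comparing constant terms (using $f_0 \neq 0$) yields $\eta_1(-1)^k\prod_{i\in\Iset}\alpha_i = 1$. You instead work on the matrix side, expanding every $k\times k$ minor of the canonical generator matrix by multilinearity into $\det V + \eta_1 \det V'$ and evaluating $\det V'$ via the cyclic row shift and column scaling, which gives the clean factorization $\det V\cdot\bigl(1+\eta_1(-1)^{k-1}\prod_{i\in\Iset}\alpha_i\bigr)$; your sign bookkeeping is right, and the equivalence with \eqref{eq:startwisted_MDS_condition} follows. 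Your determinant computation is in the same spirit as the proof of the general MDS condition in \cref{ssec:MDS_general} and has the merit of exhibiting an explicit closed form for every minor; the paper's root-counting argument is slightly more conceptual and transfers verbatim to the $(+)$-twisted case (\cref{lem:trs_plus_twisted_MDS}), where one compares the coefficient of $X^{k-1}$ instead of the constant term, whereas your shift-and-rescale trick for $\det V'$ is specific to the hook $h_1=0$ and twist $t_1=1$.
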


\begin{proof}
All evaluation polynomials are of the form $f = \sum_{i=0}^{k-1} f_i x^i + \eta_1 f_0 x^k$.
If $f_0=0$, the weight of the corresponding codeword is either $0$ or at least $n-k+1$ since $\deg f < k$.
Otherwise, $f$ corresponds to a codeword of weight $<n-k+1$ if and only if $f$ has exactly $k$ roots among the $\alpha_i$, i.e., there is a subset $\Iset \subseteq \{1,\dots,n\}$ with $|\Iset|=k$ such that $f = \eta_1 f_0 \prod_{i \in \Iset} (x-\alpha_i)$.
The constant term of $f$ is $f_0 = f(0) = \eta_1 f_0 \prod_{i \in \Iset} (-\alpha_i)$. Due to $f_0\neq0$, we have
\begin{equation*}
\eta_1 (-1)^k \prod_{i \in \Iset} \alpha_i = 1.
\end{equation*}
Hence, all non-zero evaluation polynomials have at most $k-1$ roots among the $\alpha_i$ if and only if \eqref{eq:startwisted_MDS_condition} is satisfied.
\end{proof}

For $\eta_1 \neq 0$, a sufficient condition for \eqref{eq:startwisted_MDS_condition} to be fulfilled is that $(-1)^k \eta_1^{-1}$ is not contained in the multiplicative group generated by the $\alpha_i$'s.
This motivates the following definition.

\begin{definition}\label{def:startwisted}
Let $G$ be a proper subgroup of $(\F_q^*, \cdot)$, $\alpha_i \in G \cup \{0\}$ for all $i$, and $(-1)^k \eta_1^\mo \in \F_q^* \setminus G$.
Then, we call $\TRS{\alphaVec,1,0,\etaVec}{n,k}{}$ a \startw code.
\end{definition}

\begin{theorem}\label{thm:startwisted_MDS}
Any \startw code is MDS.
\end{theorem}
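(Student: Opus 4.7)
The plan is to directly verify the necessary and sufficient MDS condition from \cref{lem:startwisted_MDS}, namely that
\begin{equation*}
\eta_1 (-1)^k \prod_{i \in \Iset} \alpha_i \neq 1 \quad \text{for every } \Iset \subseteq \{1,\dots,n\} \text{ with } |\Iset| = k,
\end{equation*}
by exploiting that $(-1)^k \eta_1^{-1}$ lies outside the subgroup $G$ while each $\alpha_i$ lies in $G \cup \{0\}$.

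First I would rearrange the MDS condition as $\prod_{i \in \Iset} \alpha_i \neq (-1)^k \eta_1^{-1}$, so that one side is a fixed element of $\F_q^\ast \setminus G$ and the other is a product of $k$ elements from $G \cup \{0\}$. Then I split into two cases depending on the set $\Iset$. If some $\alpha_i = 0$ for $i \in \Iset$, the left-hand product is $0$, while the right-hand side lies in $\F_q^\ast$ by the definition of a \startw code, so equality is impossible. If instead all $\alpha_i$ with $i \in \Iset$ are nonzero, then each such $\alpha_i$ lies in $G$, and therefore $\prod_{i \in \Iset} \alpha_i \in G$ since $G$ is a multiplicative subgroup. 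But $(-1)^k \eta_1^{-1} \in \F_q^\ast \setminus G$ by hypothesis, so again equality cannot occur.

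Combining the two cases establishes condition \eqref{eq:startwisted_MDS_condition} for every $k$-subset $\Iset$, and \cref{lem:startwisted_MDS} then yields that $\TRS{\alphaVec,1,0,\etaVec}{n,k}$ is MDS. There is no real obstacle here: once \cref{lem:startwisted_MDS} has done the work of reducing the MDS property to a purely multiplicative non-vanishing statement, \cref{def:startwisted} is tailored exactly so that closure of $G$ under multiplication handles both cases in one line each.
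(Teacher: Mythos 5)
Your proof is correct and follows essentially the same route as the paper: both verify condition \eqref{eq:startwisted_MDS_condition} via \cref{lem:startwisted_MDS} by observing that $\prod_{i \in \Iset} \alpha_i \in G \cup \{0\}$ while $(-1)^k \eta_1^{-1} \in \F_q^\ast \setminus G$. The paper merely compresses your two cases into the single statement that the product lies in $G \cup \{0\}$, which the fixed element does not.
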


\begin{proof}
For any $\Iset \subseteq \{1,\dots,n\}$, we have $\prod_{i \in \Iset} \alpha_i \in G \cup \{0\}$.
Since $(-1)^k\eta_1^\mo \notin G \cup \{0\}$, Condition \eqref{eq:startwisted_MDS_condition} is fulfilled, and the code is MDS by Lemma~\ref{lem:startwisted_MDS}.
\end{proof}

For any divisor $a>1$ of $q-1$, there is a proper subgroup $G$ of $\Fq^*$ of cardinality $(q-1)/a$.
This means that \startw codes can have length $n = \tfrac{q+1}{a}$ and can be rather long compared to the constructions in the previous subsection.
In particular, if $q$ is odd, then $2 \mid q-1$ and $n = \tfrac{q+1}{2}$ is possible.

For even $q$, there is no multiplicative subgroup of this cardinality.
However, if we allow arbitrary evaluation points and $\eta \in \Fq^*$, MDS twisted RS codes  with $\tVec = 1$, $\hVec=0$, and length $n \approx q/2$ may exist for even $q$: our computer search (cf.~\cite{beelen2017twisted} %
shows, e.g., for $q=16$, there are many such codes of length $n = 9$ for $k = 3,4,5$.

Choosing the evaluation points from a multiplicative group appears to be rather restrictive.
However, the following analysis shows that for odd $q$, \startw codes have maximal length among all MDS twisted RS codes with $\tVec=1$ and $\hVec=0$.
To show this, we use the notion of $k$-sum generators in finite abelian groups, which was introduced in \cite{roth_construction_1989, roth_t-sum_1992} and originally used to construct non-RS MDS codes.

\begin{definition}[\!\!\cite{roth_t-sum_1992}]\label{def:k-sum_generator}
Let $(A,\oplus)$ be a finite abelian group and $k \in \NN$.
A \emph{$k$-sum generator} of $A$ is a subset $S \subseteq A$ such that for any $a \in A$, there are distinct $s_1,\dots,s_k \in S$ with $a=\bigoplus_{i=1}^{k} s_i$.
The smallest integer such that any $S \subset A$ with $|S|>M(k,A)$ is a $k$-sum generator of $A$ is denoted by $M(k,A)$.
\end{definition}

\begin{lemma}[{\!\!\cite{roth_t-sum_1992}}]\label{lem:tsum_roth_theorem_even_order}%
Let $A$ be a finite abelian group of order $|A|= 2r$ for some $r \geq 6$. For any $k$ with $3 \leq k \leq r-2$, we have
\begin{equation*}
M(k,A) = \begin{cases}
r+1, &\parbox[t]{5cm}{if $A \in \{\ZZ_2^m,\ZZ_4 \times \ZZ_2^{m-1}\}$ for some $m > 1$ and $k \in \{3,r-2\}$,} \\
r, &\text{else}.
\end{cases}
\end{equation*}
\end{lemma}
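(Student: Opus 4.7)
The statement is a purely combinatorial result on $k$-sum generators in finite abelian groups, quoted from Roth~\cite{roth_t-sum_1992}, so my plan is to reconstruct Roth's strategy: establish lower bounds on $M(k,A)$ via explicit subsets that fail to be $k$-sum generators, and match them with upper bounds proved by additive-combinatorial arguments, pinpointing the exceptional cases along the way.

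\textbf{Lower bound $M(k,A) \geq r$.} Since $|A|=2r$ is even, $A$ contains a subgroup $H$ of index $2$. Setting $S = H$, any sum of elements of $S$ lies in $H$, so $S$ is not a $k$-sum generator for any $k$. As $|S| = r$, this immediately gives $M(k,A) \geq r$ in every case.

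\textbf{Extra lower bound for the exceptional case.} For $A = \ZZ_2^m$ and $k=3$, pick an index-$2$ subgroup $H$, an element $x \notin H$, and set $S = H \cup \{x\}$, so $|S| = r+1$. A sum of three distinct elements of $S$ is either (i) a sum of three distinct elements of $H$, hence lies in $H$, or (ii) of the form $x + h_1 + h_2$ with distinct $h_1, h_2 \in H$. Since $h_1 + h_2 \neq 0$ in $\ZZ_2^m$, the type (ii) sums form $x + (H \setminus \{0\})$, which misses $x$ itself. Thus $S$ is not a $3$-sum generator, and $M(3,\ZZ_2^m) \geq r+1$. The case $k = r-2$ follows by a complementation duality: replacing a $k$-subset $T \subseteq S$ by $S \setminus T$ shifts the sum by the constant $\sum_{s \in S} s$, so a set is a $k$-sum generator iff it is an $(|S|-k)$-sum generator. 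The construction for $\ZZ_4 \times \ZZ_2^{m-1}$ is analogous, using the unique index-$2$ subgroup isomorphic to $\ZZ_4 \times \ZZ_2^{m-2}$ (or an appropriate parallel choice) and verifying that the same element evades the $3$-sum.

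\textbf{Upper bound (main obstacle).} The substantial part is showing that every $S$ with $|S| > r$ (respectively $|S| > r+1$ outside the exceptional list) is a $k$-sum generator. I would study the restricted sumset $kS := \{s_1 + \cdots + s_k : s_i \in S \text{ distinct}\}$, split $S = S_0 \sqcup S_1$ across an index-$2$ subgroup $H$, and apply Kneser's addition theorem (together with Erd\H{o}s--Heilbronn-type bounds for sumsets with distinctness constraints) to conclude that, once $|S| > |A|/2$, the set $kS$ exhausts $A$. The hard part is the analysis of Kneser equality: it is precisely the abundance of order-$2$ elements in $\ZZ_2^m$ and $\ZZ_4 \times \ZZ_2^{m-1}$ that produces a stabilizer of $kS$ coinciding with $H$, and the boundary values $k \in \{3, r-2\}$ are those where the distinctness restriction bites hardest. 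Carefully isolating these equality cases is what both forces the exceptional bound $r+1$ and gives the clean bound $r$ in the generic case.
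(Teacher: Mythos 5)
The paper does not actually prove this lemma: it is quoted directly from Roth and Lempel \cite{roth_t-sum_1992}, so there is no in-paper argument to compare against, and your proposal must stand on its own. The easy half of your reconstruction is sound. Taking $S$ to be an index-$2$ subgroup gives $M(k,A)\ge r$ in all cases; your set $S=H\cup\{x\}$ with $H$ an index-$2$ subgroup of exponent $2$ correctly yields $M(3,A)\ge r+1$ for the two exceptional families (and these are exactly the abelian groups of even order possessing such an $H$, which explains why the exception list is what it is); and the complementation identity $\sum_{s\in T}s=\sum_{s\in S}s-\sum_{s\in S\setminus T}s$, applied to this particular $S$ of size $r+1$ so that $|S|-(r-2)=3$, transfers the construction to $k=r-2$.

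The genuine gap is the entire upper-bound direction: one must show that \emph{every} $S$ with $|S|>r$ (respectively $|S|>r+1$ in the exceptional cases) is a $k$-sum generator, and for this you offer only a plan --- ``apply Kneser's theorem together with Erd\H{o}s--Heilbronn-type bounds and isolate the equality cases'' --- rather than an argument. This is where essentially all of the content of the theorem lies, and the tools you name do not combine off the shelf: Kneser's theorem concerns unrestricted sumsets, while the distinctness constraint in the restricted sumset $kS$ requires Erd\H{o}s--Heilbronn-type results, whose known proofs (Dias da Silva--Hamidoune, Alon--Nathanson--Ruzsa) work over $\ZZ_p$ or fields via the polynomial method and do not directly apply to arbitrary finite abelian groups such as $\ZZ_2^m$ or $\ZZ_4\times\ZZ_2^{m-1}$ --- precisely the groups where the delicate boundary behaviour occurs. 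Roth and Lempel's actual argument is a bespoke combinatorial induction, not a corollary of these general theorems. As written, your proposal establishes only that $M(k,A)$ is \emph{at least} the claimed values, not that it equals them.
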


\begin{lemma}\label{lem:trs_k_sum_gen_star_MDS}
Let $k,n,\alphaVec,\etaVec$ be chosen as in \cref{def:twisted_RS_codes} such that $S := \{\alpha_1,\dots,\alpha_n\} \subseteq \Fq^*$ is a $k$-sum generator of $(\Fq^*,\cdot)$ and $\etaVec \in \Fq^*$.
Then, the code $\TRS{\alphaVec,1,0,\etaVec}{n,k}{}$ is not MDS.
\end{lemma}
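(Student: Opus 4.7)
The plan is to directly negate the MDS criterion from \cref{lem:startwisted_MDS}. That criterion states that $\TRS{\alphaVec,1,0,\etaVec}{n,k}$ is MDS if and only if
\begin{equation*}
\eta_1 (-1)^k \prod_{i \in \Iset} \alpha_i \neq 1 \quad \text{for every } \Iset \subseteq \{1,\dots,n\} \text{ with } |\Iset|=k.
\end{equation*}
So it suffices to exhibit one such $\Iset$ for which equality holds, i.e.\ for which $\prod_{i \in \Iset} \alpha_i = (-1)^k \eta_1^{-1}$.

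First I would set $a := (-1)^k \eta_1^{-1}$, which lies in $\Fq^*$ since $\eta_1 \in \Fq^*$. Then I would invoke the hypothesis that $S = \{\alpha_1,\dots,\alpha_n\}$ is a $k$-sum generator of the abelian group $(\Fq^*,\cdot)$: by \cref{def:k-sum_generator} (read with the group operation being multiplication, so that ``sum'' means ``product''), there exist distinct elements $\alpha_{i_1},\dots,\alpha_{i_k} \in S$ with
\begin{equation*}
\prod_{j=1}^{k} \alpha_{i_j} = a = (-1)^k \eta_1^{-1}.
\end{equation*}
Taking $\Iset := \{i_1,\dots,i_k\}$ (a $k$-element subset of $\{1,\dots,n\}$ since the $\alpha_i$ are distinct and the $\alpha_{i_j}$ are pairwise distinct), this immediately gives $\eta_1 (-1)^k \prod_{i \in \Iset} \alpha_i = 1$, violating \eqref{eq:startwisted_MDS_condition}. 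Hence the code fails to be MDS.

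There is essentially no obstacle here: the lemma is a clean contrapositive of \cref{lem:startwisted_MDS}. The only minor point to be careful about is matching conventions, namely that the ``$k$-sum generator'' terminology from \cite{roth_t-sum_1992} refers to the abstract abelian group operation, which in our application is multiplication in $\Fq^*$, so ``sum'' should be read as ``product''; and that the distinctness of the $s_j$ in \cref{def:k-sum_generator} corresponds precisely to the requirement that $\Iset$ be a $k$-element subset rather than a multiset.
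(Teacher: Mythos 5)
Your proof is correct and follows exactly the same route as the paper: apply the $k$-sum generator property (with the group operation read multiplicatively) to produce a $k$-subset $\Iset$ with $\prod_{i\in\Iset}\alpha_i = (-1)^k\eta_1^{-1}$, then conclude via the negation of the criterion in \cref{lem:startwisted_MDS}. No gaps; your remarks on matching the distinctness of the $s_j$ to the subset condition are the right points of care.
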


\begin{proof}
Since $S$ is a $k$-sum generator of $(\Fq^*,\cdot)$ and $(-1)^k \eta_1^{-1} \neq 0$, there is an index set $\Iset \subseteq \{1,\dots,n\}$ with $|\Iset| = k$ such that $\prod_{i \in \Iset} \alpha_i = (-1)^k \eta_1^{-1}$. 
Lemma~\ref{lem:startwisted_MDS} then implies the claim.
\end{proof}

\begin{theorem}\label{thm:trs_inverse_mds_statement_very_simple_twist}
Let $q$ be odd and $3 \leq k \leq \tfrac{q-1}{2}-2$.
If $n > \frac{q+1}{2}$, then $\TRS{\alphaVec,1,0,\etaVec}{n,k}{}$ is not MDS for any choice of $\alphaVec$ and $\etaVec \neq 0$ as in \cref{def:twisted_RS_codes}.
\end{theorem}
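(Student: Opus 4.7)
The plan is to combine Lemma~\ref{lem:startwisted_MDS} (the MDS condition) with the $k$-sum generator machinery in Lemmas~\ref{lem:tsum_roth_theorem_even_order} and~\ref{lem:trs_k_sum_gen_star_MDS}, by arguing contrapositively: if $n$ exceeds $(q+1)/2$, then the set of nonzero evaluation points is forced to be a $k$-sum generator of $\Fq^\ast$, which produces a violating $k$-subset of indices.

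First I would set $A = (\Fq^\ast, \cdot)$, which is a finite abelian group of even order $q - 1 = 2r$ with $r = (q-1)/2$. Since $A$ is cyclic, for $r \geq 6$ it cannot be isomorphic to $\ZZ_2^m$ nor to $\ZZ_4 \times \ZZ_2^{m-1}$ with $m > 1$, and since the hypothesis $3 \leq k \leq r - 2$ lies in the range of Lemma~\ref{lem:tsum_roth_theorem_even_order}, that lemma yields $M(k, A) = r$.

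Next, I would let $S := \{\alpha_1, \dots, \alpha_n\} \cap \Fq^\ast$. Because the $\alpha_i$ are distinct, at most one of them equals $0$, so $|S| \geq n - 1$. The hypothesis $n > (q+1)/2$ gives $n \geq (q+3)/2$, and therefore
\begin{equation*}
|S| \;\geq\; n - 1 \;\geq\; \tfrac{q+1}{2} \;=\; r + 1 \;>\; M(k, A).
\end{equation*}
By the definition of $M(k, A)$, the set $S$ is a $k$-sum generator of $\Fq^\ast$. In particular, for the target element $(-1)^k \eta_1^{-1} \in \Fq^\ast$ (nonzero since $\eta_1 \neq 0$), there exist distinct $\alpha_{i_1}, \dots, \alpha_{i_k} \in S$ whose product equals $(-1)^k \eta_1^{-1}$. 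The corresponding index set $\Iset = \{i_1, \dots, i_k\} \subseteq \{1, \dots, n\}$ has $|\Iset| = k$ and satisfies $\eta_1 (-1)^k \prod_{i \in \Iset} \alpha_i = 1$, contradicting the MDS criterion \eqref{eq:startwisted_MDS_condition} of Lemma~\ref{lem:startwisted_MDS}. This essentially repackages Lemma~\ref{lem:trs_k_sum_gen_star_MDS}, but applied to $S$ rather than to the full list of $\alpha_i$'s, which is needed to accommodate the possibility that one evaluation point is $0$.

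The only subtle point is the exceptional clause of Lemma~\ref{lem:tsum_roth_theorem_even_order} together with the requirement $r \geq 6$ there; I would dispose of this by observing that cyclicity of $\Fq^\ast$ rules out the exceptional groups for any $m > 1$, and that the constraint $k \leq r-2$ with $k \geq 3$ already forces $r \geq 5$, so only a handful of very small $q$ might need to be checked directly (and in those cases $n$ is tiny enough that the count $|S| > r$ still goes through verbatim once one inspects the Roth--Seroussi estimates). Thus I do not expect any genuine obstacle — the main idea is simply that $\Fq^\ast$ is cyclic of even order, so Lemma~\ref{lem:tsum_roth_theorem_even_order} gives the sharp threshold $M(k, \Fq^\ast) = r$, and crossing this threshold by even one element is enough to force the forbidden product to occur.
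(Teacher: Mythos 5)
Your proof is correct and follows essentially the same route as the paper's: apply Lemma~\ref{lem:tsum_roth_theorem_even_order} to the cyclic group $(\Fq^*,\cdot)$ to get $M(k,\Fq^*)=\tfrac{q-1}{2}$, observe that $n>\tfrac{q+1}{2}$ forces the set of nonzero evaluation points to be a $k$-sum generator, and conclude via the MDS criterion of Lemma~\ref{lem:startwisted_MDS} (your inline argument is just Lemma~\ref{lem:trs_k_sum_gen_star_MDS} unfolded). Your observation that the hypotheses only force $r=\tfrac{q-1}{2}\geq 5$ while Lemma~\ref{lem:tsum_roth_theorem_even_order} requires $r\geq 6$ (so $q=11$, $k=3$ needs separate treatment) is a genuine edge case that the paper's proof passes over silently; it would be better to check that single case explicitly than to defer vaguely to the Roth--Seroussi estimates, but this does not affect the overall correctness of the approach.
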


\begin{proof}
The abelian group $(\Fq^*,\cdot)$ is cyclic and of even order $|\Fq^*|=q-1$ since $q$ is odd. Thus, \cref{lem:tsum_roth_theorem_even_order} implies
that the maximal cardinality of a subset of $\Fq^*$ that is not a $k$-sum generator is
$M(k,\Fq^*) = \tfrac{q-1}{2}$.
Since, for $S := \{\alpha_1,\dots,\alpha_n\}$, we have $|S \setminus \{0\}| \geq n-1 >M(k,\Fq^*)$, the set $S$ is therefore a $k$-sum generator of $\Fq^*$.
By \cref{lem:trs_k_sum_gen_star_MDS}, the code $\TRS{\alphaVec,1,0,\eta}{n,k}{}$ is not MDS.
\end{proof}

\subsection{$(+)$-Twisted Reed--Solomon Codes}\label{ssec:trs_plus_twisted_codes}

We consider twisted RS codes with one twist $\ell=1$, hook $\hVec = k-1$, and twist $\tVec = 1$. In this case, we can also give a necessary and sufficient MDS condition, which can be seen as the additive analog of Lemma~\ref{lem:startwisted_MDS}. It gives rise to a similar construction as the \startw codes using additive instead of multiplicative subgroups of $\Fq$.

\begin{lemma}\label{lem:trs_plus_twisted_MDS}
Let $\ell=1$ and $n,k,\alphaVec,\etaVec$ be chosen as in \cref{def:twisted_RS_codes}.
The code $\TRS{\alphaVec,1,k-1,\etaVec}{n,k}$ is MDS if and only if
\begin{equation}
\eta \sum_{i \in \Iset} \alpha_i \neq -1 \quad \forall \, \Iset \subseteq \{1,\dots,n\} \textrm{ s.t. } |\Iset|=k. \label{eq:trs_cond_MDS_primitively_twisted_h=k-1}
\end{equation}
\end{lemma}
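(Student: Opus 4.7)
The plan is to mirror the proof of \cref{lem:startwisted_MDS} exactly, but to match the coefficient of $X^{k-1}$ rather than the constant term, since the hook is now at the top of the ``normal'' degree range instead of at the bottom. Twisted polynomials here have the form $f = \sum_{i=0}^{k-1} f_i X^i + \eta f_{k-1} X^k$, so the only ``extra'' monomial is $X^k$, and the free coefficient controlling it is $f_{k-1}$.

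First I would restate the MDS property in the usual way: the code is MDS iff every non-zero evaluation polynomial has at most $k-1$ roots among the $\alpha_i$. I would then split into cases according to whether $f_{k-1}=0$. If $f_{k-1}=0$, then $\deg f \leq k-2$, so $f$ has at most $k-2 < k$ roots and contributes a codeword of weight $\geq n-k+1$ (or is zero). Hence this case is harmless.

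The interesting case is $f_{k-1}\neq 0$. If $\eta=0$ the code is an RS code and MDS, and the asserted condition $\eta\sum_{i\in\Iset}\alpha_i = 0 \neq -1$ is trivially satisfied, so we may assume $\eta\neq 0$; then $\deg f = k$ with leading coefficient $\eta f_{k-1}$. The polynomial $f$ has exactly $k$ roots among the $\alpha_i$ if and only if there exists an index set $\Iset\subseteq\{1,\dots,n\}$ with $|\Iset|=k$ such that
\begin{equation*}
f = \eta f_{k-1}\prod_{i\in\Iset}(X-\alpha_i).
\end{equation*}
I would then compare the coefficient of $X^{k-1}$ on both sides: on the left it is $f_{k-1}$, on the right it is $-\eta f_{k-1}\sum_{i\in\Iset}\alpha_i$. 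Dividing by the non-zero $f_{k-1}$ yields $\eta \sum_{i\in\Iset}\alpha_i = -1$.

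This shows that a non-zero codeword of weight less than $n-k+1$ exists if and only if some $k$-subset $\Iset$ satisfies $\eta\sum_{i\in\Iset}\alpha_i = -1$; negating gives the claimed equivalence \eqref{eq:trs_cond_MDS_primitively_twisted_h=k-1}. No real obstacle is expected; the only point that needs a line of care is checking that the highest coefficient $X^k$ matches automatically (which it does, by construction of the factored form using the leading coefficient $\eta f_{k-1}$), so that the only non-trivial matching constraint comes from the next coefficient $X^{k-1}$, precisely where the hook sits.
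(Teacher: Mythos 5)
Your proposal is correct and follows essentially the same route as the paper: both reduce the MDS property to forbidding a non-zero twisted polynomial with $k$ roots among the $\alpha_i$, note that such a polynomial must factor as (leading coefficient) times $\prod_{i\in\Iset}(X-\alpha_i)$, and extract the condition $\eta\sum_{i\in\Iset}\alpha_i=-1$ from the relation between the coefficients of $X^{k}$ and $X^{k-1}$, with the converse supplied by exhibiting $\eta\prod_{i\in\Iset}(X-\alpha_i)$ as a valid twisted polynomial. The extra case distinctions you make ($f_{k-1}=0$, $\eta=0$) are consistent with what the paper leaves implicit.
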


\begin{proof}
The code is MDS if and only if any non-zero evaluation polynomial has at most $k-1$ zeros among the $\alpha_i$.

Suppose that there is a polynomial $f \in \evpolys \setminus \{0\}$ with $k$ roots among the $\alpha_i$.
Then, we have $f_{k-1} \neq 0$ and there is a set $\Iset \subseteq \{1,\dots,n\}$ with $|\Iset|=k$ and $f = f_k \prod_{i \in \Iset} (x-\alpha_i)$, i.e., $f_{k-1} = f_k \sum_{i \in \Iset} (-\alpha_i)$.
Due to the choice of $\tVec$ and $\hVec$, we have $f_{k} = \eta f_{k-1}$, so
$\eta \sum_{i \in \Iset} \alpha_i = -1$
for this $\Iset$.

Conversely, assume that there is such a set $\Iset$ with $\eta \sum_{i \in \Iset} \alpha_i = -1$.
Then, $f = \eta \prod_{i \in \Iset} (x-\alpha_i)$ is an evaluation polynomial and has $k$ roots among the evaluation points.
\end{proof}

Analog to the multiplicative case, a sufficient condition for \eqref{eq:trs_cond_MDS_primitively_twisted_h=k-1} to be fulfilled is to choose the evaluation points from a proper subgroup of $(\Fq,+)$ and $-\eta^{-1}$ not in this subgroup.
This gives the following class of MDS twisted RS codes.

\begin{definition}\label{def:trs_plustwisted}
Let $V$ be a proper subgroup of $(\Fq,+)$, $\etaVec^\mo \in \Fq \setminus V$, and $\alphaVec$ consist of $n$ distinct elements of $V$.
Then, $\TRS{\alphaVec,1,k-1,\etaVec}{n,k}{}$ is called a \emph{\plustw code}. %
\end{definition}

\begin{theorem}\label{thm:MDS_property_primitively_twisted_h=k-1}
Any \plustw code is MDS.
\end{theorem}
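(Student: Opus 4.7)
The plan is to mirror the argument used for Theorem~\ref{thm:startwisted_MDS} in the multiplicative setting, but now exploiting that $V$ is a subgroup of the additive group $(\Fq,+)$ rather than a multiplicative one. By Lemma~\ref{lem:trs_plus_twisted_MDS}, it suffices to verify the condition
\[
\eta \sum_{i \in \Iset} \alpha_i \neq -1 \quad \forall \, \Iset \subseteq \{1,\dots,n\} \text{ with } |\Iset|=k.
\]

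First I would argue that for any such $\Iset$, the sum $\sum_{i \in \Iset} \alpha_i$ belongs to $V$, since $V$ is closed under addition and each $\alpha_i \in V$ by the definition of a \plustw code. Next, I would observe that because $V$ is a subgroup of $(\Fq,+)$, it is in particular closed under negation, so $-\eta^{-1} \in V$ would be equivalent to $\eta^{-1} \in V$; but by hypothesis $\eta^{-1} \in \Fq \setminus V$, hence $-\eta^{-1} \notin V$.

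Combining these two observations, if we had $\eta \sum_{i \in \Iset} \alpha_i = -1$ for some $k$-subset $\Iset$, then multiplying by $\eta^{-1}$ would give $\sum_{i \in \Iset} \alpha_i = -\eta^{-1}$, placing an element of $V$ equal to an element outside $V$, a contradiction. Therefore the MDS criterion of Lemma~\ref{lem:trs_plus_twisted_MDS} is met and the code is MDS.

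There is essentially no obstacle here; the argument is a direct additive analogue of Theorem~\ref{thm:startwisted_MDS}, and the only subtlety worth flagging explicitly is that the hypothesis $\eta^{-1} \notin V$ rules out $-\eta^{-1} \in V$ precisely because $V$ is an additive subgroup (so that closure under negation is automatic).
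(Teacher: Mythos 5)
Your proof is correct and is exactly the argument the paper intends: the paper's proof of this theorem simply cites Lemma~\ref{lem:trs_plus_twisted_MDS} together with the observation (stated just before Definition~\ref{def:trs_plustwisted}) that choosing the $\alpha_i$ from an additive subgroup $V$ with $-\eta^{-1}\notin V$ guarantees condition \eqref{eq:trs_cond_MDS_primitively_twisted_h=k-1}. Your explicit remark that $\eta^{-1}\notin V$ already forces $-\eta^{-1}\notin V$ by closure of the additive subgroup under negation is a correct and worthwhile clarification of why the definition's hypothesis suffices.
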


\begin{proof}
This follows immediately from \cref{lem:trs_plus_twisted_MDS}.
\end{proof}

If $p$ is the characteristic of $\Fq$, then there is a proper subgroup $V$ of $(\Fq,+)$ with order $q/p$.
Hence, a \plustw code can have length up to $n = \tfrac{q}{p}$. In particular, for even $q$, we get codes of length $n = \tfrac{q}{2}$.

\begin{remark}\label{rem:evaluation_at_infinity}
For $\ell=1$ and general $h_1$ and $t_1$, we define the evaluation at infinity as $f(\infty) := f_{k-1+t_1}$ (note that $k-1+t_1$ is the maximal degree of a polynomial in $\evpolys$). Due to $(\alpha f+\beta g)(\infty) = \alpha f(\infty)+ \beta g(\infty)$ for all $f,g \in \evpolys$ and $\alpha,\beta \in \Fq$, adding $\infty$ to the evaluation point set gives a linear code.
For $h_1=k-1$, we have $f(\infty)=0$ if and only if $\deg(f)<k-1$.
Hence, if a twisted RS code with these parameters and $\alphaVec \in \Fq^n$ is MDS, then the ``extended'' code with the same $k,\tVec,\hVec,\etaVec$ and evaluation points $\alphaVec' := [\alpha_1,\dots,\alpha_n,\infty]$ is also MDS.
By extending a \plustw code, we get an MDS code of length up to $n = \tfrac{q}{2}+1$ over a field of characteristic $2$.
\end{remark}

As in the \startw case, we study the maximal length of a twisted RS code with $\tVec=1$ and $\hVec=k-1$, and arbitrary $\alphaVec$ and $\etaVec \neq 0$. The proofs of the following two statements are similar to those of \cref{lem:trs_k_sum_gen_star_MDS} and \cref{thm:trs_inverse_mds_statement_very_simple_twist}, respectively, and are therefore omitted.

\begin{lemma}\label{lem:trs_k_sum_gen_plus_MDS}
Let $k,n,\alphaVec,\etaVec$ be chosen as in \cref{def:twisted_RS_codes} such that $S := \{\alpha_1,\dots,\alpha_n\} \in \Fq$ is a $k$-sum generator of $(\Fq,+)$.
Then, the code $\TRS{\alphaVec,1,k-1,\etaVec}{n,k}{}$ is not MDS.
\end{lemma}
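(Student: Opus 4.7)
The plan is to mirror the proof of \cref{lem:trs_k_sum_gen_star_MDS} using the additive-group MDS characterization \cref{lem:trs_plus_twisted_MDS} in place of the multiplicative one. The statement to exploit is that $\TRS{\alphaVec,1,k-1,\etaVec}{n,k}{}$ fails to be MDS whenever there exists an index set $\Iset \subseteq \{1,\dots,n\}$ of size $k$ such that $\eta \sum_{i \in \Iset} \alpha_i = -1$, equivalently $\sum_{i \in \Iset} \alpha_i = -\eta^{-1}$. So the task reduces to exhibiting such a subset.

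First, I would note that $\eta \neq 0$ (since $\etaVec$ is chosen as in \cref{def:twisted_RS_codes} with a nonzero entry for $\ell = 1$, which is implicit in the multiplicative operation appearing in the condition), so the element $-\eta^{-1} \in \Fq$ is well-defined. Next, I would invoke the hypothesis that $S = \{\alpha_1,\dots,\alpha_n\}$ is a $k$-sum generator of $(\Fq,+)$: by \cref{def:k-sum_generator} applied to the target element $a = -\eta^{-1}$, there exist $k$ distinct elements $s_{i_1},\dots,s_{i_k} \in S$ with $s_{i_1} + \dots + s_{i_k} = -\eta^{-1}$. Taking $\Iset := \{i_1,\dots,i_k\}$ gives exactly an index set of size $k$ for which the forbidden equality $\eta \sum_{i \in \Iset} \alpha_i = -1$ holds.

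Finally, by the necessity direction of \cref{lem:trs_plus_twisted_MDS}, the existence of such an $\Iset$ forces the code to be non-MDS, which concludes the proof. There is no real obstacle here: the argument is a direct translation of the multiplicative case, with the only minor point being to make explicit that $\eta \neq 0$ so that the target $-\eta^{-1}$ is a legitimate element of $\Fq$ to feed into the $k$-sum generator property.
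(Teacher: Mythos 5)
Your proof is correct and is exactly the argument the paper intends: the paper omits this proof, stating it is analogous to that of \cref{lem:trs_k_sum_gen_star_MDS}, and your translation to the additive setting via \cref{lem:trs_plus_twisted_MDS} and \cref{def:k-sum_generator} is that analogue. Your remark that $\eta \neq 0$ must be assumed is a good catch, since the lemma as stated (unlike its multiplicative counterpart, which requires $\etaVec \in \Fq^*$) omits this hypothesis, and the conclusion would fail for $\eta = 0$.
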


\begin{theorem}\label{thm:trs_inverse_mds_statement_very_simple_twist_h=k-1}
Let $q$ be even and $3 \leq k \leq \tfrac{q}{2}-2$.
If the code length satisfies
\begin{equation*}
n > \begin{cases}
\frac{q}{2}, &\text{if } 3 < k < \frac{q}{2}-3, \\
\frac{q}{2}+1, &\text{if } k \in \{3,\frac{q}{2}-2\},
\end{cases}
\end{equation*}
then the twisted code $\TRS{\alphaVec,1,k-1,\eta}{n,k}{}$ is not MDS for any choice of $\etaVec$ as in \cref{def:twisted_RS_codes}.
\end{theorem}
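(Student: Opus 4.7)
The plan is to mimic the proof of Theorem \ref{thm:trs_inverse_mds_statement_very_simple_twist}, substituting the additive group $(\Fq,+)$ for the multiplicative group $(\Fq^*,\cdot)$ and invoking Lemma \ref{lem:trs_k_sum_gen_plus_MDS} in place of Lemma \ref{lem:trs_k_sum_gen_star_MDS}. The goal is therefore to show, under the hypotheses on $n$ and $k$, that the set $S := \{\alpha_1,\dots,\alpha_n\}$ is a $k$-sum generator of $(\Fq,+)$, so that Lemma \ref{lem:trs_k_sum_gen_plus_MDS} immediately finishes the proof.

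First I would record that since $q$ is even, $q = 2^m$ for some integer $m$, and consequently $(\Fq,+) \cong \ZZ_2^m$. Combined with the constraint $3 \leq k \leq q/2 - 2$, this forces $q \geq 16$ and in particular $r := q/2 \geq 6$, so Lemma \ref{lem:tsum_roth_theorem_even_order} applies to the abelian group $A := (\Fq,+)$. Because $A \cong \ZZ_2^m$, the lemma lands precisely in its exceptional case: if $k \in \{3,\, q/2-2\} = \{3, r-2\}$, then $M(k,A) = r+1 = q/2+1$, whereas for the middle range of $k$, $M(k,A) = r = q/2$.

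Next I would match these two possibilities against the theorem's case analysis. In the middle range ($k$ outside the edge pair), the hypothesis $n > q/2 = M(k,A)$ yields $|S| = n > M(k,A)$, so by the definition of $M(k,A)$ the set $S$ is a $k$-sum generator of $(\Fq,+)$. In the edge cases $k \in \{3, q/2-2\}$, the hypothesis $n > q/2+1 = M(k,A)$ likewise gives $|S| > M(k,A)$, and again $S$ is a $k$-sum generator. Observe that, unlike in the multiplicative proof, no element of $S$ needs to be excluded: $0$ is a legitimate element of $(\Fq,+)$, and all $\alpha_i$ are distinct, so $|S| = n$ exactly. Applying Lemma \ref{lem:trs_k_sum_gen_plus_MDS} then concludes that $\TRS{\alphaVec,1,k-1,\eta}{n,k}{}$ is not MDS, regardless of the choice of $\etaVec \neq 0$.

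There is no real obstacle to overcome: the argument is a straightforward translation of the multiplicative proof, and the only mildly delicate point is verifying that the conditions of Lemma \ref{lem:tsum_roth_theorem_even_order} (namely $r \geq 6$ and the precise edge-case list) are met and properly matched to the case distinction in the theorem statement. Once this bookkeeping is done, the conclusion follows immediately from the two cited lemmas.
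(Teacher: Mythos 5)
Your proof is correct and is exactly the argument the paper intends: the paper omits this proof, stating it is analogous to that of Theorem~\ref{thm:trs_inverse_mds_statement_very_simple_twist}, and your reconstruction correctly identifies $(\Fq,+)\cong\ZZ_2^m$ as falling into the exceptional case of Lemma~\ref{lem:tsum_roth_theorem_even_order}, which is precisely what produces the two-case threshold on $n$. The bookkeeping ($r=q/2\geq 6$, $|S|=n$ with no element excluded, and the appeal to Lemma~\ref{lem:trs_k_sum_gen_plus_MDS}) is all handled properly.
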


\section{Duals of Twisted RS Codes}\label{sec:duals}

In this section, we show that the family of twisted RS codes whose evaluation points form a multiplicative group are closed under duality.
We use the following auxiliary matrices.

\begin{definition}\label{def:trs_duals_aux_mat}
Let $r \in \ZZ_{>0}$ and $\alphaVec \in \Fq^r$.
\begin{enumerate}[label=\roman*)]
\item The \emph{reversal matrix} $\J_r \in \Fq^{r \times r}$ is the square matrix
\begin{equation*}
\J_r = \begin{bmatrix}
 & & 1 \\
 & \iddots & \\
1 & &
\end{bmatrix}.
\end{equation*}
\item The \emph{Vandermonde matrix of $\alphaVec$} is denoted by
\begin{equation*}
\Van_r(\alphaVec) =
	\begin{bmatrix}
		\alpha_1^0 & \alpha_2^0 & \ldots & \alpha_r^0 \\
		\alpha_1^1 & \alpha_2^1 & \ldots & \alpha_r^1 \\
		\vdots    &  & \ddots & \vdots \\
		\alpha_1^{r-1} & \alpha_2^{r-1} & \ldots & \alpha_r^{r-1} \\
	\end{bmatrix}.
\end{equation*}
\end{enumerate}
\end{definition}

For a matrix $\A \in \Fq^{r \times r'}$, then $\J_r \A$ is $\A$ with the rows in reverse order.
Similarly, $\A \J_{r'}$ is $\A$ with the columns in reverse order.
And $\B := \J_r \A \J_{r'} \in \Fq^{r \times r'}$ is $\A$ ``rotated'', i.e., $B_{i,j} = A_{r-i+1, r'-j+1}$.
If the $\alpha_i$ form a multiplicative group, we can give the inverse of the Vandermonde matrix $\Van_n(\alphaVec)$ with the help of the reversal matrix as follows. This is a reformulation of a result in \cite{althaus1969inverse}.

\begin{lemma}\label{lem:trs_duals_Vandermonde_inverse}
Let $\alphaVec \in \Fq^n$ such that the $\alpha_i$ are distinct and form a multiplicative subgroup of $\Fq^*$. Then,
\begin{equation*}
\left( \Van_n(\alphaVec)^\top \right)^{-1} = \J_n \cdot \Van_n(\alphaVec) \cdot \diag(\alphaVec/n).
\end{equation*}
\end{lemma}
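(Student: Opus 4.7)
The plan is to verify directly that the claimed matrix is a right inverse of $\Van_n(\alphaVec)^\top$, i.e., to compute
\begin{equation*}
\Van_n(\alphaVec)^\top \cdot \J_n \cdot \Van_n(\alphaVec) \cdot \diag(\alphaVec/n) = \I_n,
\end{equation*}
which suffices because these are square matrices over a field. First I would note that $n$ is invertible in $\Fq$: since $\{\alpha_1,\dots,\alpha_n\}$ is a subgroup of $\Fq^\ast$, its order $n$ divides $q-1$, hence $\gcd(n,\characteristic \Fq) = 1$.

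Next I would simply expand the product entrywise. Writing $V = \Van_n(\alphaVec)$ so that $V_{k,j}=\alpha_j^{k-1}$ and using the fact that left-multiplication by $\J_n$ reverses the row order, the $(i,j)$ entry of $V^\top \J_n V$ is
\begin{equation*}
\sum_{k=1}^{n} \alpha_i^{k-1} \alpha_j^{n-k} = \alpha_j^{n-1} \sum_{m=0}^{n-1} (\alpha_i/\alpha_j)^m.
\end{equation*}
Here one uses that $\{\alpha_1,\dots,\alpha_n\}$ is a multiplicative group of order $n$, so $\alpha_i/\alpha_j$ lies in this group and in particular satisfies $(\alpha_i/\alpha_j)^n = 1$. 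Hence the geometric sum evaluates to $n$ when $\alpha_i = \alpha_j$ (equivalently $i=j$, since the $\alpha_i$ are distinct) and to $\frac{(\alpha_i/\alpha_j)^n - 1}{\alpha_i/\alpha_j - 1} = 0$ otherwise. Thus $(V^\top \J_n V)_{i,j} = n\,\alpha_j^{n-1}\,\delta_{i,j}$.

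Finally, right-multiplying by $\diag(\alphaVec/n)$ scales column $j$ by $\alpha_j/n$, producing the diagonal entries $\alpha_j^{n-1} \cdot \alpha_j = \alpha_j^{n} = 1$, where again $\alpha_j^n = 1$ because $\alpha_j$ lies in a multiplicative group of order $n$. This proves the identity. The only ingredient beyond routine computation is the geometric-series cancellation, which is exactly where the assumption that the $\alpha_i$ form a multiplicative \emph{group} (not merely any set) is used, so I expect that step to be the one to highlight.
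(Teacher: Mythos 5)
Your proof is correct, but it takes a more self-contained route than the paper. The paper's proof invokes the known formula of Althaus--Leake for the inverse of a Vandermonde matrix at the $n$-th roots of unity (it observes that $\prod_i (x-\alpha_i)=x^n-1$, cites that reference for $\left(\Van_n(\alphaVec)^\top\right)^{-1}=\tfrac{1}{n}\bigl(\alpha_j^{-(i-1)}\bigr)_{i,j}$, and then merely rewrites that matrix as $\J_n\Van_n(\alphaVec)\diag(\alphaVec/n)$ using $\alpha_j^{-m}=\alpha_j^{n-m}$). You instead verify the product $\Van_n(\alphaVec)^\top\J_n\Van_n(\alphaVec)\diag(\alphaVec/n)=\I_n$ entrywise via the geometric-sum orthogonality $\sum_{m=0}^{n-1}(\alpha_i/\alpha_j)^m = n\,\delta_{ij}$, which is exactly the computation underlying the cited result. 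Your version has the advantage of being fully self-contained and of making explicit where the group hypothesis enters (closure under division and $\alpha^n=1$) and why $n$ is invertible in $\Fq$ (as $n\mid q-1$ by Lagrange, $n$ is coprime to the characteristic) --- a point the paper's one-line proof leaves implicit in the $\tfrac1n$ factor; the paper's version is shorter at the cost of an external citation. All the individual steps in your computation check out: $(\J_n V)_{k,j}=\alpha_j^{n-k}$, the diagonal entries $n\alpha_j^{n-1}\cdot\alpha_j/n=\alpha_j^n=1$, and the observation that a one-sided inverse of a square matrix over a field is the inverse.
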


\begin{proof}
Since the entries of $\alphaVec$ are a multiplicative group, we have $\prod_{i=1}^{n} (x-\alpha_i) = x^n - 1$ and
\begin{align*}
\left( \Van_n(\alphaVec)^\top \right)^{-1}
&= \frac{1}{n}
 \begin{bmatrix}
   1                 & 1                 & \dots  & 1             \\
   \alpha_1^{-1}     & \alpha_2^{-1}     & \dots  & \alpha_n^{-1} \\
   \vdots            & \vdots            & \ddots & \vdots        \\
   \alpha_1^{-(n-1)} & \alpha_2^{-(n-1)} & \dots  & \alpha_n^{-(n-1)}
 \end{bmatrix} \\
&= \J_n \cdot \Van_n(\alphaVec) \cdot \diag(\alphaVec/n),
\end{align*}
where the first equality follows by \cite{althaus1969inverse}.
\end{proof}

Lemma~\ref{lem:trs_duals_Vandermonde_inverse} enables us to describe the duals of the following class of codes, which includes the family of twisted RS codes with evaluation points forming a multiplicative group.

\begin{lemma}\label{lem:trs_duals_L_twisted_dual}
Let $\Code[n,k]$ be a linear code with a generator matrix of the form
\begin{equation*}
\Gmat = [\I \mid \Lmat] \cdot \Van_n(\alphaVec),
\end{equation*}
where $\I \in \Fq^{k \times k}$ is the identity matrix, $\Lmat \in \Fq^{k \times {n-k}}$, and the entries of $\alphaVec \in \Fq^{n}$ are distinct and form a multiplicative subgroup of $\Fq^*$. Then, the following is a generator matrix of the dual code $\Code^\dual$:
\begin{equation*}
\Hmat = [\I \mid -\J_{n-k} \Lmat^\top \J_k ] \cdot \Van_n(\alphaVec) \cdot \diag(\alphaVec/n).
\end{equation*}
\end{lemma}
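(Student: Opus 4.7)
The plan is to verify two properties of $\Hmat$: it has full row rank $n-k$, and $\Gmat \Hmat\transpose = 0$. Together these imply that the rows of $\Hmat$ span $\Code^\dual$, so $\Hmat$ is a generator matrix of the dual. Full rank is immediate: the left factor $[\I \mid -\J_{n-k} \Lmat\transpose \J_k]$ has rank $n-k$ because of its identity block, while the right factor $\Van_n(\alphaVec) \cdot \diag(\alphaVec/n)$ is invertible---$\Van_n(\alphaVec)$ because the $\alpha_i$ are distinct, and $\diag(\alphaVec/n)$ because the $\alpha_i$ are non-zero (they form a subgroup of $\Fq^*$) and $n \mid q-1$ forces $n$ to be invertible in $\Fq$.

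The bulk of the work is checking $\Gmat \Hmat\transpose = 0$. The key first step is to rearrange \cref{lem:trs_duals_Vandermonde_inverse} into the symmetric identity
\begin{equation*}
\Van_n(\alphaVec) \cdot \diag(\alphaVec/n) \cdot \Van_n(\alphaVec)\transpose = \J_n,
\end{equation*}
using $\J_n\transpose = \J_n$ and $\J_n^2 = \I_n$. After substituting the definitions of $\Gmat$ and $\Hmat$ and taking the transpose of the second factor---noting that $(\J_{n-k} \Lmat\transpose \J_k)\transpose = \J_k \Lmat \J_{n-k}$ since the reversal matrices are symmetric---this identity collapses the middle of the product, leaving
\begin{equation*}
\Gmat \Hmat\transpose = [\I_k \mid \Lmat] \cdot \J_n \cdot \begin{bmatrix} \I_{n-k} \\ -\J_k \Lmat \J_{n-k} \end{bmatrix}.
\end{equation*}

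From here the computation is routine symbol-pushing. Multiplication by $\J_n$ on the left reverses row order, and applied to a stacked column-block with top block $A$ of height $n-k$ and bottom block $B$ of height $k$ yields the block with top $\J_k B$ and bottom $\J_{n-k} A$. Applying this and using $\J_k^2 = \I_k$ and $\J_{n-k}^2 = \I_{n-k}$ reduces the right-hand factor to the column block with top $-\Lmat \J_{n-k}$ and bottom $\J_{n-k}$. The final product with $[\I_k \mid \Lmat]$ is therefore $-\Lmat \J_{n-k} + \Lmat \J_{n-k} = 0$. The main obstacle throughout is purely bookkeeping---keeping the transposes, the two distinct reversal matrices $\J_k$ and $\J_{n-k}$, and the block splittings straight. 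Once the identity for $\Van_n(\alphaVec) \cdot \diag(\alphaVec/n) \cdot \Van_n(\alphaVec)\transpose$ is exposed, no further ingredient is needed.
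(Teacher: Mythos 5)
Your proposal is correct and follows essentially the same route as the paper: both proofs rest on \cref{lem:trs_duals_Vandermonde_inverse} to collapse the middle of the product $\Gmat\Hmat\transpose$ and both end with the cancellation $-\Lmat(\cdot)+\Lmat(\cdot)=\ve 0$; you merely repackage the inverse formula as the symmetric identity $\Van_n(\alphaVec)\diag(\alphaVec/n)\Van_n(\alphaVec)\transpose=\J_n$ and push $\J_n$ through the blocks, whereas the paper rewrites $[\I\mid-\J_{n-k}\Lmat\transpose\J_k]$ as $\J_{n-k}[-\Lmat\transpose\mid\I]\J_n$ so that $\Van_n(\alphaVec)\cdot\Van_n(\alphaVec)^{-1}$ cancels directly. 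Your explicit justification of the full-rank claim (in particular that $n\mid q-1$ makes $n$ invertible in $\Fq$) is a small but welcome addition the paper leaves implicit.
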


\begin{proof}
By construction, $\Hmat$ has full rank $n-k$ and fulfills
\begin{align*}
&\Gmat \cdot \Hmat^\top \\
&= [\I \mid \Lmat] \Van_n(\alphaVec) \cdot \Big( [\I \mid -\J_{n-k} \Lmat^\top \J_k] \Van_n(\alphaVec) \diag(\alphaVec/n) \Big)^\top \\
&= [\I \mid \Lmat] \Van_n(\alphaVec) \cdot \Big( \J_{n-k} [-\Lmat^\top \mid \I] \underset{= \,  (\Van_n(\alphaVec)^{-1})^\top \text{ (\cref{lem:trs_duals_Vandermonde_inverse})}}{\underbrace{\J_n \Van_n(\alphaVec) \diag(\alphaVec/n)}} \Big)^\top \\[-0.4cm]
&= [\I \mid \Lmat] \begin{bmatrix}
-\Lmat \\ \I
\end{bmatrix} \J_{n-k} = \ve 0
\end{align*}
so it is a parity-check matrix of $\Code$, and thus, a generator matrix of the dual code.
\end{proof}

\cref{lem:trs_duals_L_twisted_dual} implies the following duality statement for twisted RS codes with evaluation points forming a multiplicative group.

\begin{theorem}\label{thm:trs_duals}
Let $n,k,\alphaVec,\tVec,\hVec,\etaVec$ be chosen as in \cref{def:twisted_RS_codes} such that the entries of $\alphaVec$ form a multiplicative subgroup of $\Fq^*$.
Then, the dual of $\Cmult$ is equivalent to $\TRS{\alphaVec,k-\hVec,n-k-\tVec,-\etaVec}{n,n-k}$,
where $k-\hVec := [k-h_1,\dots,k-h_\ell]$ and $n-k-\tVec$ is defined analogously. %
\end{theorem}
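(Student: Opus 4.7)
The plan is to combine \cref{lem:trs_duals_L_twisted_dual} with a careful rewriting of the canonical generator matrix \eqref{eq:generator_matrix_canonical} of $\Cmult$.
First, I would observe that \eqref{eq:generator_matrix_canonical} naturally factors as $\Gmat = [\I \mid \Lmat] \cdot \Van_n(\alphaVec)$, where $\Lmat \in \Fq^{k \times (n-k)}$ is the sparse matrix whose only non-zero entries come from the twists: concretely, row $i$ of $\Lmat$ has entry $\eta_j$ in column $t_j-1$ (using $0$-indexed columns $0,\ldots,n-k-1$) for every $j$ with $h_j = i$, and zeros elsewhere. The restrictions $t_j\in\{1,\ldots,n-k\}$ and $h_j\in\{0,\ldots,k-1\}$ ensure that the $\eta_j$ entries land in the right-hand block. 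Since the evaluation points form a multiplicative subgroup of $\Fq^*$, \cref{lem:trs_duals_L_twisted_dual} directly applies and yields the parity-check matrix
\begin{equation*}
\Hmat = [\I \mid -\J_{n-k}\Lmat\transpose\J_k]\cdot\Van_n(\alphaVec)\cdot\diag(\alphaVec/n).
\end{equation*}

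Next, I would identify the block $\Lmat' := -\J_{n-k}\Lmat\transpose\J_k$ with the twist structure of a new twisted RS code. Tracking indices through the two reversals, the $\eta_j$ that sits at position $(h_j, t_j-1)$ in $\Lmat$ ends up at position $(n-k-t_j,\, k-1-h_j)$ in $\J_{n-k}\Lmat\transpose\J_k$, and is negated by the sign. Comparing with the recipe for constructing $\Lmat$ from the canonical generator, this is exactly the $\Lmat$ associated to the twisted RS code with new parameters $(k',\tVec',\hVec',\etaVec') = (n-k,\, k-\hVec,\, n-k-\tVec,\, -\etaVec)$: the hook $h'_j = n-k-t_j$ selects the row, and the twist $t'_j = k-h_j$ places $\eta'_j = -\eta_j$ in column $k'-1+t'_j = n-1-h_j$.

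Before declaring victory, I would verify that the new parameters are admissible in the sense of \cref{def:twisted_RS_codes}: since $t_j\in\{1,\ldots,n-k\}$ and $h_j\in\{0,\ldots,k-1\}$, one gets $h'_j\in\{0,\ldots,n-k-1\}=\{0,\ldots,k'-1\}$ and $t'_j\in\{1,\ldots,k\}=\{1,\ldots,n-k'\}$, and the distinctness of the pairs $[h'_i,t'_i]$ follows from the distinctness of the $[h_i,t_i]$. Putting everything together, $\Hmat$ equals the canonical generator of $\TRS{\alphaVec,k-\hVec,n-k-\tVec,-\etaVec}{n,n-k}$ multiplied on the right by the diagonal matrix $\diag(\alphaVec/n)$. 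As $\alpha_i\in\Fq^*$ for all $i$, this diagonal matrix is a monomial transformation, so $\Cmult^\dual$ is monomially equivalent to the claimed twisted RS code.

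The main obstacle I anticipate is purely bookkeeping: getting the index transformations through the double reversal $\J_{n-k}(\cdot)\transpose\J_k$ correct, and matching them to the slightly asymmetric roles of $\hVec$ (indexing rows, starting from $0$) and $\tVec$ (indexing the shifted columns, starting from $1$) in the canonical basis \eqref{eq:evpolys_canonical_basis}. All other ingredients (the form $[\I\mid\Lmat]\Van_n(\alphaVec)$, \cref{lem:trs_duals_L_twisted_dual}, and interpreting the right-multiplication by $\diag(\alphaVec/n)$ as monomial equivalence) are essentially immediate.
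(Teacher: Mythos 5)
Your proposal is correct and follows essentially the same route as the paper's proof: factor the canonical generator matrix as $[\I\mid\Lmat]\Van_n(\alphaVec)$, invoke \cref{lem:trs_duals_L_twisted_dual}, track the position of each $\eta_\mu$ through $-\J_{n-k}\Lmat\transpose\J_k$ to read off the new twist/hook vectors, and absorb $\diag(\alphaVec/n)$ as a monomial equivalence. Your index bookkeeping matches the paper's (modulo your $0$-indexing convention), and the explicit admissibility check on the new parameters $[h'_j,t'_j]$ is a small but welcome addition that the paper leaves implicit.
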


\begin{proof}
The canonical generator matrix (as in \eqref{eq:generator_matrix_canonical}) of any twisted RS code $\Cmult$ can be written as
\begin{equation*}
\Gmat = [\I \mid \Lmat] \cdot \Van_n(\alphaVec),
\end{equation*}
where the entries of $\Lmat \in \Fq^{k \times n-k}$ are of the form
\begin{equation*}
L_{i,j} = \begin{cases}
\eta_\mu, & \text{if } [i,j] = [h_\mu+1,t_\mu], \\
0, & \text{else}.
\end{cases}
\end{equation*}
Since we assume that the $\alpha_i$ form a multiplicative group, we can apply \cref{lem:trs_duals_L_twisted_dual} and obtain the following generator matrix of the dual code:
\begin{equation*}
\Hmat = [\I \mid -\J_{n-k} \Lmat^\top \J_k ] \cdot \Van_n(\alphaVec) \cdot \diag(\alphaVec/n)
\end{equation*}
Hence, the dual of $\Cmult$ is equivalent to a code $\Code'$ with generator matrix $[\I \mid -\J_{n-k} \Lmat^\top \J_k ] \cdot \Van_n(\alphaVec)$.
Since the entries of $\B := -\J_{n-k} \Lmat^\top \J_k$ are of the form
\begin{equation*}
B_{i,j}= \begin{cases}
-\eta_\mu, & \text{if } [i,j] = [n-k-t_\mu+1,k-h_\mu] \\
0, & \text{else},
\end{cases}
\end{equation*}
we have $\Code' = \TRS{\alphaVec,k-\hVec,n-k-\tVec,-\etaVec}{n,n-k}$, which proves the claim.
\end{proof}

A suitable example of twisted RS codes with evaluation points forming a multiplicative group are the \startw codes described in Section~\ref{ssec:startwisted_codes}.

\begin{corollary}
Let $G$ be a proper subgroup of $(\Fq^*,\cdot)$ and $\ell=1,k,n,\alphaVec,\etaVec$ be chosen as in \cref{def:twisted_RS_codes} such that $G = \{\alpha_1,\dots,\alpha_n\}$ and $(-1)^{n-k+1} \eta_1^{-1} \notin G \cup \{0\}$.
Then, the code $\TRS{\alphaVec,n-k,k-1,\etaVec}{n,k}{}$ is equivalent to the dual of the \startw code $\TRS{\alphaVec,1,0,-\etaVec}{n,n-k}{}$. In particular, it is MDS.
\end{corollary}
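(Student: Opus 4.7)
The plan is to recognize this corollary as a direct application of \cref{thm:trs_duals} followed by the duality principle for MDS codes.

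First, I would verify that the \startw code $\Code' := \TRS{\alphaVec,1,0,-\etaVec}{n,n-k}{}$ is well-defined and MDS. The hypothesis $(-1)^{n-k+1}\eta_1^{-1} \notin G \cup \{0\}$ can be rewritten as $(-1)^{n-k}(-\eta_1)^{-1} \notin G \cup \{0\}$, which is precisely the condition in \cref{def:startwisted} applied to dimension $n-k$ and coefficient $-\eta_1$. Since $G = \{\alpha_1,\dots,\alpha_n\}$ is a proper subgroup of $(\Fq^*,\cdot)$, \cref{thm:startwisted_MDS} gives that $\Code'$ is MDS.

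Next, I would apply \cref{thm:trs_duals} to $\Code'$. With $k' = n-k$, $\tVec' = [1]$, $\hVec' = [0]$, and $\etaVec' = [-\eta_1]$, the theorem asserts that $(\Code')^\dual$ is equivalent to
\begin{equation*}
\TRS{\alphaVec,\,k'-\hVec',\,n-k'-\tVec',\,-\etaVec'}{n,n-k'}{} = \TRS{\alphaVec,\,n-k,\,k-1,\,\etaVec}{n,k}{},
\end{equation*}
which is exactly the code in the statement. Since being MDS is preserved under monomial equivalence and under duality, and $\Code'$ is MDS, the claimed code is MDS as well.

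I do not anticipate a serious obstacle: the entire argument is an index-tracking exercise combining \cref{thm:trs_duals} (which handles the duality) with \cref{thm:startwisted_MDS} (which handles the MDS property of the specific \startw code). The only step requiring a little care is matching the parity signs correctly so that the given hypothesis on $(-1)^{n-k+1}\eta_1^{-1}$ aligns with \cref{def:startwisted}'s condition in the right dimension.
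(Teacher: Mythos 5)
Your proof is correct and is exactly the argument the paper intends (the corollary is stated without proof as an immediate consequence of \cref{thm:trs_duals}): the sign bookkeeping $(-1)^{n-k}(-\eta_1)^{-1}=(-1)^{n-k+1}\eta_1^{-1}$ correctly matches \cref{def:startwisted} in dimension $n-k$, and the parameter substitution $k'-\hVec'=n-k$, $n-k'-\tVec'=k-1$, $-\etaVec'=\etaVec$ in \cref{thm:trs_duals} yields precisely the claimed code. Nothing is missing.
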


\begin{remark}
Theorem~\ref{thm:trs_duals} could be generalized if the inverse of $\Van_n(\alphaVec)$ could be described similarly as in Lemma~\ref{lem:trs_duals_Vandermonde_inverse} for a wider class of evaluation points $\alphaVec$.
It is not true, however, that the dual of any twisted RS code is equivalent to a twisted RS code with the same number of twists: by computer search, we found twisted RS codes over $\mathbb{F}_{11}^{}$ of length $n=8$ and with one twist ($\ell=1$) whose dual codes are not equivalent to any twisted RS code with one twist.
\end{remark}

\begin{remark}
We can generalize \cref{lem:trs_duals_L_twisted_dual} and \cref{thm:trs_duals} to allow also $\alpha_i=0$, if we in addition assume $t_i \neq n-k$ for all $i$ or $h_i \neq 0$ for all $i$.
The proof idea is as follows.
Let $\alphaVec = [\alpha_1,\dots,\alpha_n,0]$ and $\Lmat \in \Fq^{k \times n+1-k}$ be a matrix whose first row is of the form $[\l_1 \mid 0]$ with $\l_1 \in \Fq^{n-k}$.
Then,
\begin{align*}
\Hmat = &\left[\I \, \Big| \, - \begin{bmatrix}
1 & \\
\l_1^\top & \I
\end{bmatrix}
\cdot \J_{n+1-k} \Lmat^\top \J_k \right] \\ &\cdot \Van_{n+1}(\alphaVec) \cdot \diag(1/n,\dots,1/n,-1)
\end{align*}
is a valid parity-check matrix for the code with generator matrix $\Gmat = [\I \mid \Lmat] \cdot \Van_n(\alphaVec) \in \Fq^{k,n+1}$. If the first row ($t_i \neq n-k$ $\forall i$) \emph{or} the last column ($h_i \neq 0$ $\forall i$) of $\Lmat$ is zero, then we have
\begin{equation*}
- \begin{bmatrix}
1 & \\
\l_1^\top & \I
\end{bmatrix}
\cdot \J_{n+1-k} \Lmat^\top \J_k = - \J_{n+1-k} \Lmat^\top \J_k.
\end{equation*}
\end{remark}

\section{Relation to GRS Codes}

Using two different techniques, we show that many twisted RS codes are not GRS codes.
Section~\ref{ssec:inequivalence_schur_squares} uses the Schur square of a code to distinguish a large class of low-rate (and special high-rate) twisted codes from GRS codes.
In Section~\ref{ssec:inequivalence_combinartorial}, we derive a combinatorial statement, which states that if all code parameters are fixed except for $\etaVec$, either all $\etaVec$ for which the code is MDS give a GRS code, or only a few of them result in GRS codes.

\subsection{Inequivalence Based on Schur Squares}\label{ssec:inequivalence_schur_squares}

Schur squares of codes have become an increasingly studied object in coding theory in the last years due to several applications \cite{couvreur2014distinguisher,cramer_secure_2015,randriambololona2015products}.
\begin{definition}
Let $\Code[n,k]$ be a linear code. The \emph{Schur square} of $\Code$ is defined as
\begin{equation*}
\Code^2 := \left\langle \left\{ \c \star \c' \, : \, \c, \, \c' \in \Code \right\} \right\rangle,
\end{equation*}
where $\c \star \c' = [c_1 c_1', \dots, c_n c_n']$ is the Schur product of two vectors. 
\end{definition}

The dimension of the Schur product of a code is an invariant under code equivalence and satisfies
\begin{equation*}
\dim\!\left( \Code^2 \right) \leq \min\{n,\tfrac{1}{2} k (k+1) \}.
\end{equation*}
A random linear code attains this upper bound with high probability, cf.~\cite{cascudo_squares_2015}.
An MDS code has Schur square dimension at least $\dim(\Code^2) \geq \min\{n,2k-1\}$ \cite{randriambololona2015products} %
and GRS codes attain this lower bound.
We will make use of these properties in this section by showing that a large family of twisted RS codes of rate less than $1/2$ has Schur square dimension at least $2k$, and thus is non-GRS.

We start with a generic lower bound on the Schur square dimension of an evaluation code.

\begin{definition}
Let $\Pset \subseteq \Fq[x]_{<n}$ be an $\Fq$-subspace and $\alphaVec$ consist of $n$ distinct elements $\alpha_i$ of $\Fq$. We define
\begin{align*}
\Dsetsmallern{\Pset} &:= \left\{ \deg(f \cdot g) \, : \, f,g \in \Pset, \, \deg(f \cdot g)<n \right\} \text{ and} \\ %
\Dbarset{\Pset,\alphaVec} &:= \left\{ \deg(\overline{f \cdot g}) \, : \, f,g \in \Pset \right\}, 
\end{align*}
where $\overline{f} := \big(f \; \rem \; \prod_{i=1}^{n} (X-\alpha_i)\big)$ for any $f \in \Fq[X]$.
\end{definition}

\begin{lemma}\label{lem:Schur_square_evaluation_code}
Let $\alphaVec \in \Fq^n$ with distinct entries, $\Pset \subseteq \Fq[x]_{<n}$ be an $\Fq$-subspace, and $\Code = \ev{\Pset}{\alphaVec}$ be the evaluation code of $\Pset$ at the evaluation points $\alphaVec$. Then,
\begin{align*}
\Code^2 &= \ev{\left\langle fg \, : \, f,g \in \Pset \right\rangle}{\alphaVec} &&\text{and} \\
\dim \left(\Code^2\right) &\geq \left|\Dbarset{\Pset,\alphaVec}\right| \geq \left|\Dsetsmallern{\Pset}\right|.
\end{align*}
\end{lemma}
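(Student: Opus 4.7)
The plan is to handle the three assertions in turn: the equality, then the two inequalities. All three are ultimately consequences of the observation that the evaluation map is a ring homomorphism $\Fq[X] \to \Fq^n$ (when the target is equipped with pointwise product) and that its restriction to $\Fq[X]_{<n}$ is bijective when the $\alpha_i$ are distinct.

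For the equality $\Code^2 = \ev{\langle fg : f,g \in \Pset\rangle}{\alphaVec}$, I would first observe that $\ev{f}{\alphaVec} \star \ev{g}{\alphaVec} = \ev{fg}{\alphaVec}$ coordinate-wise. Therefore the generating set $\{\c \star \c' : \c, \c' \in \Code\}$ of $\Code^2$ is exactly $\{\ev{fg}{\alphaVec} : f, g \in \Pset\}$. Since $\ev{\cdot}{\alphaVec}$ is $\Fq$-linear, taking $\Fq$-linear spans on both sides gives the stated equality.

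For the main inequality $\dim(\Code^2) \geq |\Dbarset{\Pset,\alphaVec}|$, the idea is to exhibit that many linearly independent codewords. For each $d \in \Dbarset{\Pset,\alphaVec}$, pick $f_d, g_d \in \Pset$ with $\deg(\overline{f_d g_d}) = d$. By definition of the reduction, each $\overline{f_d g_d}$ lies in $\Fq[X]_{<n}$, and since the values $d$ are by construction pairwise distinct the polynomials $\{\overline{f_d g_d}\}_{d \in \Dbarset{\Pset,\alphaVec}}$ are linearly independent in $\Fq[X]_{<n}$. Because the $\alpha_i$ are distinct, $\ev{\cdot}{\alphaVec}\big|_{\Fq[X]_{<n}}$ is injective, so the corresponding evaluation vectors are linearly independent in $\Fq^n$. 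Finally, $\ev{\overline{f_d g_d}}{\alphaVec} = \ev{f_d g_d}{\alphaVec}$ because the two polynomials differ by a multiple of $\prod_i (X - \alpha_i)$, which vanishes on every $\alpha_i$; and this vector lies in $\Code^2$ by the first part.

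For the inclusion $\Dsetsmallern{\Pset} \subseteq \Dbarset{\Pset,\alphaVec}$, take $d \in \Dsetsmallern{\Pset}$, so there are $f, g \in \Pset$ with $\deg(fg) = d < n$. Then $fg \in \Fq[X]_{<n}$, so no reduction occurs and $\overline{fg} = fg$, giving $\deg(\overline{fg}) = d$. Hence $d \in \Dbarset{\Pset, \alphaVec}$, which proves the second inequality. There is no real obstacle here; the only subtlety is to keep the two degree sets carefully distinguished and to invoke injectivity of the evaluation map on $\Fq[X]_{<n}$ at the right moment.
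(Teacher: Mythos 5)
Your proof is correct and follows essentially the same route as the paper's: the multiplicativity of the evaluation map gives the equality, reduction modulo $\prod_i(X-\alpha_i)$ combined with injectivity of $\ev{\cdot}{\alphaVec}$ on $\Fq[X]_{<n}$ gives the first inequality, and the inclusion $\Dsetsmallern{\Pset} \subseteq \Dbarset{\Pset,\alphaVec}$ gives the second. You merely spell out explicitly (via the distinct-degrees-implies-linear-independence argument) a step the paper leaves implicit.
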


\begin{proof}
The first part of the statement follows directly from $f(\alpha) \cdot g(\alpha) = (f \cdot g)(\alpha)$ for $f,g \in \Fq[X]$ and $\alpha \in \Fq$.
Since 
\begin{align*}
\ev{\left\langle fg \, : \, f,g \in \Pset \right\rangle}{\alphaVec} = \ev{\left\langle \overline{fg} \, : \, f,g \in \Pset \right\rangle}{\alphaVec}
\end{align*}
and the evaluation $\ev{\cdot}{\alphaVec}$ is a bijection between $\Fq[X]_{<n}$ and $\Fq^n$, the Schur square dimension $\dim \left(\Code^2\right)$ is greater or equal to the dimension of $\langle\overline{f \cdot g} \, : \, f,g \in \Pset\rangle$, which in turn is lower-bounded by $\left|\Dbarset{\Pset,\alphaVec}\right|$.
Note also that $\Dsetsmallern{\Pset} \subseteq \Dbarset{\Pset,\alphaVec}$.
\end{proof}

Using Lemma~\ref{lem:Schur_square_evaluation_code}, we get the following lower bound on the Schur square dimension of twisted RS codes. %

\begin{proposition}\label{prop:trs_Schur_squares_TRS_lower_bound}
Let $\alphaVec, \tVec, \hVec$, and $\etaVec$ be as in Definition~\ref{def:twisted_RS_codes}. Denote by $g_0,\dots,g_{k-1} \in \evpolys$ the basis of $\evpolys$ given in Lemma~\ref{lem:evpolys_basis} and define $S_{\tVec,\hVec,\etaVec}^{n,k} := \left\{ \deg(g_1),\dots,\deg(g_{k-1}) \right\}$. Then,
\begin{align*}
S_{\tVec,\hVec,\etaVec}^{n,k} = &\Big(\{0,\dots,k-1\} \setminus \{h_j \, : \, \eta_j \neq 0\} \Big) \, \cup \,  \Big\{ k-1 \\ & +\max\{t_j \, : \, h_j=i, \, \eta_j \neq 0 \} \, : \, i \in \{h_j \, : \, \eta_j \neq 0\} \Big\}.
\end{align*}
Thus, the dimension of the Schur square satisfies %
\begin{align*}
\dim \! \left(\Cmult^2\right) \geq \left|\big\{ d_1 + d_2 \, : \, d_1,d_2 \in S_{\tVec,\hVec,\etaVec}^{n,k}, \, d_1 + d_2<n \big\}\right|.
\end{align*}
\end{proposition}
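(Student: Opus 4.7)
The plan has two parts. First, I compute $\deg(g_i)$ for each basis polynomial $g_i$ from \eqref{eq:evpolys_canonical_basis} to establish the closed-form description of $S_{\tVec,\hVec,\etaVec}^{n,k}$. Second, I apply \cref{lem:Schur_square_evaluation_code} with $\Pset = \evpolys$ to lift these degrees to a lower bound on the Schur square dimension.

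For the first part, I fix $i \in \{0,\dots,k-1\}$ and split on whether $i$ lies in $\{h_j : \eta_j \neq 0\}$. If not, every twist attached to $X^i$ in $g_i = X^i + \sum_{j : h_j = i}\eta_j X^{k-1+t_j}$ has vanishing coefficient, so $g_i = X^i$ and contributes degree $i$, accounting for the set $\{0,\dots,k-1\} \setminus \{h_j : \eta_j \neq 0\}$. If instead there is some $j$ with $h_j = i$ and $\eta_j \neq 0$, then each surviving exponent $k-1+t_j$ strictly exceeds $i$ (since $t_j \geq 1$ and $k-1 \geq i$), and because the pairs $[h_j, t_j]$ are distinct the values $t_j$ ranging over indices with a common hook $h_j = i$ are pairwise distinct; hence no cancellation can erase the largest surviving monomial, and $\deg(g_i) = k - 1 + \max\{t_j : h_j = i, \eta_j \neq 0\}$, accounting for the second half of the formula. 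Taking the union over $i$ yields the asserted description of $S_{\tVec,\hVec,\etaVec}^{n,k}$.

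For the second part, I observe that for any $d_1, d_2 \in S_{\tVec,\hVec,\etaVec}^{n,k}$ with $d_1 + d_2 < n$ there exist indices $i_1, i_2$ with $\deg(g_{i_\nu}) = d_\nu$; since $g_{i_1}, g_{i_2} \in \evpolys$ and $\deg(g_{i_1} g_{i_2}) = d_1 + d_2 < n$, we have $d_1 + d_2 \in \Dsetsmallern{\evpolys}$. Consequently $\{d_1 + d_2 : d_1, d_2 \in S_{\tVec,\hVec,\etaVec}^{n,k},\, d_1+d_2 < n\} \subseteq \Dsetsmallern{\evpolys}$, and the chain of inequalities in \cref{lem:Schur_square_evaluation_code} delivers the claimed lower bound on $\dim(\Cmult^2)$. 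The only even mildly delicate step is the non-cancellation argument for the leading term of $g_i$, where the distinctness of the pairs $[h_j, t_j]$ is used essentially; everything else is bookkeeping and a direct application of the evaluation-code Schur square lemma.
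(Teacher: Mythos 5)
Your proposal is correct and follows essentially the same route as the paper's proof: compute $\deg(g_i)$ case by case from the canonical basis \eqref{eq:evpolys_canonical_basis} and then apply Lemma~\ref{lem:Schur_square_evaluation_code} with $\Pset = \evpolys$. You merely spell out two details the paper leaves implicit, namely that distinctness of the pairs $[h_j,t_j]$ rules out cancellation of the leading term and that $\deg(g_{i_1}g_{i_2}) = d_1+d_2$ places each admissible sum in $\Dsetsmallern{\evpolys}$; both are correct.
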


\begin{proof}
Recall $g_i = X^i + \sum_{j=1, \, h_j=i}^{\ell} \eta_j X^{k-1+t_j}$ from \eqref{eq:evpolys_canonical_basis}. Hence, for $i \notin \{h_j \, : \, \eta_j \neq 0\}$, we have $g_i = X^i$ and otherwise, its degree is determined by the term $\eta_j X^{k-1+t_j}$ with largest $t_j$ among those $j$ with $h_j=i$ and $\eta_j \neq 0$.
The second part follows directly from Lemma~\ref{lem:Schur_square_evaluation_code}.
\end{proof}

Lemma~\ref{lem:Schur_square_evaluation_code} and Proposition~\ref{prop:trs_Schur_squares_TRS_lower_bound} imply the following three inequivalence statements for \startw and \plustw codes.

\begin{corollary}
Let $3 \leq k < \tfrac{n}{2}$. Then, any \startw code is non-GRS.
If $k=\tfrac{n}{2}$ and $\eta_1^2 \prod_{i=1}^{n} \alpha_i \neq 1$, then any \startw code with such $\etaVec$ and $\alphaVec$ is non-GRS. %
\end{corollary}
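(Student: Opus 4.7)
The plan is to lower-bound the Schur-square dimension directly via the sharper identity $\Code^2 = \ev{U}{\alphaVec}$ from \cref{lem:Schur_square_evaluation_code}, where $U := \langle fg : f, g \in \Pset \rangle$ and $\Pset = \evpolys$. For a \startw code the canonical basis of $\Pset$ is $g_0 = 1 + \eta_1 X^k$ together with $g_i = X^i$ for $1 \le i \le k-1$, so $U$ is generated by the pure monomial products $g_i g_j = X^{i+j}$ with $1 \le i,j \le k-1$, the twisted shifts $g_0 g_i = X^i + \eta_1 X^{k+i}$ with $1 \le i \le k-1$, and the square $g_0^2 = 1 + 2\eta_1 X^k + \eta_1^2 X^{2k}$.

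First I would verify that $\dim_{\Fq}(U) = 2k$ as a subspace of $\Fq[X]$. Setting $V_1 := \langle X^m : 2 \le m \le 2k-2 \rangle$, which has dimension $2k-3$ using $k \ge 3$, the monomials $X^k$ and $X^{k+i}$ for $1 \le i \le k-2$ all lie in $V_1$, so reducing the twisted generators modulo $V_1$ produces only the three new elements $X$ (from $g_0 g_1$), $X^{2k-1}$ (from $g_0 g_{k-1}$), and $p(X) := 1 + \eta_1^2 X^{2k}$ (from $g_0^2$). These three are visibly independent from $V_1$ and from each other, giving the basis $\{X, X^2, \ldots, X^{2k-1}, p(X)\}$ of $U$.

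I would then split into the two announced cases. For $k < n/2$ every basis element of $U$ has degree strictly less than $n$, so $\ev{\cdot}{\alphaVec}$ is injective on $U$ and $\dim(\Code^2) = 2k$. For $k = n/2$ the only generator needing reduction modulo $\pi(X) := \prod_{i=1}^n (X-\alpha_i)$ is $p(X)$; the constant term of $X^n \bmod \pi$ equals $(-1)^{n+1}\prod_i \alpha_i = -\prod_i \alpha_i$ since $n = 2k$ is even, so the constant term of $p(X) \bmod \pi$ is exactly $1 - \eta_1^2 \prod_i \alpha_i$. The hypothesis $\eta_1^2 \prod_i \alpha_i \neq 1$ is precisely what keeps this constant term nonzero, so $p(X) \bmod \pi$ remains independent of $X, X^2, \ldots, X^{n-1}$, and again $\dim(\Code^2) = 2k$.

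To conclude, I would invoke the fact recalled at the start of this section that any dimension-$k$ GRS code with $2k-1 \le n$ has Schur square of dimension exactly $2k-1$, together with the invariance of the Schur-square dimension under monomial code equivalence. The strict inequality $\dim(\Code^2) \ge 2k > 2k-1$ then rules out any monomial equivalence between $\Cmult$ and a GRS code. The only delicate step is the boundary case $k = n/2$, where the polynomial $p(X)$ must be tracked through the reduction modulo $\pi$; correctly identifying the constant term of $X^n \bmod \pi$ is what pins down the hypothesis $\eta_1^2 \prod_i \alpha_i \neq 1$ as the exact obstruction to losing a dimension.
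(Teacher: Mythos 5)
Your proposal is correct and follows essentially the same route as the paper's proof: both exhibit the $2k$-element basis $X, X^2, \dots, X^{2k-1}$ together with (a reduction of) $g_0^2=(1+\eta_1X^k)^2$ for the product space, and both handle the boundary case $2k=n$ by checking that the constant term of $g_0^2$ reduced modulo $\prod_{i=1}^n(X-\alpha_i)$ equals $1-\eta_1^2\prod_{i=1}^n\alpha_i$. Your bookkeeping via the subspace $V_1$ is just a repackaging of the paper's explicit linear combinations producing $X^1$ and $X^{2k-1}$, and the conclusion via the invariance of the Schur-square dimension and the value $2k-1$ for GRS codes is the same.
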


\begin{proof}
For a \startw code, we have $S_{\tVec,\hVec,\etaVec}^{n,k} = \{1,\dots,k\}$, so the set $A := \left\langle fg \, : \, f,g \in \Pset \right\rangle$ contains polynomials of degrees $\{2,\dots,2k\}$.
Furthermore, $A$ contains a polynomial of degree $1$ since $X^1 \cdot (\eta_1 X^k+X^0) -\eta_1 X^{k-1} \cdot X^2 = X^1$ (here we need $k\geq 3$).
In fact, we can choose as a basis of $A$ the polynomials $X^1,X^2,\dots,X^{2k-1},(1+\eta_1 X^k)^2$ since $X^i = X^{i_1} X^{i_2}$ for some $1 \leq i_1,i_2 \leq k-1$ for any $i=2,\dots,2k-2$, and $X^{2k-1} = \eta_1^{-1} \cdot X^{k-1} \cdot (\eta_1 X^k+X^0) - \eta_1^{-1} \cdot X^1 \cdot X^{k-2}$ (here we need $k \geq 3$).

If $2k<n$, the set $A$ hence contains $2k$ polynomials of distinct degrees less than $n$, and by Lemma~\ref{lem:Schur_square_evaluation_code}, we have $\dim \Code^2 \geq 2k$. In particular, $\Code$ is non-GRS.

If $2k=n$, then we must reduce the basis polynomial $(1+\eta_1 X^k)^2$ modulo $\prod_{i=1}^{n}(X-\alpha_i)$ in order to determine the Schur square dimension. As the monomials $X^1,X^2,\dots,X^{n-1}$ are in $A$, the Schur square has dimension $n$ if and only if the constant term of
\begin{align*}
\overline{(1+\eta_1 X^k)^2} = (1+\eta_1 X^k)^2 - \eta_1^2\prod_{i=1}^{n}(X-\alpha_i)
\end{align*}
is non-zero.
\end{proof}

\begin{corollary}
Let $3 \leq k < \tfrac{n}{2}$
Then, any \plustw code $\Code$ is non-GRS.
\end{corollary}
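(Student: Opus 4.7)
The plan is to deploy Proposition~\ref{prop:trs_Schur_squares_TRS_lower_bound} and show that $\dim(\Code^2) \geq 2k$, which strictly exceeds the value $2k-1$ attained by any $[n,k]$ GRS code when $2k-1 < n$. Since the Schur square dimension is invariant under monomial equivalence (if $\Code' = \Code \cdot \diag(\v)$, then $(\Code')^2 = \Code^2 \cdot \diag(\v\star\v)$), this will rule out $\Code$ being equivalent to any GRS code.

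First I would pin down the set $S_{\tVec,\hVec,\etaVec}^{n,k}$. Plugging $\ell=1$, $h_1 = k-1$, $t_1 = 1$, and $\eta_1 \neq 0$ into the canonical basis from Lemma~\ref{lem:evpolys_basis}, one obtains $g_i = X^i$ for $i=0,\dots,k-2$ and $g_{k-1} = X^{k-1} + \eta_1 X^k$. Hence
\[
  S_{\tVec,\hVec,\etaVec}^{n,k} = \{0,1,\dots,k-2\} \cup \{k\},
\]
a $k$-element set with a single ``gap'' at $k-1$.

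Next I would enumerate the sumset appearing in Proposition~\ref{prop:trs_Schur_squares_TRS_lower_bound}. Sums of two elements drawn from $\{0,\dots,k-2\}$ yield $\{0,1,\dots,2k-4\}$; pairing an element of $\{0,\dots,k-2\}$ with $k$ yields $\{k,\dots,2k-2\}$; and $k+k = 2k$. Under the hypothesis $k \geq 3$ these three pieces merge into $\{0,1,\dots,2k-2\} \cup \{2k\}$, a set of exactly $2k$ elements (only the value $2k-1$ is absent, as no pair in $S_{\tVec,\hVec,\etaVec}^{n,k}$ sums to it). Because $k < n/2$, every element of this sumset is strictly less than $n$, so Proposition~\ref{prop:trs_Schur_squares_TRS_lower_bound} yields $\dim(\Code^2) \geq 2k$.

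The only nontrivial point is verifying that $\{0,\dots,2k-4\}$ and $\{k,\dots,2k-2\}$ fit together into one interval — precisely where $k \geq 3$ is used (for $k=3$ the two sets are $\{0,1,2\}$ and $\{3,4\}$, adjacent; for $k \geq 4$ they overlap). Once this minor combinatorial check is in place, the conclusion is immediate: $\dim(\Code^2) \geq 2k > 2k-1$, while any $[n,k]$ GRS code has Schur square of dimension exactly $2k-1$ under $2k < n$, and monomial invariance of the Schur square dimension completes the argument.
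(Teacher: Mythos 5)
Your proposal is correct and follows essentially the same route as the paper: compute $S_{\tVec,\hVec,\etaVec}^{n,k}=\{0,\dots,k-2,k\}$, observe that the sumset restricted to values $<n$ is $\{0,\dots,2k-2\}\cup\{2k\}$ of size $2k$ (using $k\geq 3$ and $2k<n$), and invoke Proposition~\ref{prop:trs_Schur_squares_TRS_lower_bound} to get $\dim(\Code^2)\geq 2k>2k-1$. The only difference is that you spell out the merging of the two intervals and the invariance under monomial equivalence, which the paper leaves implicit.
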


\begin{proof}
We have $S_{\tVec,\hVec,\etaVec}^{n,k} = \{0,1,\dots,k-2,k\}$ and thus $\big\{ d_1 + d_2 \, : \, d_1,d_2 \in S_{\tVec,\hVec,\etaVec}^{n,k}, \, d_1 + d_2<n \big\} = \{0,\dots,2k-2,2k\}$ (here we use $k \geq 3$ and $2k<n$).
By Proposition~\ref{prop:trs_Schur_squares_TRS_lower_bound}, we have $\Code^2 \geq 2k$ and the claim follows.
\end{proof}

For \startw codes with evaluation points forming a multiplicative group, we can use the duality statements of Section~\ref{sec:duals} and show that also high-rate codes are non-GRS.

\begin{corollary}
Let $\tfrac{n}{2} < k \leq n-3$ and suppose the $\alphaVec$ form a proper subgroup of $(\F_q^*, \cdot)$. Then, any \startw code with evaluation points $\alphaVec$ is non-GRS.
\end{corollary}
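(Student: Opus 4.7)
The plan is to apply \cref{thm:trs_duals} to dualize: since $\Code := \TRS{\alphaVec,1,0,\etaVec}{n,k}$ has $n-k < n/2$, its dual is a low-rate twisted RS code to which the Schur-square method of \cref{lem:Schur_square_evaluation_code} and \cref{prop:trs_Schur_squares_TRS_lower_bound} becomes applicable. A linear code is GRS if and only if its dual is, and being GRS is preserved under code equivalence, so it suffices to show that $\Code^\dual$ is not equivalent to a GRS code.

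By \cref{thm:trs_duals}, $\Code^\dual$ is equivalent to $\Code' := \TRS{\alphaVec, k, n-k-1, -\etaVec}{n, n-k}$. Its canonical basis from \cref{lem:evpolys_basis} consists of the monomials $X^i$ for $0 \leq i \leq n-k-2$ together with $g := X^{n-k-1} - \eta_1 X^{n-1}$. Since the $\alpha_i$ form a group of order $n$ we have $\prod_{i=1}^n(X - \alpha_i) = X^n - 1$, so reductions in $\Dbarset{\Pset,\alphaVec}$ use only the substitution $X^n \equiv 1$. The key step is to enumerate degrees of the reduced products: the untwisted products contribute the degrees $\{0,\ldots,2n-2k-4\}$; the products $X^i \cdot g$ for $1 \leq i \leq n-k-2$ reduce to $X^{n-k-1+i} - \eta_1 X^{i-1}$, of degree $n-k-1+i$, and so contribute $\{n-k,\ldots,2n-2k-3\}$; the product $1 \cdot g$ contributes $n-1$; and $g^2$ reduces to $X^{2n-2k-2} - 2\eta_1 X^{n-k-2} + \eta_1^2 X^{n-2}$, contributing $n-2$ (using $\eta_1 \neq 0$ and that $k > n/2$ makes $n-2$ the strictly largest of the three surviving exponents).

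Under the hypotheses $n/2 < k \leq n-3$ one checks that $n-k \leq 2n-2k-3$ and both $n-1,\, n-2$ strictly exceed $2n-2k-3$, so the degrees produced above form a set of size at least $(2n-2k-2) + 2 = 2(n-k)$. By \cref{lem:Schur_square_evaluation_code}, $\dim(\Code'^{\,2}) \geq 2(n-k)$. However, any $[n,n-k]$ GRS code has Schur square of dimension exactly $\min\{n,\, 2(n-k)-1\} = 2(n-k)-1$, since $k > n/2$ forces $2(n-k)-1 < n$. Therefore $\Code'$ cannot be GRS, and neither is $\Code^\dual$ nor $\Code$. The only care-point is bookkeeping at the boundary $k = n-3$, where the range $\{n-k,\ldots,2n-2k-3\}$ collapses to a single value $\{n-k\}$; the argument still goes through because the two extra degrees $n-1,\, n-2$ remain outside the merged range $\{0,\ldots,2n-2k-3\}$, keeping the count at $2(n-k)$.
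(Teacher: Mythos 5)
Your proposal is correct and follows essentially the same route as the paper: dualize via \cref{thm:trs_duals} to a low-rate twisted RS code with twist $k$ and hook $n-k-1$, reduce products of basis elements modulo $X^n-1$, and extract $2(n-k)$ distinct degrees (in particular $2(n-k)-3$ from $X^{n-k-2}\cdot g$ and $n-2$ from $\overline{g^2}$) to force $\dim(\Code'^{\,2})\geq 2(n-k)$ via \cref{lem:Schur_square_evaluation_code}. The bookkeeping, including the boundary case $k=n-3$ and the uses of $k>\tfrac{n}{2}$ and $k\leq n-3$, checks out.
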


\begin{proof}
By \cref{thm:trs_duals}, the dual code of the \startw code is equivalent to a low-rate twisted RS code $\Code[n,n-k]$ with $\ell=1$, $t_1=k$, $h_1=n-k-1$, and the same evaluation points. Hence, the evaluation polynomial set $\Pset$ of $\Code$ is spanned by the polynomials $X^0,\dots,X^{n-k-2},\eta' X^{n-1}+X^{n-k-1}$ for some $\eta' \neq 0$.
As $\Code$ is a low-rate code, it suffices to show that $\dim \Code^2 \geq 2(n-k)$.
We show this by finding $2(n-k)$ polynomials of distinct degrees in $B := \left\langle \overline{fg} \, : \, f,g \in \Pset \right\rangle$ and applying Lemma~\ref{lem:Schur_square_evaluation_code}.

First note that by combining the basis elements, $B$ obviously contains elements of degrees $0,\dots,2(n-k)-4$ and $n-1$.
We construct two more elements of $B$ with differnet degrees using the structure of $\alphaVec$.
Since $\prod_{i=1}^{n} (X-\alpha_i) = X^n-1$,
the set $B$ contains a polynomial of degree $2(n-k)-3$ as
\begin{align*}
\overline{X^{n-k-2} \left(\eta' X^{n-1}+X^{n-k-1}\right)} = \eta' X^{n-k-3} + X^{2(n-k)-3}
\end{align*}
(we use that $n-k-2\geq 1$ due to $k \leq n-3$) and a polynomial of degree $n-2$ as
\begin{align*}
&\overline{\left(\eta' X^{n-1}+X^{n-k-1}\right)^2} \\
&\qquad \qquad = {\eta'}^2 X^{n-2} + 2 \eta'X^{n-k-2} + X^{2(n-k)-2}.
\end{align*}
Note that in the latter polynomial, $X^{n-2}$ is indeed the leading term due to $\tfrac{n}{2} < k$. For the same reason, we have $2(n-k)-3<n-2$. This concludes the proof.
\end{proof}

The following theorem shows that many twisted RS codes of rate smaller than $1/2$ are not GRS codes.
The only restriction is a mild technical condition on the hook vector $\hVec$, which we require to not contain the two smallest or the two largest possible values, or contain consecutive elements. %

\begin{theorem}
\label{thm:trs_non-GRS_low-rate_single_twisted_RS_codes}
Let $k <\tfrac{n}{2}$ and choose $\alphaVec,\hVec,\tVec,\etaVec$ as in \cref{def:twisted_RS_codes} with the additional requirements $\eta_i \neq 0$, $1 < h_i < k-2$, and either $h_i=h_j$ or $|h_i-h_j|>1$ for all $i\neq j$.
Then, the code $\Code := \Cmult$ has Schur square dimension $\dim\!\left(\Code^2\right) \geq 2k$.
In particular, it is not a GRS code.
\end{theorem}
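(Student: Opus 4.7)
The plan is to invoke Proposition~\ref{prop:trs_Schur_squares_TRS_lower_bound} and show that the set
\[
A := \left\{ d_1+d_2 : d_1,d_2 \in S_{\tVec,\hVec,\etaVec}^{n,k},\ d_1+d_2<n \right\}
\]
contains at least $2k$ elements. The conclusion that $\Code$ is not GRS will then follow because a GRS $[n,k]$-code with $2k-1 \le n$ has Schur square of dimension exactly $2k-1$ (as recalled before Proposition~\ref{prop:trs_Schur_squares_TRS_lower_bound}) and $\dim(\Code^2)$ is a monomial-equivalence invariant. Write $H := \{h_j : j=1,\ldots,\ell\}$ as a set; the hypotheses force $H \subseteq \{2,\ldots,k-3\}$ with no two consecutive elements, and $H \neq \emptyset$ since every $\eta_j \neq 0$. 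Then $S := S_{\tVec,\hVec,\etaVec}^{n,k}$ decomposes as $S_0 \sqcup S_H$, with $S_0 := \{0,\ldots,k-1\}\setminus H$ (which in particular contains $\{0,1,k-2,k-1\}$) and $S_H := \{D_h : h\in H\}$, where $D_h := k-1+\max\{t_j : h_j=h\} \in \{k,\ldots,n-1\}$.

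The first step is to prove $\{0,1,\ldots,2k-2\}\subseteq A$ using only $S_0+S_0$. For $s \in \{0,\ldots,k-1\}$: either $s \in S_0$ so that $s = 0+s$, or $s \in H$, in which case $s-1 \notin H$ by the no-consecutive hypothesis and $1,\,s-1 \in S_0$ give $s = 1+(s-1)$. For $s \in \{k,\ldots,2k-2\}$: try $(s_1,s_2) = (k-1,\,s-k+1)$, and if $s_2 \in H$ fall back to $(k-2,\,s-k+2)$, noting that $s-k+2 \notin H$ by the same hypothesis and $k-2,k-1 \in S_0$. All $2k-1$ sums lie in $A$ since $2k-2 < n$.

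The second step is to exhibit one more element of $A$ inside $[2k-1,\,n-1]$. Pick any $h^* \in H$ and set $D^* := D_{h^*} \in \{k,\ldots,n-1\}$. If $D^* \ge 2k-1$, then $0+D^* = D^*$ is already such an element. Otherwise $D^* \in \{k,\ldots,2k-2\}$, and I look for $s \in S_0$ with $D^*+s \in [2k-1,n-1]$, i.e.\ $s \in I := [2k-1-D^*,\,n-1-D^*]\cap\{0,\ldots,k-1\}$. If $D^* \le n-k$, then $k-1 \in I$ and $k-1 \in S_0$ does the job. If $D^* > n-k$, then $I$ consists of $n-2k+1 \ge 2$ consecutive integers (using $k<n/2$), and the no-consecutive hypothesis on $H$ forces at least one of them to lie in $S_0$. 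In every case we obtain a new element of $A$ beyond $\{0,\ldots,2k-2\}$, so $|A|\ge 2k$.

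The main obstacle I anticipate is the last subcase, where $D^*$ is so large that $k-1$ is no longer in the admissible window $I$. In that regime the argument runs on the interplay between the hypothesis $k < n/2$ (which guarantees that $I$ contains at least two consecutive integers) and the sparseness of $H$ (which then forces at least one of them into $S_0$); verifying that these two conditions mesh correctly, together with the mild boundary conditions $1 < h_i < k-2$ used to keep $\{0,1,k-2,k-1\}$ safely inside $S_0$, is the only place the proof is not straightforward bookkeeping.
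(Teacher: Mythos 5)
Your proposal is correct and follows essentially the same route as the paper's proof: it invokes Proposition~\ref{prop:trs_Schur_squares_TRS_lower_bound}, covers $\{0,\dots,2k-2\}$ with the same pairings ($0+s$ or $1+(s-1)$ for $s<k$, and $(k-1)+j$ or $(k-2)+(h_i+1)$ for larger sums), and then produces one further sum in $\{2k-1,\dots,n-1\}$ by exploiting $2k<n$ together with the non-consecutiveness of the hooks. The only difference is a cosmetic reorganization of the final case split (you fix one degree $D^*$ and branch on $D^*\geq 2k-1$, $D^*\leq n-k$, $D^*>n-k$, whereas the paper branches on $b\geq 2k$, $b=k$, $k<b<2k$), which does not change the substance of the argument.
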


\begin{proof}
By \cref{prop:trs_Schur_squares_TRS_lower_bound}, the set of evaluation polynomial degrees is given by $S_{\tVec,\hVec,\etaVec}^{n,k} = A \cup B$, where
\begin{align*}
A &= \{0,\dots,k-1\} \setminus \{h_1,\dots,h_\ell\} \\
\emptyset \neq B &\subseteq \{ k-1+t_i \, : \, i=1,\dots,\ell\}.
\end{align*}
By the restrictions on $h_i$, we have
\begin{align*}
\{0,1,k-2,k-1\} &\subseteq A &&\text{and} \\
\{h_i-1,h_i+1\} &\subseteq A &&\forall \, i=1,\dots,\ell.
\end{align*}
We show that $\{0,\dots,2k-2,\mu\} \subseteq \DsetText{\evpolys}$ for some $\mu \in \{2k-1,\dots,n-1\}$.
Let $0 \leq j \leq k-1$. Then, $j$ can be written as the sum of two elements in $A$ as follows:
\begin{equation*}
j = \begin{cases}
j+0, &\text{if } j \in A \text{ (i.e., $j \neq h_i$ for all $i$)}, \\
(h_i-1)+1, &\text{if } j = h_i.
\end{cases}
\end{equation*}
Hence, $j \in \DsetText{\evpolys}$. We used $0,1,h_i-1 \in A$.
Let $k \leq k-1+j \leq 2k-2$. Then,
\begin{equation*}
k-1+j = \begin{cases}
(k-1)+j, &\text{if } j \in A,\\
(k-2)+(h_1+1), &\text{if } j = h_i,
\end{cases}
\end{equation*}
i.e., $k-1+j \in \DsetText{\evpolys}$. We used $k-1,k-2,h_i+1 \in A$.

It is left to show that $\DsetText{\evpolys} \cap \{2k-1,\dots,n-1\}$ is non-empty. We distinguish three cases, of which at least one is true since $B$ is non-empty and $k \leq b < n$ for all $b \in B$:
\begin{enumerate}
\item If there is a $b\in B$ with $b \geq 2k$, then $0+b \in \DsetText{\evpolys}$ (recall that $0 \in A$) and the claim follows.
\item If $k \in B$, then $2k-1 = k + (k-1) \in \DsetText{\evpolys}$ (recall that $k-1 \in A$) and the claim follows.
\item If there is a $b \in B$ with $k < b < 2k$, then
\begin{align*}
&\DsetText{\evpolys} \cap \{2k-1,\dots,n-1\} \\
&\supseteq (b+A) \cap \{2k-1,\dots,n-1\} \\
&= \underbrace{\big\{ \max\{2k-1,b\},\dots,\min\{n-1,b+k-1\} \big\}}_{=: \, B_1} \setminus \\
&\quad \; \underbrace{\{b + h_i \, : \, i=1,\dots,\ell\}}_{=: \, B_2}
\end{align*}
Due to $\max\{2k-1,b\}=2k-1$, $2k<n$, and $b+k-1>2k-1$, we have $\{2k-1,2k\} \subseteq B_1$.
Since the $h_i$ are non-consecutive, we must have $2k-1 \notin B_2$ or $2k \notin B_2$. Hence, $B_1 \setminus B_2 \neq \emptyset$, which proves the claim.
\end{enumerate}
Hence, $|\DsetsmallernText{\evpolys}| \geq 2k$ and \cref{lem:Schur_square_evaluation_code} implies the claim.
\end{proof}

\subsection{A Combinatorial Inequivalence Argument}\label{ssec:inequivalence_combinartorial}

In the following, we present combinatorial results on the inequivalence question.
We rely on the following well-known characterization of GRS codes.

\begin{lemma}[\cite{roth1985generator,roth1989mds}]\label{lem:RS_characterization_minors}
Let $\Code$ be a linear code with a generator matrix of the form $\Gmat = [\I \mid \Amat]$.
Then, $\Code$ is a GRS code if and only if, for $\Atilde \in \Fq^{k \times n-k}$ with $A'_{ij} = A_{ij}^{-1}$,
\begin{enumerate}[label=(\roman*)]
\itemsep0pt
\item\label{itm:GRS_characterization_1} all entries of $\Amat$ are non-zero,
\item\label{itm:GRS_characterization_2} all $2 \times 2$ minors of $\Atilde$ are non-zero, and
\item\label{itm:GRS_characterization_3} all $3 \times 3$ minors of $\Atilde$ are zero.
\end{enumerate}
\end{lemma}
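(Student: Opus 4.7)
The plan is to prove the two directions of the equivalence separately; both hinge on the fact that a systematic generator matrix of a GRS code has a Cauchy-like structure.

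For the forward direction, I would start from a GRS code with evaluation points $\alpha_1,\dots,\alpha_n$ and column multipliers $v_1,\dots,v_n$ and row-reduce the standard Vandermonde-type generator matrix into systematic form $[\I \mid \Amat]$ (with the first $k$ columns as the identity block). A direct Lagrange-interpolation calculation shows that the entries of $\Amat$ take the Cauchy-like form
\begin{equation*}
A_{ij} = \frac{c_i\, d_j}{x_i - y_j},
\end{equation*}
where $x_1,\dots,x_k$ and $y_1,\dots,y_{n-k}$ are the disjoint subsets of evaluation points indexing the systematic and non-systematic columns respectively, and $c_i,d_j \in \Fq^\ast$. This immediately gives \ref{itm:GRS_characterization_1}. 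Inverting entrywise, $\Atilde_{ij} = (x_i - y_j)/(c_i d_j)$ decomposes as $(x_i/c_i)\cdot(1/d_j) - (1/c_i)\cdot(y_j/d_j)$, so $\Atilde$ has rank at most $2$, yielding \ref{itm:GRS_characterization_3}. Expanding a $2\times 2$ minor of $\Atilde$ produces $(x_{i_1}-x_{i_2})(y_{j_1}-y_{j_2})/(c_{i_1}c_{i_2}d_{j_1}d_{j_2})$, which is non-zero since the evaluation points are distinct, giving \ref{itm:GRS_characterization_2}.

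For the reverse direction, assumptions \ref{itm:GRS_characterization_2} and \ref{itm:GRS_characterization_3} force $\Atilde$ to have rank exactly $2$ (the small cases $k \leq 2$ or $n-k \leq 2$ are degenerate and handled separately). I would then factor $\Atilde = U V^\top$ with $U \in \Fq^{k \times 2}$ and $V \in \Fq^{(n-k) \times 2}$. Each $2\times 2$ minor of $\Atilde$ factors as $\det(U_{\{i_1,i_2\},\cdot}) \cdot \det(V_{\{j_1,j_2\},\cdot})$, so \ref{itm:GRS_characterization_2} forces the rows of $U$, as well as those of $V$, to define pairwise distinct projective points in $\mathbb{P}^1(\Fq)$, while \ref{itm:GRS_characterization_1} (no zero entries in $\Atilde$) makes these two sets of projective points disjoint. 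The freedom in the factorization $(U,V) \mapsto (UM, V M^{-\top})$ for invertible $M \in \Fq^{2\times 2}$ corresponds to acting on the two point sets by a M\"obius transformation and its inverse transpose. Since $q\geq n$, I can choose $M$ so that no point lands at infinity, after which the rows can be normalized as $U_{i,\cdot} = c_i(x_i,1)$ and $V_{j,\cdot} = d_j(-y_j,1)$ with all $x_i, y_j$ distinct. Consequently $\Atilde_{ij} = c_i d_j (x_i - y_j)$ and hence $A_{ij} = 1/(c_i d_j(x_i - y_j))$ is of Cauchy form.

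The final step is to recognize this Cauchy form as the systematic generator matrix of a GRS code; I would reverse the Lagrange interpolation from the forward direction to identify the row span of $[\I \mid \Amat]$ with $\ev{\Fq[X]_{<k}}{\alphaVec'}\cdot\diag(v_1,\dots,v_n)$ for $\alphaVec' = (x_1,\dots,x_k,y_1,\dots,y_{n-k})$ and a suitable choice of non-zero multipliers $v_1,\dots,v_n$ expressible in terms of the $c_i$ and $d_j$. The main obstacle I expect is the normalization step in the reverse direction: producing the change of basis $M$ that simultaneously places all $n$ projective points in the affine chart, and then carefully tracking the constants through the change of basis so that the resulting decomposition is genuinely Cauchy up to diagonal scalings. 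Everything else is routine linear algebra.
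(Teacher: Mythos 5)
The paper does not prove this lemma at all --- it is quoted from Roth--Seroussi and Roth--Lempel --- so there is no in-paper proof to compare against. Your argument is precisely the standard one from those references (systematic GRS generator matrices are generalized Cauchy matrices, and conversely), and it is sound: the Lagrange computation in the forward direction, the rank-$\le 2$ decomposition plus Cauchy--Binet factorization of the $2\times 2$ minors in the reverse direction, and the count showing $q\ge n$ suffices to move all $n$ projective points into one affine chart are all correct. Two small points to tidy up. First, your parametrization is inconsistent: with $U_{i,\cdot}=c_i(x_i,1)$ and $V_{j,\cdot}=d_j(-y_j,1)$ the product $U_{i,\cdot}V_{j,\cdot}^\top$ is $c_id_j(1-x_iy_j)$, not $c_id_j(x_i-y_j)$; you want $V_{j,\cdot}=d_j(1,-y_j)$, so that the vanishing of $\Atilde_{ij}$ corresponds to $x_i=y_j$ and condition \ref{itm:GRS_characterization_1} indeed says the two point sets are disjoint. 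Second, of the ``degenerate'' cases only $\min\{k,n-k\}=1$ actually needs separate (and trivial) treatment; when $\min\{k,n-k\}=2$ condition \ref{itm:GRS_characterization_3} is vacuous but $\rank\Atilde\le 2$ holds automatically, so your main argument applies verbatim --- it would be cleaner to say so than to defer it. (A final caveat that is really the statement's fault rather than yours: at $n=q+1$ the reverse direction can only produce an extended/doubly-extended RS code, which the paper's definition of GRS excludes; for $n\le q$ everything goes through.)
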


An MDS code always has a systematic generator matrix $\Gmat = [\I \mid \A]$ and fulfills Conditions~\ref{itm:GRS_characterization_1} and \ref{itm:GRS_characterization_2}.
The crucial difference of a GRS and a non-GRS MDS code is hence Condition~\ref{itm:GRS_characterization_3}.
Note also that for $\min\{k,n-k\} < 3$, the matrix $\Amat'$ has no $3 \times 3$ minors, so any such MDS code is a GRS code.
The following lemma states how the entries of $\Amat$ depend on $\etaVec$.

\begin{lemma}\label{lem:combinatorial_inequivalence}
Let $\alphaVec,\tVec,\hVec$ be chosen as in Definition~\ref{def:twisted_RS_codes}.
For these choices, let $\etaSet \subseteq \Fq^\numTwists$ be a set of $\etaVec$'s such that $\Cmult$ is MDS.
For any $\etaVec \in \etaSet$, let $\Gmat^{(\mathrm{sys}, \etaVec)} = [\I \mid \AetaM]$ be the systematic generator matrix of $\Cmult$. Then, the entries of $\AetaM \in \Fq^{k \times n-k}$ can be written as
\begin{equation*}
\AetaV_{i,j} = \frac{\pij{i,j}(\eta_1,\dots,\eta_\numTwists)}{p(\eta_1,\dots,\eta_\numTwists)} \quad \forall \, \etaVec = [\eta_1,\dots,\eta_\numTwists] \in \etaSet,
\end{equation*}
where $p, \pij{i,j} \in \Fq[X_1,\dots,X_\numTwists]$ are polynomials in $\numTwists$ variables of degree at most $1$ in each variable that do not have a zero in $\etaSet$ and whose coefficients do not depend on $\etaVec$.
\end{lemma}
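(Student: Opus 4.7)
The plan is to work directly with the canonical generator matrix $\Gmat$ from \eqref{eq:generator_matrix_canonical} and convert it to systematic form via Cramer's rule. Write $\Gmat = [\Smat \mid \B]$, where $\Smat \in \Fq^{k \times k}$ consists of the first $k$ columns and $\B \in \Fq^{k \times (n-k)}$ of the remaining $n-k$ columns. The key structural observation, immediate from \eqref{eq:evpolys_canonical_basis}, is that the variable $\eta_m$ appears only in row $h_m$ of $\Gmat$, and every entry of $\Gmat$ is a polynomial in $\eta_1,\dots,\eta_\ell$ of degree at most one in each variable, with coefficients depending only on $\alphaVec,\tVec,\hVec,k,n$.

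Since $\Cmult$ is MDS for every $\etaVec \in \etaSet$, every $k \times k$ submatrix of $\Gmat$ is invertible; in particular $\Smat$ is invertible, so the systematic generator matrix exists and equals $[\I \mid \AetaM] = \Smat^{-1} \Gmat = [\I \mid \Smat^{-1}\B]$. Set $p := \det(\Smat)$ and, for each $(i,j)$, let $p^{(i,j)} := \det(\Smat^{(i \leftarrow \B_j)})$, where $\Smat^{(i\leftarrow\B_j)}$ denotes $\Smat$ with column $i$ replaced by the $j$-th column of $\B$. By Cramer's rule
\begin{equation*}
\AetaV_{i,j} = \frac{p^{(i,j)}(\eta_1,\dots,\eta_\ell)}{p(\eta_1,\dots,\eta_\ell)}.
\end{equation*}

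Next I would verify the degree bound by expanding these determinants via the Leibniz formula. In $\det(\Smat) = \sum_\sigma \mathrm{sgn}(\sigma) \prod_i S_{i,\sigma(i)}$, each factor contributes at most one instance of any given $\eta_m$ (since $\eta_m$ only appears in row $h_m$, it can enter the product at most once, from the factor $S_{h_m,\sigma(h_m)}$). Hence each summand, and therefore $p$ itself, has degree at most one in each $\eta_m$. The same argument applies to $p^{(i,j)}$, because column replacement preserves the property that $\eta_m$ appears in only one row of the matrix. Moreover, the coefficients of $p$ and $p^{(i,j)}$ as polynomials in the $\eta_m$'s are built from entries of the Vandermonde-like matrix $\Gmat\big|_{\etaVec=\0}$, hence depend only on $\alphaVec,\tVec,\hVec,k,n$ and not on $\etaVec$.

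Finally, neither $p$ nor $p^{(i,j)}$ vanishes on $\etaSet$: by the MDS assumption, $\Smat$ is invertible for every $\etaVec \in \etaSet$, so $p(\etaVec) \neq 0$ on $\etaSet$, which is what the statement requires. (The $p^{(i,j)}$'s may of course vanish for specific entries that happen to be zero; the statement only asserts non-vanishing of the common denominator.) The only mild subtlety is making sure the degree-one-per-variable bound truly survives the determinant expansion, which is handled cleanly by the observation that distinct $\eta_m$'s with equal hooks $h_m = h_{m'}$ still lie in the same row and thus cannot both appear in any single Leibniz term beyond their individual linear contributions; no further case analysis is needed.
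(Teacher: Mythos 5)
Your decomposition via Cramer's rule and the Leibniz-expansion degree argument are sound and essentially coincide with the paper's proof (which writes $\AetaM = \mathrm{adj}(\Smat)\B/\det(\Smat)$ and tracks which variables survive in each cofactor; your observation that column replacement keeps each $\eta_m$ confined to row $h_m+1$ is a slightly cleaner way to get the same degree bound). However, there is a genuine gap at the end: you have misread the statement. The lemma asserts that \emph{both} $p$ \emph{and} the numerators $\pij{i,j}$ have no zero in $\etaSet$ --- the relative clause applies to all the named polynomials --- and your parenthetical explicitly waives this for the $\pij{i,j}$. This is not a cosmetic condition: in the proof of Theorem~\ref{thm:combinatorial_inequivalence} the \emph{element-wise inverse} of $\AetaM$ is formed, so the $\pij{i,j}$ become denominators there, and the whole argument collapses if they could vanish on $\etaSet$.

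The missing step is short, so the gap is easily repaired: if $\pij{i,j}(\etaVec) = 0$ for some $\etaVec \in \etaSet$, then $\AetaV_{i,j} = 0$, so the $i$-th row of $\Gmat^{(\mathrm{sys},\etaVec)} = [\I \mid \AetaM]$ has at least $k$ zero entries ($k-1$ from the identity block plus the one in $\AetaM$), i.e., it is a nonzero codeword of weight at most $n-k < n-k+1$, contradicting the assumption that $\Cmult$ is MDS for every $\etaVec \in \etaSet$. With this addition your proof is complete and matches the paper's.
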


\begin{proof}
Consider the ``canonical'' generator matrix $\Gmat^{(\mathrm{can}, \etaVec)}$ in~\eqref{eq:generator_matrix_canonical}, i.e., the matrix in which the rows are the evaluations at $\alphaVec$ of the evaluation polynomial basis $g_0,\dots,g_{k-1}$ (cf.~Lemma~\ref{lem:evpolys_basis}).
By definition of the $g_i$, its entries are of the form
\begin{equation}
G^{(\mathrm{can}, \etaVec)}_{i,j} = \alpha_j^{i-1} + \sum_{\substack{\kappa=1 \\ h_\kappa=i-1}}^{\ell} \eta_\kappa \alpha_j^{k-1+t_\kappa}, \label{eq:structure_systematic_G_polynomials}
\end{equation}
i.e., $G^{(\mathrm{can}, \etaVec)}_{i,j}$ is the evaluation at $\etaVec$ of a polynomial in $\Fq[X_1,\dots,X_\numTwists]$ of total degree at most $1$.
Furthermore, for each variable $X_i$, there is only one row of $\Gmat^{(\mathrm{can}, \etaVec)}$ for which these polynomials have non-zero degree in $X_i$ (we abbreviate the latter property with $X_i$ ``appears in a polynomial'' below).

We write $\Gmat^{(\mathrm{can}, \etaVec)} = [\Bmat^{(\etaVec)} \mid \Dmat^{(\etaVec)}]$ with $\Bmat^{(\etaVec)} \in \Fq^{k \times k}$ and $\Dmat^{(\etaVec)} \in \Fq^{k \times (n-k)}$.
Observe that since we only consider $\etaVec$ for which the code is MDS, $\Bmat^{(\etaVec)}$ is invertible and we have
\begin{equation*}
\Amat^{(\etaVec)} = {\Bmat^{(\etaVec)}}^{-1} \Dmat^{(\etaVec)} = \frac{\mathrm{adj}(\Bmat^{(\etaVec)}) \Dmat^{(\etaVec)}}{\det(\Bmat^{(\etaVec)})},
\end{equation*}
where $\mathrm{adj}(\Bmat^{(\etaVec)})$ is the adjunct matrix of $\Bmat^{(\etaVec)}$.

The determinant $\det\big(\Bmat^{(\etaVec)}\big)$ is the evaluation at $\etaVec$ of a polynomial $p \in \Fq[X_1,\dots,X_\numTwists]$, where $p$ has degree at most $1$ in each variable.
This follows inductively from the Laplace expansion of the determinant and the fact that each $X_i$ appears only in the polynomials that correspond to one row of $\Gmat^{(\mathrm{can}, \etaVec)}$.
Further, $p$ has no zero in $\etaSet$ since the code is MDS for all $\etaVec \in \etaSet$.
This gives the sought polynomial $p$.

We study the entries of $\Amat^{(\etaVec)} \cdot \det(\Bmat^{(\etaVec)})$, which are
sums of products of one entry from the adjunct matrix and one entry from $\Dmat^{(\etaVec)}$ whose column and row index, respectively, coincide.
By definition, the $(i,j)$-th entry of the adjunct matrix of $\Bmat^{(\etaVec)}$ is given by $(-1)^{i+j}$ times the determinant of the $(k-1) \times (k-1)$ submatrix of $\Bmat^{(\etaVec)}$ obtained by deleting its $i$-th column and $j$-th row.
This means that it is the evaluation at $\etaVec$ of a polynomial in $\Fq[X_1,\dots,X_\ell]$ with degree at most $1$ in each variable.
Moreover, if $\kappa$ satisfies $h_\kappa=j-1$, the variable $X_\kappa$ does not appear in those polynomials that correspond to the $j$-th column of the adjunct matrix.
On the other hand, these $X_\kappa$ are the only variables that appear in the polynomials corresponding to the $j$-th row of $\Dmat^{(\etaVec)}$ (cf.~\eqref{eq:structure_systematic_G_polynomials}).
Hence, the $(i,j)$-th entry of $\Amat^{(\etaVec)} \cdot \det(\Bmat^{(\etaVec)})$ can be written as evaluation at $\etaVec$ of a polynomial $p^{(i,j)} \in \Fq[X_1,\dots,X_\ell]$ of degree at most $1$ in each $X_i$. Furthermore, each of the $p^{(i,j)}$s does not have a zero in $\etaSet$ since otherwise $\Gmat^{(\mathrm{sys}, \etaVec)}$ would contain a row with $k$ zeros, contradicting the MDS assumption.
This gives the sought polynomials $p^{(i,j)}$.
\end{proof}

\begin{theorem}\label{thm:combinatorial_inequivalence}
Let $\min\{k,n-k\} \geq 3$ and $\alphaVec$, $\tVec$, $\hVec$ be chosen as in \cref{def:twisted_RS_codes}.
Denote by $\etaSet \subseteq \Fq^\numTwists$ the set of $\etaVec$ such that $\Cmult$ is MDS and assume that there is an $\etaVec^* \in \etaSet$ for which $\TRS{\alphaVec,\tVec,\hVec,\etaVec^\ast}{n,k}{}$ is non-GRS.
Then there is a non-zero multivariate polynomial $P \in \Fq[X_1,\dots,X_\numTwists]$ of degree at most $6$ in each variable such that all $\etaVec \in \etaSet$ for which $\Cmult$ is GRS are zeros of $P$.\footnote{In the first conference paper about twisted RS codes \cite{beelen2017twisted} (case $\ell=1$), we mistakenly assumed that the polynomial $P$ never vanishes. Hence, \cite[Theorem~18]{beelen2017twisted} is not true in general, see Example~\ref{ex:always_GRS_twisted_RS_codes}.}
\end{theorem}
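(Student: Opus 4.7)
The plan is to combine the GRS characterization of Lemma~\ref{lem:RS_characterization_minors} with the rational parametrization of $\AetaM$ provided by Lemma~\ref{lem:combinatorial_inequivalence}. Since $\min\{k,n-k\} \geq 3$, the matrix $\Atilde$ has $3 \times 3$ submatrices, so condition (iii) of the lemma has content. For any $\etaVec \in \etaSet$ the code $\Cmult$ is MDS; this automatically implies condition (i) (each entry of $\AetaM$ appears, up to sign, as a maximal minor of the systematic generator matrix and hence is non-zero), and condition (ii) (every $2 \times 2$ minor of $\Atilde$ equals, up to a non-vanishing factor, a $2 \times 2$ minor of $\AetaM$, which is non-zero for an MDS code). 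Thus over $\etaSet$, the code is GRS if and only if every $3 \times 3$ minor of $\Atilde$ vanishes.

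I then substitute $\Atilde_{i,j} = p(\etaVec)/\pij{i,j}(\etaVec)$, which is valid on $\etaSet$ by Lemma~\ref{lem:combinatorial_inequivalence}. Fixing row indices $I = \{i_1,i_2,i_3\}$ and column indices $J = \{j_1,j_2,j_3\}$, I expand the corresponding $3 \times 3$ minor by permutations and multiply through by the common denominator $\prod_{a,b=1}^{3} \pij{i_a,j_b}$ to obtain
\begin{equation*}
\Big(\prod_{a,b=1}^{3} \pij{i_a,j_b}(\etaVec)\Big) \cdot \det\!\big(\Atilde_{i_a,j_b}\big)_{a,b=1}^{3} = p(\etaVec)^3 \cdot Q_{I,J}(\etaVec),
\end{equation*}
where $Q_{I,J} := \sum_{\sigma \in S_3} \mathrm{sgn}(\sigma) \prod_{(a,b)\,:\, b \neq \sigma(a)} \pij{i_a,j_b}$. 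Each summand of $Q_{I,J}$ is a product of exactly six polynomials $\pij{i_a,j_b}$ (the entries of the $3 \times 3$ grid lying off the graph of $\sigma$), and since each factor has degree at most $1$ in every variable, $Q_{I,J}$ has degree at most $6$ in each variable.

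To conclude, observe that on $\etaSet$ neither $p(\etaVec)$ nor any $\pij{i_a,j_b}(\etaVec)$ vanishes, so the $3 \times 3$ minor at $(I,J)$ vanishes at $\etaVec$ if and only if $Q_{I,J}(\etaVec) = 0$. Since $\Cmult$ at $\etaVec^*$ is MDS but non-GRS, there is a choice of $(I,J)$ for which the $3 \times 3$ minor does not vanish at $\etaVec^*$; for this choice $Q_{I,J}(\etaVec^*) \neq 0$, so $P := Q_{I,J}$ is a non-zero polynomial. Any $\etaVec \in \etaSet$ giving a GRS code must make all $3 \times 3$ minors of $\Atilde$ vanish, hence $P(\etaVec) = 0$. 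The main obstacle is the degree bookkeeping: a naive accounting that keeps the $p(\etaVec)^3$ factor in the numerator gives the bound $9$ per variable, and one must crucially exploit that $p$ is non-vanishing on $\etaSet$ to cancel it and retain only the six-factor sum $Q_{I,J}$. A subtler point is that the hypothesis of a non-GRS $\etaVec^* \in \etaSet$ is essential for $P \not\equiv 0$; without it, the candidate polynomials $Q_{I,J}$ could all vanish identically, which is precisely the failure scenario highlighted in the footnote.
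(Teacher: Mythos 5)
Your proposal is correct and follows essentially the same route as the paper's proof: both use Lemma~\ref{lem:RS_characterization_minors} to reduce GRS-ness to the vanishing of $3\times 3$ minors of the element-wise inverse of $\AetaM$, substitute the rational parametrization $p/\pij{i,j}$ from Lemma~\ref{lem:combinatorial_inequivalence}, clear denominators to isolate a numerator that is a signed sum of products of six $\pij{i,j}$'s (hence of degree at most $6$ in each variable), and use the non-GRS $\etaVec^*$ to certify that this polynomial is not identically zero. Your explicit permutation-expansion of $Q_{I,J}$ just makes the paper's phrase ``a sum of products of six $p^{(i,j)}$'s each'' concrete.
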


\begin{proof}
Consider the systematic generator matrices $\Gmat^{(\mathrm{sys}, \etaVec)} = [\I \mid \AetaM]$ for all the codes indexed by $\etaVec$.
By Lemma~\ref{lem:RS_characterization_minors}, the code $\Cmult$ is GRS if and only if all $3 \times 3$ minors of the element-wise inverse of $\AetaM$ vanish.
Since there is an $\etaVec^* \in \etaSet$ such that $\TRS{\alphaVec,\tVec,\hVec,\etaVec^\ast}{n,k}{}$ is non-GRS, there is at least one non-zero minor of the element-wise inverse of $\Amat^{(\etaVec^*)}$.
Fix this minor (i.e., the same $3 \times 3$ submatrix) for all $\etaVec$.
We show that the $\etaVec$ for which this minor is zero are zeros of a polynomial $P$ as in the theorem statement.

By \cref{lem:combinatorial_inequivalence}, the entries of the element-wise inverse of $\Amat^{(\etaVec)}$ are evaluations at $\etaVec$ of rational functions $\tfrac{p}{p^{(i,j)}} \in \Fq(X_1,\dots,X_\ell)$, where $p,p^{(i,j)}$ are $\ell$-variate polynomials of degree at most $1$ in each variable which do not have a zero in $\etaSet$.
Hence, the fixed minor of the element-wise inverse of $\Amat^{(\etaVec)}$ is the evaluation at $\etaVec$ of a rational function $p^3 \tfrac{P}{Q}$, where $Q$ is the product of all nine $p^{(i,j)}$ in the $3 \times 3$ submatrix and $P$ is a sum of products of six $p^{(i,j)}$'s each. Thus, $P$ is a polynomial of degree at most $6$ in each variable.
As $Q$ and $p$ do not have a zero in $\etaSet$, the minor can only vanish at zeros of $P$.
Since $P(\etaVec^*) \neq 0$, the polynomial $P$ is non-zero.
\end{proof}

Theorem~\ref{thm:combinatorial_inequivalence} states that for given $\alphaVec$, $\tVec$, and $\hVec$, either all MDS twisted RS codes are GRS, or ``many'' are non-GRS.
We will quantify what we mean by ``many'' in the following, but first we give an example for which the polynomial $P$ in the proof of Theorem~\ref{thm:combinatorial_inequivalence} vanishes, i.e., all MDS twisted RS codes of this $\alphaVec$, $\tVec$, and $\hVec$ are GRS.

\begin{example}\label{ex:always_GRS_twisted_RS_codes}
Consider a twisted RS code over a field $\Fq$ with
\begin{equation*}
[n,k] = [6,3], \quad \ell = 1, \quad \tVec = 1, \quad \hVec = 2,
\end{equation*}
and evaluation points $\alphaVec = [\alpha_1,\dots,\alpha_6]$.
Let $\etaSet$ be the set of all $\etaVec$ such that the code $\Cmult$ is MDS.
Then, using the notation as in the proof of Lemma~\ref{lem:combinatorial_inequivalence}, the determinant of the $3 \times 3$ matrix $\Bmat^{(\etaVec)}$ is the evaluation at $\etaVec$ of the polynomial
\begin{align*}
p(X) = -\left[1+\left(\alpha_1+\alpha_2+\alpha_3\right)X\right]\prod_{\substack{i,j=1\\ i<j}}^{3}(\alpha_i-\alpha_j).
\end{align*}
The polynomials $p^{(i,j)}(X)$ ($i,j \in \{1,2,3\}$) as in Lemma~\ref{lem:combinatorial_inequivalence} are given as
\begin{align*}
p^{(i,j)}(X) &= \left(\alpha_{i_*}-\alpha_{j+3}\right)\left(\alpha_{i^*}-\alpha_{j+3}\right)\left(\alpha_{i_*}-\alpha_{i^*}\right) \\
& \quad \;  \cdot \left[1+\left(\alpha_{i_*}+\alpha_{i^*}+\alpha_{j+3}\right)X\right]
\end{align*}
where, or $i \in \{1,2,3\}$, we set $i_* := \min(\{1,2,3\} \setminus \{i\})$ and $i^* := \max(\{1,2,3\} \setminus \{i\})$.
As shown in Lemma~\ref{lem:combinatorial_inequivalence}, these polynomials are all non-zero (since the $\alpha_i$ are distinct) and of degree at most $1$.

Using the notation of the proof of Theorem~\ref{thm:combinatorial_inequivalence}, the determinant of the entry-wise inverse of $\Amat^{(\etaVec)}$ (note that $\Amat^{(\etaVec)}$ has only one $3 \times 3$ minor: the entire matrix) is the evalation at $\etaVec$ of the rational function $p^3 \tfrac{P}{Q}$, where
\begin{align*}
P &= - X^3 \left[1+X \sum_{i=1}^{3} \alpha_i \right]^2 \left[2+X \sum_{i=1}^{6} \alpha_i \right] \prod_{\substack{i,j=4\\ i<j}}^{6}(\alpha_i-\alpha_j), \\
Q &= \prod_{\substack{i,j,\kappa=1 \\ i<j}}^{3}\left[1+X\left(\alpha_i+\alpha_j+\alpha_{\kappa+3}\right)\right]\prod_{i,j=1}^{3} \left(\alpha_i-\alpha_{j+3}\right).
\end{align*}
Observe that $\deg P \leq 6$ and $\deg Q = \tbinom{3}{2}3 = 9$. Furthermore, the polynomial $Q$ has no zero in $\etaSet$ since each factor $\left[1+X\left(\alpha_i+\alpha_j+\alpha_{\kappa+3}\right)\right]$ is also a factor of the polynomial whose evaluation at $\etaVec$ is the determinant of the $3 \times 3$ submatrix of $\Gmat^{(\mathrm{can}, \etaVec)}$ consisting of the columns indexed by $i$, $j$, and $\kappa+3$ (which must be non-zero due to the MDS property).

By the same argument, the factor $1+X \sum_{i=1}^{3} \alpha_i$ of $P$ cannot have a zero in $\etaSet$.
The factor $X^3$ has only $\eta=0$ as a zero, which obviously yields a GRS code.
Hence, the code $\Cmult$ with $\etaVec \in \etaSet \setminus \{0\}$ is non-GRS if and only if $\etaVec$ is a zero of $2+X \sum_{i=1}^{6} \alpha_i$.
In particular, $P$ is the zero polynomial if and only if 
\begin{enumerate}
\item \label{itm:MDS_nonGRS_example_1} $\Fq$ has characteristic $2$ and
\item \label{itm:MDS_nonGRS_example_2} $\sum_{i=1}^{6} \alpha_i = 0$.
\end{enumerate}
This implies a few interesting observations:
\begin{itemize}
\item Since the second condition can be satisfied for $q = {2^m}$ if and only if $m \geq 4$, this gives a family of twisted RS codes with non-trivial parameters that are GRS for all $\etaVec \in \etaSet$.
\item If $\sum_{i=1}^{6} \alpha_i = 0$, but the characteristic of $\Fq$ is not $2$, then \emph{any} $\Cmult$ with $\etaVec \in \etaSet \setminus \{0\}$ is non-GRS. %
\item If the characteristic is $2$, but $\sum_{i=1}^{6} \alpha_i \neq 0$, then \emph{any} $\Cmult$ with $\etaVec \in \etaSet \setminus \{0\}$ is non-GRS. %
\item If the characteristic is not 2 and $\sum_{i=1}^{6} \alpha_i \neq 0$, then there is at most one $\etaVec \in \etaSet \setminus \{0\}$ such that the code $\Cmult$ is GRS.
\end{itemize}

\end{example}

\cref{thm:combinatorial_inequivalence} can be interpreted as follows: for fixed $n$, $k$, $\tVec$, $\hVec$, and $\alphaVec$, either \emph{all} $\etaVec$ corresponding to MDS codes are GRS, or only a number of them that is bounded by the number of roots of a non-zero $\ell$-variable polynomial of degree at most $6$ in each variable.

\begin{lemma}\label{lem:non_GRS_polynomial_roots}%
Let $P \in \Fq[X_1,\dots,X_\numTwists] \setminus \{0\}$ be a non-zero multivariate polynomial of degree at most $6$ in each variable, and $\etaSet = \etaSet_1 \times \dots \times \etaSet_\numTwists$, where $\etaSet_i \subseteq \Fq$ with $|\etaSet_i|>6$ for all $i$.
Then, $P$ has at most $\prod_{i=1}^{\numTwists} |\etaSet_i| - \prod_{i=1}^{\numTwists} (|\etaSet_i|-6)$ zeros in $\etaSet$.
\end{lemma}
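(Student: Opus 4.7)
The plan is to proceed by induction on the number of variables $\ell$, treating $P$ as a polynomial in one variable at a time over the coefficient ring $\Fq[X_1,\dots,X_{\ell-1}]$. The base case $\ell = 1$ is immediate: a nonzero univariate polynomial of degree at most $6$ has at most $6$ roots in $\etaSet_1$, which matches $|\etaSet_1| - (|\etaSet_1|-6) = 6$.

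For the inductive step, I would write $P = \sum_{i=0}^{6} P_i(X_1,\dots,X_{\ell-1})\, X_\ell^i$ and fix some $i^\ast$ with $P_{i^\ast} \neq 0$, which exists because $P$ is nonzero. Partition $\etaSet_1 \times \dots \times \etaSet_{\ell-1}$ into the ``bad'' tuples where $P_{i^\ast}$ vanishes (let $N$ denote their number) and the ``good'' complement. Since $P_{i^\ast}$ is a nonzero polynomial of degree at most $6$ in each of $\ell-1$ variables, the induction hypothesis applies and yields $N \leq \prod_{i=1}^{\ell-1} |\etaSet_i| - \prod_{i=1}^{\ell-1}(|\etaSet_i|-6)$. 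For a good tuple, $P(\eta_1,\dots,\eta_{\ell-1},X_\ell)$ is a nonzero univariate polynomial of degree at most $6$ (its coefficient of $X_\ell^{i^\ast}$ is nonzero), contributing at most $6$ zeros in $\etaSet_\ell$; for a bad tuple we bound the contribution trivially by $|\etaSet_\ell|$.

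Combining these bounds, the number of zeros of $P$ in $\etaSet$ is at most
\[
N \cdot |\etaSet_\ell| + \Big(\textstyle\prod_{i=1}^{\ell-1} |\etaSet_i| - N\Big) \cdot 6 = 6 \textstyle\prod_{i=1}^{\ell-1} |\etaSet_i| + N\,(|\etaSet_\ell|-6).
\]
Setting $A := \prod_{i=1}^{\ell-1} |\etaSet_i|$, $B := \prod_{i=1}^{\ell-1}(|\etaSet_i|-6)$ and $M := |\etaSet_\ell|$, the hypothesis $|\etaSet_\ell|>6$ makes $M-6 \geq 0$, so $N\,(M-6) \leq (A-B)(M-6)$ and the bound simplifies to $6A + (A-B)(M-6) = AM - B(M-6) = \prod_{i=1}^{\ell}|\etaSet_i| - \prod_{i=1}^{\ell}(|\etaSet_i|-6)$, as required.

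This is essentially a per-variable Schwartz--Zippel bound rather than the usual total-degree version. I do not expect any real obstacle: the only subtlety is verifying the telescoping identity at the end and checking that the inductive hypothesis applies to $P_{i^\ast}$, which has per-variable degree at most $6$ in the remaining variables, and that the condition $|\etaSet_i|>6$ is the same for all variables so the hypothesis transfers cleanly.
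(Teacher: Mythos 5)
Your proof is correct, but it takes a genuinely different route from the paper's. The paper's proof observes that the evaluation of $P$ on the grid $\etaSet$ is a codeword of the $\numTwists$-fold tensor product of Reed--Solomon codes with parameters $[|\etaSet_i|,7,|\etaSet_i|-6]$ and invokes the well-known fact that the minimum distance of a product code is the product of the component distances; a nonzero codeword therefore has at least $\prod_{i}(|\etaSet_i|-6)$ nonzero coordinates, which is exactly the claimed bound on the number of zeros. Your induction on the number of variables --- isolating a nonzero coefficient $P_{i^\ast}$ of $X_\numTwists^{i^\ast}$, bounding the ``bad'' fibers by the inductive hypothesis and the ``good'' fibers by the univariate root bound, and verifying the telescoping identity $6A+(A-B)(M-6)=AM-B(M-6)$ --- is in effect a self-contained proof of that product-code distance fact in this special case, i.e.\ a box-type Schwartz--Zippel bound. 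What the paper's argument buys is brevity by citation; what yours buys is elementarity and an explicit view of where the hypothesis $|\etaSet_i|>6$ enters (it makes $M-6\geq 0$ so the estimate on $N$ can be applied, and it makes the base-case bound $6$ coincide with $|\etaSet_1|-(|\etaSet_1|-6)$). Both arguments are complete and correct.
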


\begin{proof}
The evaluation of $P$ at all elements of $\etaSet$ gives a codeword of an $\ell$-fold product code of GRS codes of parameters $[n_i,k,d_i]$, where $n_i := |\etaSet_i|$, $k=7$, and $d_i = |\etaSet_i|-6$. It is well-known that such a code has length $n = \prod_{i=1}^{\ell} n_i$ and minimum distance $d = \prod_{i=1}^{\ell} d_i$, so any non-zero codeword has weight at least $d$. Hence, $P$ has at most $n-d$ zeros in $\etaSet$, which gives the claim.
\end{proof}

\cref{thm:combinatorial_inequivalence} and \cref{lem:non_GRS_polynomial_roots} imply the following corollary.

\begin{corollary}
Let $n$, $k$, $\tVec$, $\hVec$, and $\alphaVec$ be chosen as in \cref{def:twisted_RS_codes} such that there are sets $\etaSet_i \subseteq \Fq$ with $|\etaSet_i| > 6$ and $\Cmult$ is MDS for any $\etaVec \in \etaSet := \etaSet_1 \times \dots \times \etaSet_\numTwists$.
Then, either
\begin{itemize}
\item all $\Cmult$ with $\etaVec \in \etaSet$ are GRS codes or 
\item $\Cmult$ is a non-GRS MDS twisted RS code for at least a fraction
$
A := \prod_{i=1}^{\ell} \big(1-\frac{6}{|\etaSet_i|}\big)
$
of the elements $\etaVec$ in $\etaSet$.
\end{itemize}
In particular, for the MDS constructions in Section~\ref{ssec:MDS_general}:
\begin{itemize}
\item \cref{prop:MDS_condition_subfield_chain}: we have $\etaSet_i := \F_{q_i} \setminus \F_{q_{i-1}}$, hence, for $q_0 \geq 4$, %
\begin{equation*}
A = \textstyle\prod_{i=1}^{\ell}\big(1-\tfrac{6}{q_i-q_{i-1}}\big) \geq \big(1-\tfrac{6}{n(n-1)}\big)^\ell.
\end{equation*}
\item \cref{prop:MDS_condition_power_basis}: we have $\etaSet_i := \{a \psi \, : \, a \in \Fqsmall^*\}$, hence, for $q_0 \geq 8$, %
\begin{equation*}
A = \big(1-\tfrac{6}{q_0-1}\big)^\ell \geq \big(1-\tfrac{6}{n-1}\big)^\ell.
\end{equation*}
\end{itemize}
\end{corollary}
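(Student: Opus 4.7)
The plan is to combine \cref{thm:combinatorial_inequivalence} and \cref{lem:non_GRS_polynomial_roots}. First, I would apply \cref{thm:combinatorial_inequivalence} to the given $\etaSet = \etaSet_1 \times \dots \times \etaSet_\ell$: either every $\etaVec \in \etaSet$ yields a GRS code, in which case the first bullet holds, or there exists some $\etaVec^\ast \in \etaSet$ whose twisted RS code is non-GRS. In the latter case, the theorem provides a non-zero polynomial $P \in \Fq[X_1, \dots, X_\ell]$ of degree at most $6$ in each variable, such that every $\etaVec \in \etaSet$ with $\Cmult$ being GRS is a zero of $P$.

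Next, I would invoke \cref{lem:non_GRS_polynomial_roots} on $P$ over the product set $\etaSet$, whose hypotheses $|\etaSet_i| > 6$ are satisfied. This bounds the number of GRS-giving $\etaVec$ by $\prod_{i=1}^{\ell} |\etaSet_i| - \prod_{i=1}^{\ell}(|\etaSet_i|-6)$, so the number of non-GRS MDS codes in $\etaSet$ is at least $\prod_{i=1}^{\ell}(|\etaSet_i|-6)$. Dividing by $|\etaSet| = \prod_{i=1}^{\ell}|\etaSet_i|$ gives exactly the fraction $A = \prod_{i=1}^{\ell}\big(1-\tfrac{6}{|\etaSet_i|}\big)$.

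For the two specializations, the work reduces to estimating $|\etaSet_i|$ from below. In the setting of \cref{prop:MDS_condition_subfield_chain}, $|\etaSet_i| = q_i - q_{i-1}$; since the subfield chain is proper we have $q_i \geq q_{i-1}^2$, and since $q_{i-1} \geq q_0 \geq n$, this gives $|\etaSet_i| \geq q_{i-1}(q_{i-1}-1) \geq n(n-1)$. The assumption $q_0 \geq 4$ ensures $|\etaSet_i| \geq 12 > 6$, so the lemma applies, and monotonicity of $1-\tfrac{6}{x}$ in $x$ yields $A \geq \big(1-\tfrac{6}{n(n-1)}\big)^\ell$. In the setting of \cref{prop:MDS_condition_power_basis}, $|\etaSet_i| = |\Fqsmall^\ast| = q_0 - 1 \geq n-1$, and $q_0 \geq 8$ guarantees $|\etaSet_i| \geq 7 > 6$; the same monotonicity argument gives $A \geq \big(1-\tfrac{6}{n-1}\big)^\ell$.

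There is no real obstacle here since the two main ingredients are already available as standalone results. The only care needed is the bookkeeping: checking that $|\etaSet_i| > 6$ holds under the stated thresholds on $q_0$, and verifying that the MDS hypothesis of both \cref{thm:combinatorial_inequivalence} and \cref{lem:non_GRS_polynomial_roots} does hold for \emph{all} $\etaVec \in \etaSet$ in the two constructions (which is exactly what \cref{prop:MDS_condition_subfield_chain} and \cref{prop:MDS_condition_power_basis} assert, since in both cases every element of $\etaSet$ is $\Fqsmall$-sum-product free).
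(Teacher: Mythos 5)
Your proposal is correct and follows exactly the route the paper intends: the paper gives no separate proof of this corollary, stating only that it follows from \cref{thm:combinatorial_inequivalence} and \cref{lem:non_GRS_polynomial_roots}, which is precisely the combination you carry out, including the correct lower bounds $|\etaSet_i| = q_i - q_{i-1} \geq q_{i-1}(q_{i-1}-1) \geq n(n-1)$ and $|\etaSet_i| = q_0 - 1 \geq n-1$ in the two specializations.
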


The first conference version of this paper \cite{beelen2017twisted} contains several computer search results for twisted RS codes.
Among others, we counted inequivalent MDS twisted RS codes and non-GRS twisted RS codes for small parameters ($q \leq 13$).
The results show that for these parameters, most MDS twisted RS codes are non-GRS and there are also several parameters resulting in mutually inequivalent twisted RS codes.
We also compared twisted RS codes to Roth-Lempel codes \cite{roth1989mds}, whose definition is similar to our \plustw codes. The computer searches for small parameters show that the two code families are largely independent, i.e., only few of their equivalence classes intersect.
More details and tables can be found in \cite{beelen2017twisted}.

\section{Decoding}\label{sec:decoding}

Twisted RS codes can be decoded using a simple but expensive strategy: Use brute force to determine the twist coefficients $f_{h_i}$ for all $i=1,\dots,\numTwists$, for each choice subtract the evaluation of $\sum_{j=1}^{\numTwists} \eta_j f_{h_j} X^{k-1+t_j}$ from the received word, and decode in the corresponding Reed--Solomon code. This way, we obtain a decoder with complexity $q^\numTwists$ times the complexity of the RS decoder, and decoding radius equal to the used RS decoder. Note that the output list size is bounded by generic bounds on the list size, i.e., not necessarily exponential in $\numTwists$.

In this section, we present a decoding strategy that is often faster than this brute-force decoder.
This comes at the cost that we cannot rigorously prove that decoding works for any error vector up to the maximal decoding radius.
However, we present a variety of numerical results that indicate that the new decoder, for large decoding parameter, is able to decode up to almost half the minimum distance with overwhelming probability.

\subsection{Key Equations}

We fix a \emph{decoding parameter} $\zeta \in \ZZ_{\geq 0}$ and set up a system of key equations.
For notational convenience, we define $\Iset_\zeta := \big\{ \i \in \ZZ_{\geq 0}^\ell \, : \, \sum_{\mu=1}^{\ell} i_\mu \leq \decParam \big\}$ and, for $\mu \in \{1,\dots,\ell\}$, $\deltamu := [0,\dots,0,1,0,\dots,0]$ ($\mu$-th unit vector).
Note that
\begin{align}
|\Iset_{\zeta}| = \binom{\ell+\zeta}{\ell} \quad \text{and} \quad |\Iset_{\zeta+1}| = \tfrac{\ell+\zeta+1}{\zeta+1}|\Iset_{\zeta}|. \label{eq:size_Iset}
\end{align}

We assume that we are given a received word $\r = \c+\e \in \Fq^{n}$, where $\c := \ev{f}{\alphaVec}$, for $f \in \evpolys$, is a codeword of a twisted RS code $\Cmult$, and $\e \in \Fq^{n}$ is an error of Hamming weight $\wtH(\e)=t$ and support $\Eset = \supp(\e) := \{i \, : \, e_i \neq 0\}$.
We define the polynomials $\Lambda := \prod_{i \in \Eset} (X-\alpha_i)$ (error locator polynomial), $g := \sum_{i=0}^{k-1} f_i X^i$, where the $f_i$'s are the coefficients of $f$, $G := \prod_{i=1}^{n} (X-\alpha_i)$, and $R$ to be the unique polynomial of degree $<n$ with $R(\alpha_i) = r_i$ for all $i=1,\dots,n$ (interpolation polynomial of the received word).
Define
\begin{align*}
\Lambda_\i &:=  \Lambda \prod_{\mu=1}^{\ell} f_{h_\mu}^{i_\mu} && \forall \, \i \in \Iset_{\decParam+1}, \\
\Psi_\j &:=  \Lambda_\j g && \forall \, \j \in \Iset_\decParam.
\end{align*}

The following system of key equations relates the notions defined above. %

\begin{theorem}[Key Equations]\label{thm:key_equations}
Consider the setting and notation above. %
Then, we have for all $\i \in \Iset_\decParam$
\begin{align*}
\Lambda_\i R \equiv \Psi_\i + \sum_{\mu=1}^{\ell} \Lambda_{(\i+\deltamu)} \eta_\mu X^{k-1+t_\mu} \pmod G.
\end{align*}
Furthermore, we have
\begin{align*}
\deg \Lambda_\i &\leq \deg \Lambda_\0 &&\forall \, \i \in \Iset_{\decParam+1}, \\
\deg \Psi_\j &\leq \deg \Lambda_\0+k-1 &&\forall \, \j \in \Iset_{\decParam}.
\end{align*}
\end{theorem}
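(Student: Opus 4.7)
The plan is to derive the congruence from a single, basic identity and then specialize it. The starting observation is that, for every index $i \in \{1,\dots,n\}$, one has $\Lambda(\alpha_i)\,e_i = 0$: either $i \in \Eset$, in which case $\alpha_i$ is a root of $\Lambda$, or $e_i=0$. Since $(R-f)(\alpha_i)=r_i-c_i=e_i$, this means $\Lambda(R-f)$ vanishes at every $\alpha_i$ and is therefore divisible by $G = \prod_{i=1}^n (X-\alpha_i)$. Multiplying by the constant $\prod_\mu f_{h_\mu}^{i_\mu}$ upgrades this to $\Lambda_\i(R-f) \equiv 0 \pmod G$ for every $\i \in \Iset_{\decParam+1}$, which is the core fact.

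From here I would just expand $\Lambda_\i f$. Splitting $f = g + \sum_{\mu=1}^{\ell} \eta_\mu f_{h_\mu} X^{k-1+t_\mu}$ according to the definition of $\evpolys$ and multiplying by $\Lambda_\i$ gives
\begin{align*}
\Lambda_\i f \;=\; \Lambda_\i g \;+\; \sum_{\mu=1}^{\ell} \eta_\mu f_{h_\mu}\Bigl(\Lambda \prod_{\nu=1}^{\ell} f_{h_\nu}^{i_\nu}\Bigr) X^{k-1+t_\mu}.
\end{align*}
The inner parenthesis together with the extra factor $f_{h_\mu}$ is precisely $\Lambda_{\i+\deltamu}$, while $\Lambda_\i g = \Psi_\i$ by definition. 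Substituting and using $\Lambda_\i R \equiv \Lambda_\i f \pmod G$ yields the claimed key equation.

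For the degree bounds, $\Lambda_\i$ is just $\Lambda$ times a scalar, so $\deg \Lambda_\i \leq \deg \Lambda = \deg \Lambda_\0$, with equality unless some $f_{h_\mu}=0$ and $i_\mu > 0$ (in which case $\Lambda_\i=0$ and the bound still holds under the $\deg 0 = -\infty$ convention). Similarly, $\Psi_\j = \Lambda_\j g$ with $\deg g \leq k-1$, giving $\deg \Psi_\j \leq \deg \Lambda_\j + k-1 \leq \deg \Lambda_\0 + k-1$. I do not foresee any real obstacle here; the only subtle point is to recognize that the congruence only needs $\Lambda(R-f) \equiv 0 \pmod G$, which holds regardless of the twist, and that multiplication by the scalars $\prod_\mu f_{h_\mu}^{i_\mu}$ is what transforms this single relation into the family indexed by $\i \in \Iset_\decParam$.
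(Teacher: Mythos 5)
Your proposal is correct and follows essentially the same route as the paper: establish $\Lambda R \equiv \Lambda f \pmod G$ from $[\Lambda(R-f)](\alpha_i)=0$, expand $f$ via its twisted structure, multiply by the scalars $\prod_\mu f_{h_\mu}^{i_\mu}$, and read off the degree bounds from the definitions. Your extra remark about the $\Lambda_\i = 0$ case is a harmless refinement of the same argument.
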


\begin{proof}
We have $\Lambda R \equiv \Lambda f \pmod G$ since $[\Lambda(R-f)](\alpha_i)=0$ for all $i=1,\dots,n$. By the structure of $f$, we thus have
\begin{align}
\Lambda R \equiv \Lambda g + \sum_{\mu=1}^{\ell} \Lambda f_{h_\mu} \eta_\mu X^{k-1+t_\mu} \pmod G. \label{eq:core_key_equation}
\end{align}
Multiplying \eqref{eq:core_key_equation} with $\prod_{\mu=1}^{\ell} f_{h_\mu}^{i_\mu}$ gives the result. The degree bounds follow immediately from the definition, $\deg g <k$, and the fact that the $f_{h_\mu}$ are scalars.
\end{proof}

Solving the system of key equations in \cref{thm:key_equations} for the unknowns $\Lambda$, $g$, and $f_{h_\mu}$ is a non-linear problem.
To find a solution efficiently, we linearize the problem as follows.

\begin{problem}\label{prob:pade_approx}
Given $\tVec$ and $\etaVec$, and let $\r$ be a received word. Denote by $R$ and $G$ the polynomials defined above \cref{thm:key_equations}. %
Find polynomials $(\lambda_\i)_{\i \in \Iset_{\decParam+1}}$ and $(\psi_\j)_{\j \in \Iset_{\decParam}}$, not all zero, such that
\begin{align}
\lambda_\i R &\equiv \psi_\i + \sum_{\mu=1}^{\ell} \lambda_{(\i+\deltamu)} \eta_\mu X^{k-1+t_\mu} \pmod G, \label{eq:pade_congruence}\\
\deg \lambda_\i &\leq \deg \lambda_\0, \label{eq:pade_deg_lambda} \\
\deg \psi_\i &\leq \deg \lambda_\0+k-1, \label{eq:pade_deg_psi}
\end{align}
 for all $\i \in \Iset_\zeta$. %
\end{problem}

For a solution $(\lambda_\i)_{\i \in \Iset_{\decParam+1}}$ and $(\psi_\j)_{\j \in \Iset_{\decParam}}$ of \cref{prob:pade_approx}, we call $\deg \lambda_\0$ the \emph{degree} of the solution.

The problem is related to the decoding problem as follows:
$(\lambda_\i = \Lambda_\i)_{\i \in \Iset_{\decParam+1}}$, $(\psi_\j = \Psi_\j)_{\j \in \Iset_{\decParam}}$ is a solution of \cref{prob:pade_approx} of degree $t$, where $t = \wtH(\e)$ is the number of errors.
As we want to find the error locator polynomial $\Lambda_\0$ of minimal degree (i.e., the one corresponding to an error of smallest weight), we aim at finding a solution of \cref{prob:pade_approx} of smallest-possible degree.
If all goes well and there are no generic solutions of the problem of equal
or smaller degree, then we indeed find the solution $(\lambda_\i = \Lambda_\i)_{\i \in \Iset_{\decParam+1}}$, $(\psi_\j = \Psi_\j)_{\j \in \Iset_{\decParam}}$ or a scalar multiple thereof.
We can thus obtain $g$, i.e., the lowest $k$ coefficients of the message polynomial, by division
\begin{equation*}
g = \frac{\psi_\0}{\lambda_\0}.
\end{equation*}

\begin{algorithm}
	\caption{Decoding Algorithm for Twisted RS Codes}
	\label{alg:decoder}
	\SetKwInOut{Input}{Input}\SetKwInOut{Output}{Output}
	\Input{Received Word $\r$, code $\Cmult$, and decoder parameter $\decParam$}
	\Output{A closest codeword $\c \in \Cmult$ to $\r$, or ``decoding failure''.}
	Compute $R$ and $G$ as defined above \cref{thm:key_equations} \label{line:R_and_G} \\ %
	$(\lambda_\i)_{\i \in \Iset_{\decParam+1}}, \, (\psi_\j)_{\j \in \Iset_{\decParam}} \gets$ solution of minimal degree of \cref{prob:pade_approx} with input $R$ and $G$. \label{line:solve_linearized_problem} \\
	\If{$\lambda_\0$ divides $\psi_\0$ \label{line:divisibility_check}}{
		$g \gets \psi_\0/\lambda_\0$ \label{line:division} \\
		$\c \gets \ev{\sum_{i=0}^{k-1} g_i X^i + \sum_{j=1}^{\numTwists} \eta_j g_{h_j} X^{k-1+t_j}}{\alphaVec}$ \label{line:evaluation} \\
		\If{$\dH(\c,\r) \leq \lfloor \tfrac{n-k}{2} \rfloor$} {
		\Return{$\c$}
		}
	}
	\Return{``decoding failure''}

\end{algorithm}
The resulting decoder is summarized in \cref{alg:decoder}.
Note that a minimal solution of \cref{prob:pade_approx} can be found by solving the linear system of equations for any $\tau = \deg \lambda_\0 = 0,1,2,\dots$ (w.l.o.g.\ we choose $\lambda_\0$ to be monic), implied by the congruence \eqref{eq:pade_congruence} and degree constraints \eqref{eq:pade_deg_lambda} and \eqref{eq:pade_deg_psi}, until a solution exists.
In \cref{app:efficiently_decoding}, we show that the decoding algorithm can be implemented more efficiently, more precisely with complexity
\begin{align}
O^\sim\!\left( \left(e\tfrac{\ell+\zeta+1}{\ell}\right)^{\ell \omega} n\right) \label{eq:complexity}
\end{align}
operations in $\Fq$, where $e$ is Euler's number.

\subsection{Decoding Radius}\label{ssec:decoding_radius}

The new decoder is a partial unique decoder,
which means that for some error weights, some error patterns cannot be corrected, but if it works, then it returns a unique decoding solution.
We informally call the maximal value $\tau$ up to which the decoder returns $\c$ from the input $\r = \c+\e$ for the majority of the error vectors $\e$ of weight $\tau$ the \emph{decoding radius} of the new decoder, and denote it by $\taumax$.

Although we have no failure probability bound or the like for the new decoding algorithm, we present some heuristic arguments in this section that lead to expected upper and lower bounds of the decoding radius.
Our numerical results in Section~\ref{ssec:decoding_numerical_results} verify the expectation on various examples, with only very few exceptions.

Recall that $(\lambda_\i = \Lambda_\i)_{\i \in \Iset_{\decParam+1}}$, $(\psi_\j = \Psi_\j)_{\j \in \Iset_{\decParam}}$ is a solution of \cref{prob:pade_approx} of degree $\tau$, where $\tau = \wtH(\e)$. Furthermore, $\lambda_\0$ is monic and of degree $\tau$.
Decoding succeeds if $(\lambda_\i = \Lambda_\i)_{\i \in \Iset_{\decParam+1}}$, $(\psi_\j = \Psi_\j)_{\j \in \Iset_{\decParam}}$ is the only solution of \cref{prob:pade_approx} of degree $\tau$ and monic $\lambda_\0$, and there is no solution of \cref{prob:pade_approx} of smaller degree. Note that the other direction is not necessarily true.

All solutions of \cref{prob:pade_approx} of degree exactly $\tau$ and monic $\lambda_\0$ can be determined by an inhomogeneous linear system of equations, where the unknowns are the coefficients of the $\lambda_\i$ (except for the leading term of $\lambda_\0$, which is set to $1$) and $\psi_\i$ (the number of coefficients, and thus unknowns is determined by the degree bounds \eqref{eq:pade_deg_lambda} and \eqref{eq:pade_deg_psi}) and whose equations are given by the congruence relations \eqref{eq:pade_congruence}.
This means that the system has $\mathrm{NE} = n |\Iset_{\decParam}|$ equations and $\mathrm{NV} = |\Iset_{\decParam+1}|(\tau+1)+(\tau+k)|\Iset_{\decParam}|-1$ variables. %
The matrix of the linear system is of the form
\begin{align}
\begin{bmatrix}
\Rmat & \I_{n \times (k+\tau)} & \A \\
\0 & \0 & \B
\end{bmatrix}, \label{eq:decoding_linear system}
\end{align}
where $\Rmat \in \Fq^{n \times \tau}$ (depends on the received word $\r$), $\A \in \Fq^{n \times (\mathrm{NV}-2\tau-k)}$, $\B = \Fq^{(\mathrm{NE}-n) \times (\mathrm{NV}-2\tau-k)}$ (depends on the received word $\r$), and $\I_{n \times (k+\tau)}$ is an $n \times (k+\tau)$ matrix with ones on the diagonal and zero otherwise.
The columns of the submatrix $\begin{bmatrix}
\Rmat \\ \0
\end{bmatrix}$ correspond to the coefficients $0,\dots,\tau-1$ of $\lambda_\0$ and the columns of $\begin{bmatrix}
\I_{n \times (k+\tau)} \\  \0
\end{bmatrix}$ correspond to the coefficients of $\psi_\0$.

The decoding radius corresponds to the maximal integer $\tau$ for which, for the majority of error vectors of weight $\tau$, the linear system of equations has a unique solution and no solution for smaller values of $\tau$.

The linear system has $\mathrm{NE} = n |\Iset_{\decParam}|$ equations and $\mathrm{NV} = |\Iset_{\decParam+1}|(\tau+1)+(\tau+k)|\Iset_{\decParam}|-1$ variables.
Hence, if $\tau$ is the number of errors, Problem~\ref{prob:pade_approx} has more than one solution for
\begin{align}
\mathrm{NE}+2 & \leq \mathrm{NV}  \notag \\
\Leftrightarrow \quad n |\Iset_{\decParam}|+3 &\leq |\Iset_{\decParam+1}|(\tau+1)+(\tau+k)|\Iset_{\decParam}| \notag\\
\Leftrightarrow \, \tau \big( |\Iset_{\decParam+1}| + |\Iset_{\decParam}| \big) &\geq (n-k) |\Iset_{\decParam}|-|\Iset_{\decParam+1}|+2  \notag\\
\Leftrightarrow \; \tau \geq \tfrac{|\Iset_{\decParam}|}{|\Iset_{\decParam+1}| + |\Iset_{\decParam}|}(n-k) &-\tfrac{|\Iset_{\decParam+1}|-3}{|\Iset_{\decParam+1}| + |\Iset_{\decParam}|}\notag \\
= \tfrac{\zeta+1}{2(\zeta+1)+\ell}(n-k) &-\tfrac{\zeta+\ell+1-3(\zeta+1)\binom{\ell+\zeta}{\ell}^{-1}}{2(\zeta+1)+\ell}, \label{eq:decoding_radius}
\end{align}
where we use \eqref{eq:size_Iset} to obtain the last line. 
Since the matrix $\Rmat$ and parts of the matrix $\B$ in \eqref{eq:decoding_linear system} depend on the received word $\r$ and appear to behave somewhat like random matrices for random errors, we expect that the decoder behaves as follows: for the majority of error vectors of weight $\tau$, for $\tau$ smaller than the right-hand side of \eqref{eq:decoding_radius}, the linear system has only one solution, $(\lambda_\i = \Lambda_\i)_{\i \in \Iset_{\decParam+1}}$, $(\psi_\j = \Psi_\j)_{\j \in \Iset_{\decParam}}$, and no solution of smaller degree.

Based on these observations, we expect that the decoding radius is at least as large as 
\begin{align}
\tauLB &:= \left\lceil\tfrac{\zeta+1}{2(\zeta+1)+\ell}(n-k) -\tfrac{\zeta+\ell+1-3(\zeta+1)\binom{\ell+\zeta}{\ell}^{-1}}{2(\zeta+1)+\ell} \right\rceil -1 \notag \\
&\approx \tfrac{\zeta+1}{2(\zeta+1)+\ell}(n-k). \label{eq:tauLB}
\end{align}

If we inspect the linear system closer, we observe that above the radius $\tauLB$, even if a solution $\lambda_\i$, $\psi_\j$ is not unique, the polynomials $\lambda_\0$ and $\psi_\0$ may be the same for all solutions.
Hence, in this case, Algorithm~\ref{alg:decoder} is able to retrieve the correct error positions from any solution.

Consider again the system matrix in \eqref{eq:decoding_linear system}.
If Problem~\ref{prob:pade_approx} has multiple solutions, but $\lambda_\0$ and $\psi_\0$ are the same for all of them, then
the rank of the entire matrix is less than $\mathrm{NV}$ (i.e., the number of columns), but
we have $\rank
\begin{bmatrix}
\Rmat & \I_{n \times (k+\tau)} \\ \0 & \0
\end{bmatrix} = k+2\tau$ and the column spaces of $\begin{bmatrix}
\Rmat \\ \0
\end{bmatrix}$ and $\begin{bmatrix}
	\Rmat & \I_{n \times (k+\tau)} \\ \0 & \0
\end{bmatrix}$ do not intersect.
It is quite involved to predict only from the code parameters $n,k,\tVec,\hVec,\etaVec,\alphaVec$ for which exact values of $\tau$ these properties are fulfilled with high probability, since the matrices $\Rmat$ and $\B$ depend on the received word $\r$.
However, it is clear that $\rank
\begin{bmatrix}
\Rmat & \I_{n \times (k+\tau)} \\ \0 & \0
\end{bmatrix} < k+2\tau$ for $\tau > \tfrac{n-k}{2}$, which gives an upper bound on the decoding radius.
In summary, we get the following conjecture.

\begin{expectation}\label{conj:decoding_radius}
	The decoding radius $\tau_\mathsf{max}$ of Algorithm~\ref{alg:decoder} satisfies
	\begin{align*}
	\tauLB \leq \tau_\mathsf{max} \leq \lfloor \tfrac{n-k}{2} \rfloor,
	\end{align*}
	where $\tauLB$ is defined as in \eqref{eq:tauLB}.
\end{expectation}
Our numerical results in Section~\ref{ssec:decoding_numerical_results} confirm this expectation for various parameters, with only very few exceptions.
Furthermore, the numerical results show that for randomly chosen errors of a given weight $\tau$, the success probability of the decoding is close to $1$ for $\tau$ up to the decoding radius, and close to $0$ above.
Note that, for a given $\varepsilon>0$, we may choose $\decParam \geq \tfrac{1-\varepsilon}{2\varepsilon} \ell -1$ and get
\begin{align*}
\tauLB \geq (1-\varepsilon) \tfrac{n-k}{2}.
\end{align*}
Hence, $\tauLB$
converges to $\lfloor \tfrac{n-k}{2} \rfloor$ for growing decoding parameter~$\decParam$.

Furthermore, for given $\varepsilon$, we can rewrite the decoding complexity expression of \eqref{eq:complexity} into
$O^\sim\!\left( \left(\tfrac{e}{2\varepsilon}\right)^{\ell \omega} n\right)$
operations in $\Fq$.
For comparison, a brute-force decoder for correcting the same number of errors costs $O^\sim\!\left( q^\ell n\right)$ operations in $\Fq$. Hence, the new decoder is faster for
\begin{align*}
\left(\tfrac{e}{2 \varepsilon}\right)^\omega \ll q,
\end{align*}
Note that the left-hand side does not depend on the code length $n$ or the field size $q$.

\subsection{Numerical Results}\label{ssec:decoding_numerical_results}

In the following, we present numerical results obtained through Monte-Carlo simulations, which verify the expectation on the decoding radius for a variety of code and decoder parameters.

\mysubsubsection{Monte-Carlo Simulations}

We consider the code parameters $q \in \{23,64,101\}$, $n=q-1$, code rates $\approx 0.3,0.5,0.7$, and number of twists $\ell =1,2,3$.
For a fixed parameter tuple $[q,n,k,\ell]$, we selected $50$ twisted RS codes at random in the following way:
\begin{itemize}
\item $\{\alpha_1,\dots,\alpha_n\}$ is chosen uniformly at random from the set of subsets of $\Fq^\ast$ of cardinality $n$.
\item $\tVec$, $\hVec$ is chosen uniformly at random from the set of valid twist/hook vectors with distinct entries, respectively.
\item $\etaVec$ is entry-wise chosen uniformly at random from~$\Fq^\ast$.
\end{itemize}
Note that these twisted RS codes are not necessarily MDS codes. In total, we created $1350$ random codes.

Then, for each such random code, we performed, for several decoding parameters $\decParam \in \{2,4,6\}$ and decoding radii $\tau \in \{\max\{0,\tauLB-2\}, \dots, \lfloor \tfrac{n-k}{2} \rfloor\}$, the following Monte-Carlo simulation:
\begin{itemize}
	\item Draw a codeword $\c$ of $\Cmult$ uniformly at random
	\item Draw an error $\e$ uniformly at random from the set of vectors of Hamming weight $\tau$
	\item Decode with decoding parameter $\decParam$
	\item If the decoder returns $\c$, declare a \emph{success}. Otherwise, declare a \emph{failure}\footnote{Note that this notion of failure includes the ``decoding failure'' declared (and noticed) by the decoder, as well as decoding errors (the decoder returns a valid codeword not equal to $\c$, also called miscorrections).}.
\end{itemize}
We performed this simulation $1000$ times for each parameters set and estimated the failure probability of the decoder for this code, $\decParam$ and radius $\tau$.
In total, we obtained $\approx 1.7 \cdot 10^7$ samples of the Monte-Carlo simulations.

\mysubsubsection{Tables}

Tables~\ref{tab:q=23}, \ref{tab:q=64}, and \ref{tab:q=101} contain the following information extracted from these Monte-Carlo simulations:
\begin{itemize}
\item For each code $\Cmult$ and each decoding parameter $\decParam$, we determine the decoding radius as the maximal value of $\tau$ for which the estimated failure probability is $<0.2$.
\item Each row of the table corresponds to a parameter set $[q,n,k,\ell]$ and decoding parameter $\decParam$.
We display the numbers of codes of the parameter set (out of $50$) which have a certain decoding radius $\taumax$.
\item In each row of the table, the entry below the expected lower bound on the decoding radius, $\taumax = \tau_{LB}$, is marked by a superscript $\mathsf{L}$, and similarly the entry below the upper bound $\taumax= \lceil \frac{n-k}{2}\rceil$ is marked by superscript $\mathsf{U}$. The cells corresponding to the expected range of the decoding radius have gray background color.
\item The table also contains the following three probabilities:
\begin{itemize}
	\item $\PfTwoBelow$ is the maximal observed failure probability one below the decoding radius of a code, i.e., at $\taumax-1$, maximized over all $50$ codes in this row.
	\item $\PfOneBelow$ is the maximal observed failure probability at the decoding radius of a code, i.e., at $\taumax$, maximized over all $50$ codes in this row.
	\item $\PfAbove$ is the minimal observed failure probability one above the decoding radius of a code, i.e., at $\taumax+1$, minimized over all $50$ codes in this row.
\end{itemize}
Note that we display only the ``worst'' probabilities (out of $50$ codes) for each row, and that the three probabilities may correspond to different codes.
\end{itemize}

\begin{table}[ht]
	\caption{Table for $q=23$ and $n=22$. See Section~\ref{ssec:decoding_numerical_results} for the description.}
	\label{tab:q=23}
	\setlength{\tabcolsep}{2pt}	
	\begin{center}
		\begin{tabular}{c|c|c||c|c|c|c|c|c|c|c||c|c|c}
			\multicolumn{3}{c||}{Para-} & \multicolumn{8}{c||}{Number of codes (out of $50$)} & \multicolumn{3}{c}{Observed} \\
			\multicolumn{3}{c||}{meters} & \multicolumn{8}{c||}{that have $\tau_\mathsf{max}=$} & \multicolumn{3}{c}{Failure Rates} \\
			$k$ & $\ell$ & $\zeta$ & $0$ & $1$ & $2$ & $3$ & $4$ & $5$ & $6$ & $7$ & $\PfTwoBelow$ & $\PfOneBelow$ & $\PfAbove$ \\
			\hline \hline
			$7$ & $1$ & $2$ &     &     &     &     & $0$ & $0$ & $43^\mathsf{L}$ \colorthiscell & $7^\mathsf{U}$ \colorthiscell & $0.000$ & $0.047$ & $0.905$ \\
			&     & $4$ &     &     &     &     & $0$ & $0$ & $43^\mathsf{L}$ \colorthiscell & $7^\mathsf{U}$ \colorthiscell & $0.000$ & $0.004$ & $0.915$ \\
			&     & $6$ &     &     &     &     & $0$ & $0$ & $43^\mathsf{L}$ \colorthiscell & $7^\mathsf{U}$ \colorthiscell & $0.000$ & $0.005$ & $0.905$ \\
			\hline
			& $2$ & $2$ &     &     &     & $0$ & $0$ & $42^\mathsf{L}$ \colorthiscell & $8$ \colorthiscell & $0^\mathsf{U}$ \colorthiscell & $0.000$ & $0.073$ & $0.890$ \\
			&     & $4$ &     &     &     & $0$ & $0$ & $29^\mathsf{L}$ \colorthiscell & $21$ \colorthiscell & $0^\mathsf{U}$ \colorthiscell & $0.001$ & $0.057$ & $0.878$ \\
			\hline
			& $3$ & $2$ &     &     & $0$ & $0$ & $36^\mathsf{L}$ \colorthiscell & $13$ \colorthiscell & $1$ \colorthiscell & $0^\mathsf{U}$ \colorthiscell & $0.000$ & $0.089$ & $0.861$ \\
			\hline
			$11$ & $1$ & $2$ &     &     & $0$ & $0$ & $40^\mathsf{L}$ \colorthiscell & $10^\mathsf{U}$ \colorthiscell &     &     & $0.000$ & $0.004$ & $0.909$ \\
			&     & $4$ &     &     & $0$ & $0$ & $40^\mathsf{L}$ \colorthiscell & $10^\mathsf{U}$ \colorthiscell &     &     & $0.000$ & $0.004$ & $0.906$ \\
			&     & $6$ &     &     & $0$ & $0$ & $40^\mathsf{L}$ \colorthiscell & $10^\mathsf{U}$ \colorthiscell &     &     & $0.000$ & $0.004$ & $0.906$ \\
			\hline
			& $2$ & $2$ &     & $0$ & $0$ & $31^\mathsf{L}$ \colorthiscell & $19$ \colorthiscell & $0^\mathsf{U}$ \colorthiscell &     &     & $0.000$ & $0.087$ & $0.896$ \\
			&     & $4$ &     &     & $0$ & $0$ & $50^\mathsf{L}$ \colorthiscell & $0^\mathsf{U}$ \colorthiscell &     &     & $0.007$ & $0.077$ & $0.955$ \\
			\hline
			& $3$ & $2$ &     & $0$ & $0$ & $47^\mathsf{L}$ \colorthiscell & $3$ \colorthiscell & $0^\mathsf{U}$ \colorthiscell &     &     & $0.005$ & $0.119$ & $0.921$ \\
			\hline
			$15$ & $1$ & $2$ & $0$ & $0$ & $36^\mathsf{L}$ \colorthiscell & $14^\mathsf{U}$ \colorthiscell &     &     &     &     & $0.000$ & $0.004$ & $0.900$ \\
			&     & $4$ & $0$ & $0$ & $36^\mathsf{L}$ \colorthiscell & $14^\mathsf{U}$ \colorthiscell &     &     &     &     & $0.000$ & $0.005$ & $0.907$ \\
			&     & $6$ & $0$ & $0$ & $36^\mathsf{L}$ \colorthiscell & $14^\mathsf{U}$ \colorthiscell &     &     &     &     & $0.000$ & $0.004$ & $0.900$ \\
			\hline
			& $2$ & $2$ & $0$ & $0$ & $50^\mathsf{L}$ \colorthiscell & $0^\mathsf{U}$ \colorthiscell &     &     &     &     & $0.000$ & $0.102$ & $0.958$ \\
			&     & $4$ & $0$ & $0$ & $50^\mathsf{L}$ \colorthiscell & $0^\mathsf{U}$ \colorthiscell &     &     &     &     & $0.000$ & $0.098$ & $0.957$ \\
			\hline
			& $3$ & $2$ & $2$ & $46^\mathsf{L}$ \colorthiscell & $2$ \colorthiscell & $0^\mathsf{U}$ \colorthiscell &     &     &     &     & $0.000$ & $0.000$ & $0.879$
		\end{tabular}
	\end{center}
\end{table}

\mysubsubsection{Observations}

It can be seen that for the vast majority of the codes, the decoding radius indeed lies between $\tauLB$ and $\lfloor \tfrac{n-k}{2} \rfloor$. This confirms our expectation that we derived heuristically in the previous subsection.
There are only very few exceptions: e.g., for
\begin{itemize}
	\item $[q,n,k,\ell] = [23,22,15,3]$ and $\decParam=2$
	\item $[q,n,k,\ell] = [64,63,19,2]$ and $\decParam=2$
	\item $[q,n,k,\ell] = [64,63,32,1]$ and $\decParam=2$
\end{itemize}
there are $2$, $1$, and $6$ codes, respectively, whose decoding radius is one below $\tauLB$. These are $9$ exceptions out of in total $2700$ code/decoding parameter pairs. Furthermore, in all exceptional cases, the decoding radius is only one below the expected smallest decoding radius.

From the values of $\PfTwoBelow$, $\PfOneBelow$, and $\PfAbove$, it can also be seen that the failure probability changes sharply around the decoding radius: for many parameters, the worst observed failure probability at the decoding radius is very small, e.g. $0.004$ for some codes. One below the decoding radius, the observed failure probability is $0$ for most parameters (recall that the number of samples is $1000$, so we can say that it is $\lessapprox 10^{-3}$ with some confidence). Above the decoding radius, the failure probability is always close to $1$, as expected.

\begin{table*}
\caption{Table for $q=64$ and $n=63$. See Section~\ref{ssec:decoding_numerical_results} for the description.}
\label{tab:q=64}
\begin{center}
\setlength{\tabcolsep}{2pt}	
\begin{tabular}{c|c|c||c|c|c|c|c|c|c|c|c|c|c|c|c|c|c|c|c|c|c|c||c|c|c}
	\multicolumn{3}{c||}{Parameters} & \multicolumn{20}{c||}{Number of codes (out of $50$) that have $\tau_\mathsf{max}=$} & \multicolumn{3}{c}{Observed Failure Rates} \\
	\,$k$\, & \,$\ell$\, & $\zeta$ & $3$ & $4$ & $5$ & $6$ & $7$ & $8$ & $9$ & $10$ & $11$ & $12$ & $13$ & $14$ & $15$ & $16$ & $17$ & $18$ & $19$ & $20$ & $21$ & $22$ & $\PfTwoBelow$ & $\PfOneBelow$ & $\PfAbove$ \\
	\hline \hline
	$19$ & $1$ & $2$ &     &     &     &     &     &     &     &     &     &     &     &     &     & $0$ & $0$ & $17^\mathsf{L}$ \colorthiscell & $11$ \colorthiscell & $15$ \colorthiscell & $7$ \colorthiscell & $0^\mathsf{U}$ \colorthiscell & $0.000$ & $0.034$ & $0.973$ \\
	    &     & $4$ &     &     &     &     &     &     &     &     &     &     &     &     &     &     & $0$ & $0$ & $17^\mathsf{L}$ \colorthiscell & $20$ \colorthiscell & $13$ \colorthiscell & $0^\mathsf{U}$ \colorthiscell & $0.001$ & $0.034$ & $0.976$ \\
	    &     & $6$ &     &     &     &     &     &     &     &     &     &     &     &     &     &     &     & $0$ & $0$ & $32^\mathsf{L}$ \colorthiscell & $18$ \colorthiscell & $0^\mathsf{U}$ \colorthiscell & $0.001$ & $0.037$ & $0.981$ \\
	\hline
	    & $2$ & $2$ &     &     &     &     &     &     &     &     &     &     &     & $0$ & $1$ & $12^\mathsf{L}$ \colorthiscell & $15$ \colorthiscell & $11$ \colorthiscell & $10$ \colorthiscell & $1$ \colorthiscell & $0$ \colorthiscell & $0^\mathsf{U}$ \colorthiscell & $0.000$ & $0.021$ & $0.971$ \\
	    &     & $4$ &     &     &     &     &     &     &     &     &     &     &     &     & $0$ & $0$ & $1^\mathsf{L}$ \colorthiscell & $21$ \colorthiscell & $16$ \colorthiscell & $12$ \colorthiscell & $0$ \colorthiscell & $0^\mathsf{U}$ \colorthiscell & $0.001$ & $0.031$ & $0.975$ \\
	\hline
	    & $3$ & $2$ &     &     &     &     &     &     &     &     &     & $0$ & $0$ & $12^\mathsf{L}$ \colorthiscell & $18$ \colorthiscell & $10$ \colorthiscell & $4$ \colorthiscell & $5$ \colorthiscell & $1$ \colorthiscell & $0$ \colorthiscell & $0$ \colorthiscell & $0^\mathsf{U}$ \colorthiscell & $0.000$ & $0.022$ & $0.950$ \\
	\hline
	$32$ & $1$ & $2$ &     &     &     &     &     &     &     &     & $0$ & $6$ & $17^\mathsf{L}$ \colorthiscell & $24$ \colorthiscell & $3^\mathsf{U}$ \colorthiscell &     &     &     &     &     &     &     & $0.000$ & $0.023$ & $0.979$ \\
	    &     & $4$ &     &     &     &     &     &     &     &     & $0$ & $0$ & $13^\mathsf{L}$ \colorthiscell & $34$ \colorthiscell & $3^\mathsf{U}$ \colorthiscell &     &     &     &     &     &     &     & $0.001$ & $0.029$ & $0.974$ \\
	    &     & $6$ &     &     &     &     &     &     &     &     &     & $0$ & $0$ & $47^\mathsf{L}$ \colorthiscell & $3^\mathsf{U}$ \colorthiscell &     &     &     &     &     &     &     & $0.001$ & $0.033$ & $0.972$ \\
	\hline
	    & $2$ & $2$ &     &     &     &     &     &     & $0$ & $0$ & $21^\mathsf{L}$ \colorthiscell & $20$ \colorthiscell & $9$ \colorthiscell & $0$ \colorthiscell & $0^\mathsf{U}$ \colorthiscell &     &     &     &     &     &     &     & $0.000$ & $0.024$ & $0.966$ \\
	    &     & $4$ &     &     &     &     &     &     &     & $0$ & $0$ & $21^\mathsf{L}$ \colorthiscell & $28$ \colorthiscell & $1$ \colorthiscell & $0^\mathsf{U}$ \colorthiscell &     &     &     &     &     &     &     & $0.001$ & $0.035$ & $0.960$ \\
	\hline
	    & $3$ & $2$ &     &     &     &     & $0$ & $0$ & $2^\mathsf{L}$ \colorthiscell & $28$ \colorthiscell & $16$ \colorthiscell & $3$ \colorthiscell & $1$ \colorthiscell & $0$ \colorthiscell & $0^\mathsf{U}$ \colorthiscell &     &     &     &     &     &     &     & $0.000$ & $0.018$ & $0.965$ \\
	\hline
	$44$ & $1$ & $2$ &     &     & $0$ & $0$ & $13^\mathsf{L}$ \colorthiscell & $29$ \colorthiscell & $8^\mathsf{U}$ \colorthiscell &     &     &     &     &     &     &     &     &     &     &     &     &     & $0.000$ & $0.019$ & $0.977$ \\
	    &     & $4$ &     &     &     & $0$ & $0$ & $42^\mathsf{L}$ \colorthiscell & $8^\mathsf{U}$ \colorthiscell &     &     &     &     &     &     &     &     &     &     &     &     &     & $0.001$ & $0.024$ & $0.973$ \\
	    &     & $6$ &     &     &     & $0$ & $0$ & $42^\mathsf{L}$ \colorthiscell & $8^\mathsf{U}$ \colorthiscell &     &     &     &     &     &     &     &     &     &     &     &     &     & $0.001$ & $0.023$ & $0.973$ \\
	\hline
	    & $2$ & $2$ &     & $0$ & $0$ & $14^\mathsf{L}$ \colorthiscell & $32$ \colorthiscell & $4$ \colorthiscell & $0^\mathsf{U}$ \colorthiscell &     &     &     &     &     &     &     &     &     &     &     &     &     & $0.001$ & $0.054$ & $0.957$ \\
	    &     & $4$ &     &     & $0$ & $0$ & $32^\mathsf{L}$ \colorthiscell & $18$ \colorthiscell & $0^\mathsf{U}$ \colorthiscell &     &     &     &     &     &     &     &     &     &     &     &     &     & $0.001$ & $0.043$ & $0.964$ \\
	\hline
	    & $3$ & $2$ & $0$ & $0$ & $6^\mathsf{L}$ \colorthiscell & $38$ \colorthiscell & $6$ \colorthiscell & $0$ \colorthiscell & $0^\mathsf{U}$ \colorthiscell &     &     &     &     &     &     &     &     &     &     &     &     &     & $0.001$ & $0.030$ & $0.948$ \\
\end{tabular}
\end{center}
\end{table*}

\begin{table*}
\caption{Table for $q=101$ and $n=100$. See Section~\ref{ssec:decoding_numerical_results} for the description.}
\label{tab:q=101}
\setlength{\tabcolsep}{1.7pt}	
\begin{tabular}{c|c|c||c|c|c|c|c|c|c|c|c|c|c|c|c|c|c|c|c|c|c|c|c|c|c|c|c|c|c|c|c||c|c|c}
\multicolumn{3}{c||}{Parameters} & \multicolumn{29}{c||}{Number of codes (out of $50$) that have $\tau_\mathsf{max}=$} & \multicolumn{3}{c}{Observed Failure Rates} \\
\,$k$\, & \,$\ell$\, & $\zeta$ & $7$ & $8$ & $9$ & $10$ & $11$ & $12$ & $13$ & $14$ & $15$ & $16$ & $17$ & $18$ & $19$ & $20$ & $21$ & $22$ & $23$ & $24$ & $25$ & $26$ & $27$ & $28$ & $29$ & $30$ & $31$ & $32$ & $33$ & $34$ & $35$ & $\PfTwoBelow$ & $\PfOneBelow$ & $\PfAbove$ \\
\hline \hline
$30$ & $1$ & $2$ &     &     &     &     &     &     &     &     &     &     &     &     &     &     &     &     &     &     &     &     & $0$ & $0$ & $9^\mathsf{L}$ \colorthiscell & $14$ \colorthiscell & $10$ \colorthiscell & $6$ \colorthiscell & $9$ \colorthiscell & $2$ \colorthiscell & $0^\mathsf{U}$ \colorthiscell & $0.000$ & $0.015$ & $0.983$ \\
     &     & $4$ &     &     &     &     &     &     &     &     &     &     &     &     &     &     &     &     &     &     &     &     &     &     & $0$ & $0$ & $13^\mathsf{L}$ \colorthiscell & $20$ \colorthiscell & $13$ \colorthiscell & $4$ \colorthiscell & $0^\mathsf{U}$ \colorthiscell & $0.000$ & $0.012$ & $0.986$ \\
     &     & $6$ &     &     &     &     &     &     &     &     &     &     &     &     &     &     &     &     &     &     &     &     &     &     &     & $0$ & $0$ & $23^\mathsf{L}$ \colorthiscell & $22$ \colorthiscell & $5$ \colorthiscell & $0^\mathsf{U}$ \colorthiscell & $0.000$ & $0.012$ & $0.983$ \\
\hline
     & $2$ & $2$ &     &     &     &     &     &     &     &     &     &     &     &     &     &     &     &     & $0$ & $0$ & $1^\mathsf{L}$ \colorthiscell & $7$ \colorthiscell & $18$ \colorthiscell & $12$ \colorthiscell & $5$ \colorthiscell & $3$ \colorthiscell & $2$ \colorthiscell & $1$ \colorthiscell & $1$ \colorthiscell & $0$ \colorthiscell & $0^\mathsf{U}$ \colorthiscell & $0.000$ & $0.015$ & $0.972$ \\
     &     & $4$ &     &     &     &     &     &     &     &     &     &     &     &     &     &     &     &     &     &     &     & $0$ & $0$ & $2^\mathsf{L}$ \colorthiscell & $18$ \colorthiscell & $19$ \colorthiscell & $5$ \colorthiscell & $4$ \colorthiscell & $2$ \colorthiscell & $0$ \colorthiscell & $0^\mathsf{U}$ \colorthiscell & $0.000$ & $0.000$ & $0.983$ \\
\hline
     & $3$ & $2$ &     &     &     &     &     &     &     &     &     &     &     &     &     & $0$ & $0$ & $0^\mathsf{L}$ \colorthiscell & $11$ \colorthiscell & $9$ \colorthiscell & $12$ \colorthiscell & $5$ \colorthiscell & $4$ \colorthiscell & $6$ \colorthiscell & $2$ \colorthiscell & $0$ \colorthiscell & $1$ \colorthiscell & $0$ \colorthiscell & $0$ \colorthiscell & $0$ \colorthiscell & $0^\mathsf{U}$ \colorthiscell & $0.000$ & $0.023$ & $0.966$ \\
\hline
 $50$ & $1$ & $2$ &     &     &     &     &     &     &     &     &     &     &     &     & $0$ & $0$ & $10^\mathsf{L}$ \colorthiscell & $15$ \colorthiscell & $18$ \colorthiscell & $7$ \colorthiscell & $0^\mathsf{U}$ \colorthiscell &     &     &     &     &     &     &     &     &     &     & $0.000$ & $0.016$ & $0.987$ \\
     &     & $4$ &     &     &     &     &     &     &     &     &     &     &     &     &     & $0$ & $0$ & $10^\mathsf{L}$ \colorthiscell & $25$ \colorthiscell & $15$ \colorthiscell & $0^\mathsf{U}$ \colorthiscell &     &     &     &     &     &     &     &     &     &     & $0.000$ & $0.012$ & $0.985$ \\
     &     & $6$ &     &     &     &     &     &     &     &     &     &     &     &     &     & $0$ & $0$ & $3^\mathsf{L}$ \colorthiscell & $28$ \colorthiscell & $19$ \colorthiscell & $0^\mathsf{U}$ \colorthiscell &     &     &     &     &     &     &     &     &     &     & $0.000$ & $0.015$ & $0.985$ \\
\hline
     & $2$ & $2$ &     &     &     &     &     &     &     &     &     & $0$ & $0$ & $6^\mathsf{L}$ \colorthiscell & $22$ \colorthiscell & $9$ \colorthiscell & $10$ \colorthiscell & $1$ \colorthiscell & $2$ \colorthiscell & $0$ \colorthiscell & $0^\mathsf{U}$ \colorthiscell &     &     &     &     &     &     &     &     &     &     & $0.000$ & $0.014$ & $0.973$ \\
     &     & $4$ &     &     &     &     &     &     &     &     &     &     &     & $0$ & $0$ & $13^\mathsf{L}$ \colorthiscell & $22$ \colorthiscell & $12$ \colorthiscell & $3$ \colorthiscell & $0$ \colorthiscell & $0^\mathsf{U}$ \colorthiscell &     &     &     &     &     &     &     &     &     &     & $0.000$ & $0.012$ & $0.981$ \\
\hline
     & $3$ & $2$ &     &     &     &     &     &     &     & $0$ & $0$ & $7^\mathsf{L}$ \colorthiscell & $18$ \colorthiscell & $12$ \colorthiscell & $8$ \colorthiscell & $4$ \colorthiscell & $1$ \colorthiscell & $0$ \colorthiscell & $0$ \colorthiscell & $0$ \colorthiscell & $0^\mathsf{U}$ \colorthiscell &     &     &     &     &     &     &     &     &     &     & $0.000$ & $0.023$ & $0.977$ \\
\hline
 $70$ & $1$ & $2$ &     &     &     & $0$ & $0$ & $23^\mathsf{L}$ \colorthiscell & $18$ \colorthiscell & $9$ \colorthiscell & $0^\mathsf{U}$ \colorthiscell &     &     &     &     &     &     &     &     &     &     &     &     &     &     &     &     &     &     &     &     & $0.000$ & $0.016$ & $0.983$ \\
     &     & $4$ &     &     &     &     & $0$ & $0$ & $35^\mathsf{L}$ \colorthiscell & $15$ \colorthiscell & $0^\mathsf{U}$ \colorthiscell &     &     &     &     &     &     &     &     &     &     &     &     &     &     &     &     &     &     &     &     & $0.000$ & $0.012$ & $0.984$ \\
     &     & $6$ &     &     &     &     & $0$ & $0$ & $30^\mathsf{L}$ \colorthiscell & $20$ \colorthiscell & $0^\mathsf{U}$ \colorthiscell &     &     &     &     &     &     &     &     &     &     &     &     &     &     &     &     &     &     &     &     & $0.000$ & $0.015$ & $0.986$ \\
\hline
     & $2$ & $2$ &     & $0$ & $0$ & $5^\mathsf{L}$ \colorthiscell & $24$ \colorthiscell & $15$ \colorthiscell & $6$ \colorthiscell & $0$ \colorthiscell & $0^\mathsf{U}$ \colorthiscell &     &     &     &     &     &     &     &     &     &     &     &     &     &     &     &     &     &     &     &     & $0.000$ & $0.021$ & $0.974$ \\
     &     & $4$ &     &     & $0$ & $0$ & $0^\mathsf{L}$ \colorthiscell & $35$ \colorthiscell & $15$ \colorthiscell & $0$ \colorthiscell & $0^\mathsf{U}$ \colorthiscell &     &     &     &     &     &     &     &     &     &     &     &     &     &     &     &     &     &     &     &     & $0.000$ & $0.008$ & $0.984$ \\
\hline
     & $3$ & $2$ & $0$ & $0$ & $16^\mathsf{L}$ \colorthiscell & $29$ \colorthiscell & $4$ \colorthiscell & $1$ \colorthiscell & $0$ \colorthiscell & $0$ \colorthiscell & $0^\mathsf{U}$ \colorthiscell &     &     &     &     &     &     &     &     &     &     &     &     &     &     &     &     &     &     &     &     & $0.000$ & $0.013$ & $0.965$ \\
\end{tabular}
\end{table*}

\appendices

\section{Efficient Decoding}\label{app:efficiently_decoding}

In this appendix, we show how to implement the decoder in Section~\ref{sec:decoding} more efficiently than solving a linear system. The bottleneck of \cref{alg:decoder} is Line~\ref{line:solve_linearized_problem}, which solves \cref{prob:pade_approx}. We first show how to solve it fast using row reduction.

The following theorem shows how a minimal solution of \cref{prob:pade_approx} can be found and in which complexity.
We need the following well-known notation.
For a vector $\m \in \Fq[X]^r$, and a \emph{shift vector} $\s \in \ZZ^r$, we define its \emph{$\s$-shifted degree} as $\deg_\s \m := \max_j\{\deg m_j+ s_j\}$ and the \emph{$\s$-pivot of $\m$} to be the right-most index $i$ such that $\deg m_i+ s_i = \deg_\s \m$.
A matrix $\M \in \Fq[X]^{r \times r}$ is in \emph{$\s$-shifted weak Popov form} if all its rows have distinct $\s$-pivots.
It is well-known that a matrix in $\s$-shifted weak Popov form is $\s$-row reduced, i.e., for all $i$, its row with $\s$-pivot $i$ has minimal $\s$-shifted degree among all non-zero elements in the matrix' row space of $\s$-pivot $i$.
Furthermore, any full-rank square matrix $\ModuleBasis \in \Fq[X]^{r \times r}$ can be transformed (by preserving its row space) into $\s$-shifted weak Popov form using the Las-Vegas algorithm in \cite{neiger2016fast} with complexity $O^\sim(r^\omega \deg \ModuleBasis)$ operations in $\Fq$,
where $\omega$ is the matrix multiplication exponent and $\deg \ModuleBasis$ denotes the maximal degree of $\ModuleBasis$.
Note that the algorithm in \cite{neiger2016fast} ouputs an \emph{$\s$-shifted Popov form}, which is in particular in $\s$-shifted weak Popov form.

\begin{theorem}\label{thm:decoder_computation_and_complexity}
	Consider an instance of \cref{prob:pade_approx}.
	Let
	\begin{align*}
	\ModuleBasis := \begin{bmatrix}
	\I_{|\Iset_{\decParam+1}| \times |\Iset_{\decParam+1}|} & \A \\
	& G \cdot \I_{|\Iset_\decParam| \times |\Iset_\decParam|}
	\end{bmatrix}, %
	\end{align*}
	where $\A \in \Fq[X]^{|\Iset_{\decParam+1}| \times |\Iset_{\decParam}|}$ is a matrix whose $(\i,\j)$-th entry for $\i \in \Iset_{\decParam+1}$ and $\j \in \Iset_{\decParam}$ (fix orders $\i_i$ and $\j_j$ of the elements in $\Iset_{\decParam+1}$ and $\Iset_{\decParam}$, respectively, both starting with $\0$) is
	\begin{align*}
	\A_{\i,\j} := \begin{cases}
	R, &\text{if } \i = \j, \\
	-\eta_\mu X^{k-1+t_\mu}, &\text{if } \i = \j+\deltamu \text{ for some $\mu$}, \\
	0, &\text{else}.
	\end{cases}
	\end{align*}
	Furthermore, let $\s \in \ZZ^{|\Iset_{\decParam+1}|+|\Iset_{\decParam}|}$ be such that
	\begin{align*}
	s_i = \begin{cases}
	k, &\text{if } i=1, \\
	k-1, &\text{if } 1 < i \leq |\Iset_{\decParam+1}|, \\
	0, &\text{if } |\Iset_{\decParam}| < i \leq |\Iset_{\decParam+1}|+|\Iset_{\decParam}|.
	\end{cases}
	\end{align*}
	
	Let $\ModuleBasis'$ be a basis in $\s$-shifted weak Popov form of the module spanned by the rows of $\ModuleBasis$, and let $\m$ be the (unique) row of $\ModuleBasis'$ with $\s$-pivot $1$.
	Then,
	\begin{align*}
	[\lambda_{\i_1}, \lambda_{\i_2}, \dots, \lambda_{\i_{|\Iset_{\decParam+1}|}}, \psi_{\j_1},\dots,\psi_{\j_{|\Iset_{\decParam}|}}] := \m
	\end{align*}
	is a solution of \cref{prob:pade_approx} of minimal $\deg \lambda_\0$.
	
	The matrix $\ModuleBasis'$ can be computed using the Las-Vegas algorithm in \cite{neiger2016fast} in
	\begin{align*}
	O^\sim \!\left( \big(|\Iset_{\decParam+1}|+|\Iset_{\decParam}|\big)^\omega n \right) \subseteq O^\sim\!\left( \left(e\tfrac{\ell+\zeta+1}{\ell}\right)^{\ell \omega} n\right)
	\end{align*}
	operations over $\Fq$, where $e$ is Euler's constant.
\end{theorem}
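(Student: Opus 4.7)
The plan is to identify the $\Fq[X]$-module spanned by the rows of $\ModuleBasis$ with the module of \emph{all} tuples satisfying the congruences \eqref{eq:pade_congruence}, show that the shift $\s$ precisely encodes the degree bounds \eqref{eq:pade_deg_lambda}--\eqref{eq:pade_deg_psi} via the $\s$-pivot, and finally invoke the cited fast row reduction algorithm. For the module identification, an arbitrary $\Fq[X]$-combination of the rows of $\ModuleBasis$ equals a tuple $(\lambda_\i,\psi_\j)$ in which the $\lambda_\i$ are unconstrained polynomials and $\psi_\j = \sum_\i \lambda_\i A_{\i,\j} + g_\j G$ for arbitrary $g_\j \in \Fq[X]$. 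Reading off the nonzero entries of $\A$ gives $\sum_\i \lambda_\i A_{\i,\j} = \lambda_\j R - \sum_\mu \lambda_{\j+\deltamu}\eta_\mu X^{k-1+t_\mu}$, so the row space equals exactly the solution module of \eqref{eq:pade_congruence}. Since $\ModuleBasis$ is block upper triangular with invertible diagonal blocks ($G \neq 0$), it has full rank $N := |\Iset_{\zeta+1}|+|\Iset_\zeta|$.

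For the degree translation, by definition of $\s$ the shifted degree of a tuple is
\begin{equation*}
\deg_\s(\lambda,\psi) = \max\Bigl(\deg\lambda_\0+k,\ \max_{\i\neq\0}\deg\lambda_\i + k-1,\ \max_\j \deg\psi_\j\Bigr).
\end{equation*}
Position $1$ is the leftmost and has a strictly larger shift than any other index, so its being the $\s$-pivot is equivalent to its strict attainment of the maximum above, which unwinds exactly to $\deg\lambda_\i\le\deg\lambda_\0$ for all $\i$ and $\deg\psi_\j\le\deg\lambda_\0+k-1$ for all $\j$, i.e.\ conditions \eqref{eq:pade_deg_lambda}--\eqref{eq:pade_deg_psi}. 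Hence the set of valid solutions of \cref{prob:pade_approx} coincides with the set of module elements of $\s$-pivot $1$, and minimizing $\deg\lambda_\0$ is equivalent to minimizing $\deg_\s$ over this set. In any $\s$-shifted weak Popov basis $\ModuleBasis'$, there is at most one row of $\s$-pivot $1$, and when it exists it attains the minimum $\s$-shifted degree among all nonzero module elements of pivot $1$ — this is the defining property of $\s$-row reducedness. Such a row always exists since, for example, the tuple with $\lambda_\0=1$, other $\lambda_\i=0$, $\psi_\0 = R\bmod G$, other $\psi_\j=0$ lies in the module and has $\s$-pivot $1$. This proves the first assertion.

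For complexity, the algorithm of \cite{neiger2016fast} computes an $\s$-shifted Popov form (hence weak Popov form) of a full-rank $N\times N$ matrix of degree $\Delta$ using $O^\sim(N^\omega \Delta)$ operations in $\Fq$. The entries of $\ModuleBasis$ have degree at most $\max(\deg R,\deg G,k-1+t_\mu)=n$, so the cost is $O^\sim(N^\omega n)$. Using the standard inequality $\binom{a}{b}\le (ea/b)^b$ together with \eqref{eq:size_Iset}, we bound $|\Iset_{\zeta+1}| = \binom{\ell+\zeta+1}{\ell} \le (e(\ell+\zeta+1)/\ell)^{\ell}$, and $|\Iset_\zeta|\le|\Iset_{\zeta+1}|$, so $N^\omega = O(\,(e(\ell+\zeta+1)/\ell)^{\ell\omega}\,)$, absorbing the $2^\omega$ factor into $O^\sim$. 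The most delicate step is the strict-versus-non-strict bookkeeping in the pivot translation: the shift gap $s_1 - s_i \ge 1$ is exactly what converts the strict $\s$-pivot condition at position $1$ into non-strict raw-degree inequalities matching \eqref{eq:pade_deg_lambda}--\eqref{eq:pade_deg_psi}, and one must verify this correspondence carefully for both the $\lambda$-block (shift gap $1$) and the $\psi$-block (shift gap $k$).
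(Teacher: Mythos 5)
Your proof follows the same route as the paper's: identify the row space of $\ModuleBasis$ with the solution module of the congruences \eqref{eq:pade_congruence}, translate the $\s$-pivot-$1$ condition into the degree bounds \eqref{eq:pade_deg_lambda}--\eqref{eq:pade_deg_psi}, and invoke the cited row-reduction algorithm; the module identification, the pivot/degree bookkeeping (including the shift gaps of $1$ and $k$ for the two blocks) and the binomial estimate are all correct. The one flaw is your existence witness: the tuple with $\lambda_\0 = 1$ and $\psi_\0 = R \bmod G$ does \emph{not} in general have $\s$-pivot $1$. Since $\deg G = n$ and $\deg R \leq n-1$, we have $R \bmod G = R$, whose degree is typically at least $k$; the shifted degree at the $\psi_\0$ position is then $\deg R + 0 \geq k$, which meets or exceeds the shifted degree $k$ at position $1$, and the pivot --- being the \emph{rightmost} maximizing index --- lands in the $\psi$ block. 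This does not sink the argument: the existence and uniqueness of a pivot-$1$ row of $\ModuleBasis'$ follow from the fact you already established, namely that $\ModuleBasis$ has full rank $N := |\Iset_{\decParam+1}|+|\Iset_{\decParam}|$, together with the definition of weak Popov form, since the $N$ rows of $\ModuleBasis'$ then have $N$ distinct pivots in $\{1,\dots,N\}$ and hence exactly one row has pivot $1$. Replace the explicit witness by this standard observation and the proof is complete.
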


\begin{proof}
	We first show that the rows of $\ModuleBasis$ form a basis of the module $\Module$ of vectors
	\begin{align*}
	\v := [\lambda_{\i_1}, \lambda_{\i_2}, \dots, \lambda_{\i_{|\Iset_{\decParam+1}|}}, \psi_{\j_1},\dots,\psi_{\j_{|\Iset_{\decParam}|}}] \in \Fq[X]^{|\Iset_{\decParam+1}|+|\Iset_{\decParam}|}
	\end{align*}
	that satisfy the congruence relation in \eqref{eq:pade_deg_lambda}. Consider an element $\v \in \Module$. Then, there are polynomials $\chi_\i \in \Fq[X]$, for $\i \in \Iset_{\decParam}$, such that
	\begin{align*}
	\psi_\i = \lambda_\i R - \sum_{\mu=1}^{\ell} \lambda_{(\i+\deltamu)} \eta_\mu X^{k-1+t_\mu} + \chi_\i G
	\end{align*}
	for all $\i \in \Iset_{\decParam}$. By the choice of $\ModuleBasis$, we have
	\begin{align*}
	\v = [\lambda_{\i_1}, \lambda_{\i_2}, \dots, \lambda_{\i_{|\Iset_{\decParam+1}|}}, \chi_{\j_1},\dots,\chi_{\j_{|\Iset_{\decParam}|}}] \cdot \ModuleBasis.
	\end{align*}
	Hence, $\Module$ is contained in the row space of $\ModuleBasis$. 
	On the other hand, any row of $\ModuleBasis$ is in $\Module$ since:
	\begin{itemize}
		\item Index the first $|\Iset_{\decParam+1}|$ rows of $\ModuleBasis$ by $\i$ and the $|\Iset_{\decParam}|$ congruence relations in \eqref{eq:pade_deg_lambda} by $\j$. Then row $\i$ satisfies relation $\j$ since
		\begin{align*}
		R &= R + \sum_{\mu=1}^{\ell} 0 \cdot \eta_\mu X^{k-1+t_\mu}, &&\text{if } \j = \i, \\
		0  &= -\eta_\mu X^{k-1+t_\mu} + \eta_\mu X^{k-1+t_\mu}, &&\text{if } \j = \i-\deltamu, \\
		0  &= 0 + \sum_{\mu=1}^{\ell} 0 \cdot \eta_\mu X^{k-1+t_\mu}, &&\text{else}.
		\end{align*}
		\item The last $|\Iset_{\decParam}|$ rows of $\ModuleBasis$ satisfy \eqref{eq:pade_deg_lambda} since
		\begin{align*}
		0 &\equiv G + \sum_{\mu=1}^{\ell} 0 \cdot \eta_\mu X^{k-1+t_\mu} \mod G.
		\end{align*}
	\end{itemize}
	Also, the rows of $\ModuleBasis$ are linearly independent since the matrix is in upper triangular form with non-zero diagonal entries.
	
	Since the row $\m$ of $\ModuleBasis'$ has $\s$-pivot $1$, the degree inequalities \eqref{eq:pade_deg_lambda} and \eqref{eq:pade_deg_psi} are fulfilled. This is true since, by the definition of the $\s$-pivot and the choice of the shift, we have
	\begin{align*}
	\deg \lambda_{\0} + k_1 &= \deg \lambda_{\i_1} + k > \deg \lambda_{\i_i} + k-1 \\
	\Leftrightarrow \deg \lambda_{\i_i} &\leq \deg \lambda_{\0}
	\end{align*}
	for all $i=1,\dots,|\Iset_{\decParam+1}|$, and
	\begin{align*}
	\deg \lambda_{\0} + k_1 &= \deg \lambda_{\i_1} + k > \deg \lambda_{\j_j}  \\
	\Leftrightarrow \deg \lambda_{\j_j} &\leq \deg \lambda_{\0} +k-1
	\end{align*}
	for all $j=1,\dots,|\Iset_{\decParam}|$. Hence, $\m$ is a solution of Problem~\ref{prob:pade_approx}.
	Moreover, it is also one of minimal degree since $\ModuleBasis'$ is $\s$-row reduced, i.e., $\m$ has minimal $\s$-shifted degree among all non-zero vectors in the row space of $\ModuleBasis'$ with $\s$-pivot $1$.
	
	As for the complexity, the matrix $\ModuleBasis$ has $|\Iset_{\decParam+1}|+|\Iset_{\decParam}|$ rows and columns, and maximal degree at most $n$. The complexity follows by the algorithm in \cite{giorgi2003complexity}, see complexity expression above the theorem.
\end{proof}

The other operations of \cref{alg:decoder} are standard polynomial operations, which all have complexity $O^\sim(n)$ operations in $\Fq$, see, e.g., \cite{von2013modern}: 
\begin{itemize}
	\item $R$ in \cref{line:R_and_G} is obtained via Lagrange interpolation with $n$ point tuples,
	\item $G$ in \cref{line:R_and_G} can be computed via a subproduct tree,
	\item Lines~\ref{line:divisibility_check} and \ref{line:division} can be implemented by a division with remainder,
	\item and \cref{line:evaluation} is a multi-point evaluation of a polynomial of degree $<n$ at $n$ points.
\end{itemize}
Hence, the bottleneck is Line~\ref{line:solve_linearized_problem}, and Algorithm~\ref{alg:decoder} can be implemented with complexity
\begin{align*}
O^\sim\!\left( \left(e\tfrac{\ell+\zeta+1}{\ell}\right)^{\ell \omega} n\right)
\end{align*}
operations in $\Fq$, where $e$ is Euler's number.

\begin{IEEEbiographynophoto}{Peter Beelen}
received his Master’s degree in Mathematics from the University of Utrecht, The Netherlands, in 1996. In 2001 he received his Ph.D. degree in Mathematics from the Technical University of Eindhoven, The Netherlands. Since October 2004 he has been a staff member of the Technical University of Denmark (DTU), Kongens Lyngby, Denmark. He has been an assistant professor at DTU till January 2007 and an associate professor till August 2014. Since September 2014 he has worked at DTU as professor. His research interests include various aspects of algebra and its applications, notably algebraic curves and algebraic coding theory.
\end{IEEEbiographynophoto}

\begin{IEEEbiographynophoto}{Sven Puchinger}
(S'14, M'19) received the B.Sc. degree in electrical engineering and the B.Sc. degree in mathematics from Ulm University, Germany, in 2012 and 2016, respectively. During his studies, he spent two semesters at the University of Toronto, Canada. He received his Ph.D. degree from the Institute of Communications Engineering, Ulm University, Germany, in 2018. He has been a postdoctoral researcher with the Technical University of Munich, Germany, from 2018 to 2019, at the Technical University of Denmark, Denmark, from 2019 to 2021, and again at the Technical University of Munich in 2021. Since 2021, he is with Hensoldt Sensors GmbH, Ulm, Germany. His research interests are coding theory, its applications, and related computer-algebra methods.
\end{IEEEbiographynophoto}

\begin{IEEEbiographynophoto}{Johan Rosenkilde}
is a Research Engineer at GitHub since 2021. Before that he
was at the Technical University of Denmark, first as assistant professor, then
as associate professor. He holds a Master's degree in computer science and a PhD
in mathematics from the same university, and was a post-doc at both Ulm
University, Germany and Inria, France. His algebraic research interests include
coding theory and computer algebra.
\end{IEEEbiographynophoto}


\begin{thebibliography}{10}
	\providecommand{\url}[1]{#1}
	\csname url@samestyle\endcsname
	\providecommand{\newblock}{\relax}
	\providecommand{\bibinfo}[2]{#2}
	\providecommand{\BIBentrySTDinterwordspacing}{\spaceskip=0pt\relax}
	\providecommand{\BIBentryALTinterwordstretchfactor}{4}
	\providecommand{\BIBentryALTinterwordspacing}{\spaceskip=\fontdimen2\font plus
		\BIBentryALTinterwordstretchfactor\fontdimen3\font minus
		\fontdimen4\font\relax}
	\providecommand{\BIBforeignlanguage}[2]{{%
			\expandafter\ifx\csname l@#1\endcsname\relax
			\typeout{** WARNING: IEEEtran.bst: No hyphenation pattern has been}%
			\typeout{** loaded for the language `#1'. Using the pattern for}%
			\typeout{** the default language instead.}%
			\else
			\language=\csname l@#1\endcsname
			\fi
			#2}}
	\providecommand{\BIBdecl}{\relax}
	\BIBdecl
	
	\bibitem{beelen2017twisted}
	P.~Beelen, S.~Puchinger, and J.~{Rosenkilde n\'e Nielsen}, ``{Twisted
		Reed--Solomon Codes},'' in \emph{IEEE International Symposium on Information
		Theory}, 2017.
	
	\bibitem{beelen2018structural}
	P.~Beelen, M.~Bossert, S.~Puchinger, and J.~{Rosenkilde n\'e Nielsen},
	``{Structural Properties of Twisted Reed--Solomon Codes with Applications to
		Code-Based Cryptography},'' in \emph{IEEE International Symposium on
		Information Theory}, 2018.
	
	\bibitem{macwilliams1977theory}
	F.~J. MacWilliams and N.~J.~A. Sloane, \emph{{The Theory of Error Correcting
			Codes}}.\hskip 1em plus 0.5em minus 0.4em\relax Elsevier, 1977.
	
	\bibitem{roth1989mds}
	R.~M. Roth and A.~Lempel, ``{On MDS Codes via Cauchy Matrices},'' \emph{IEEE
		Transactions on Information Theory}, vol.~35, no.~6, pp. 1314--1319, 1989.
	
	\bibitem{sheekey2015new}
	J.~Sheekey, ``{A New Family of Linear Maximum Rank Distance Codes},''
	\emph{Advances in Mathematics of Communications}, pp. 475--488, 2016.
	
	\bibitem{liu2020new}
	H.~Liu and S.~Liu, ``{New Constructions of MDS Twisted Reed-Solomon Codes and
		LCD MDS Codes},'' \emph{arXiv preprint arXiv:2008.03708}, 2020.
	
	\bibitem{wu2021new}
	Y.~Wu, J.~Y. Hyun, and Y.~Lee, ``{New LCD MDS Codes of Non-Reed-Solomon
		Type},'' \emph{arXiv preprint arXiv:2105.08263}, 2021.
	
	\bibitem{huang2020mds}
	D.~Huang, Q.~Yue, Y.~Niu, and X.~Li, ``{MDS or NMDS Self-Dual Codes From
		Twisted Generalized Reed-Solomon Codes},'' \emph{arXiv preprint
		arXiv:2009.06298}, 2020.
	
	\bibitem{math9010040}
	A.~Allen, K.~Blackwell, O.~Fiol, R.~Kshirsagar, B.~Matsick, G.~L. Matthews, and
	Z.~Nelson, ``Twisted hermitian codes,'' \emph{Mathematics}, vol.~9, no.~1,
	p.~40, 2021.
	
	\bibitem{puchinger2018construction}
	S.~Puchinger, ``{Construction and Decoding of Evaluation Codes in Hamming and
		Rank Metric},'' Ph.D. dissertation, Universit{\"a}t Ulm, 2018.
	
	\bibitem{puchinger2017further}
	S.~Puchinger, {Rosenkilde n\'e Nielsen}, and J.~Sheekey, ``{Further
		Generalisations of Twisted Gabidulin Codes},'' in \emph{International
		Workshop on Coding and Cryptography}, 2017.
	
	\bibitem{neri2021twisted}
	A.~Neri, ``{Twisted Linearized Reed--Solomon Codes: A Skew Polynomial
		Framework},'' \emph{arXiv preprint arXiv:2105.10451}, 2021.
	
	\bibitem{edel1997twisted}
	Y.~Edel and J.~Bierbauer, ``{Twisted BCH-Codes},'' \emph{Journal of
		Combinatorial Designs}, vol.~5, no.~5, pp. 377--389, 1997.
	
	\bibitem{glynn1986non}
	D.~G. Glynn, ``{The Non-Classical 10-Arc of PG(4, 9)},'' \emph{Discrete
		Mathematics}, vol.~59, no.~1, pp. 43--51, 1986.
	
	\bibitem{roth_construction_1989}
	R.~M. Roth and A.~Lempel, ``{A Construction of Non-{Reed}-{Solomon} Type {MDS}
		Codes},'' \emph{IEEE Transactions on Information Theory}, vol.~35, no.~3, pp.
	655--657, May 1989.
	
	\bibitem{roth_t-sum_1992}
	------, ``{t-Sum Generators of Finite {Abelian} Groups},'' \emph{Discrete
		Mathematics}, vol. 103, no.~3, pp. 279--292, May 1992.
	
	\bibitem{althaus1969inverse}
	H.~Althaus and R.~Leake, ``{Inverse of a Finite-Field Vandermonde Matrix
		(Corresp.)},'' \emph{IEEE Transactions on Information Theory}, vol.~15,
	no.~1, pp. 173--173, 1969.
	
	\bibitem{couvreur2014distinguisher}
	A.~Couvreur, P.~Gaborit, V.~Gauthier-Uma{\~n}a, A.~Otmani, and J.-P. Tillich,
	``{Distinguisher-based Attacks on Public-key Cryptosystems Using
		Reed--Solomon Codes},'' \emph{Designs, Codes and Cryptography}, vol.~73,
	no.~2, pp. 641--666, 2014.
	
	\bibitem{cramer_secure_2015}
	R.~Cramer, I.~B. Damg{$\mathring{\mathrm{a}}$}rd, and J.~B. Nielsen,
	\emph{\BIBforeignlanguage{English}{{Secure {Multiparty} {Computation} and
				{Secret} {Sharing}}}}.\hskip 1em plus 0.5em minus 0.4em\relax Cambridge
	University Press, 2015.
	
	\bibitem{randriambololona2015products}
	H.~Randriambololona, ``{On Products and Powers of Linear Codes Under
		Componentwise Multiplication},'' in \emph{International Conference on
		Arithmetic, Geometry, Cryptography, and Coding Theory}, 2015.
	
	\bibitem{cascudo_squares_2015}
	I.~Cascudo, R.~Cramer, D.~Mirandola, and G.~Z{\'e}mor, ``Squares of {Random}
	{Linear} {Codes},'' \emph{IEEE Transactions on Information Theory}, vol.~61,
	no.~3, pp. 1159--1173, Mar. 2015.
	
	\bibitem{roth1985generator}
	R.~M. Roth and G.~Seroussi, ``{On Generator Matrices of MDS Codes
		(Corresp.)},'' \emph{IEEE Transactions on Information Theory}, vol.~31,
	no.~6, pp. 826--830, 1985.
	
	\bibitem{neiger2016fast}
	V.~Neiger, ``{Fast Computation of Shifted Popov Forms of Polynomial Matrices
		via Systems of Modular Polynomial Equations},'' in \emph{International
		Symposium on Symbolic and Algebraic Computation (ISSAC)}, 2016.
	
	\bibitem{giorgi2003complexity}
	P.~Giorgi, C.-P. Jeannerod, and G.~Villard, ``{On the Complexity of Polynomial
		Matrix Computations},'' in \emph{International Symposium on Symbolic and
		Algebraic Computation (ISSAC)}, 2003.
	
	\bibitem{von2013modern}
	J.~Gathen and J.~Gerhard, \emph{{Modern Computer Algebra}}.\hskip 1em plus
	0.5em minus 0.4em\relax Cambridge University Press, 1999.
	
\end{thebibliography}
\end{document}